\documentclass[amssymb,aps,twocolumn,showpacs,nofootinbib, superscriptaddress]{revtex4}
\usepackage{latexsym,amsmath,amssymb,amsfonts,mathbbol,graphicx,color,amsthm,hyperref,enumerate}

\newcommand{\cA}{\mathcal{A}}
\newcommand{\cB}{\mathcal{B}}
\newcommand{\cC}{\mathcal{C}}
\newcommand{\cD}{\mathcal{D}}
\newcommand{\cE}{\mathcal{E}}
\newcommand{\cF}{\mathcal{F}}
\newcommand{\cG}{\mathcal{G}}
\newcommand{\cH}{\mathcal{H}}
\newcommand{\cI}{\mathcal{I}}

\newcommand{\cK}{\mathcal{K}}
\newcommand{\cL}{\mathcal{L}}
\newcommand{\cM}{\mathcal{M}}
\newcommand{\cN}{\mathcal{N}}

\newcommand{\cP}{\mathcal{P}}
\newcommand{\cQ}{\mathcal{Q}}
\newcommand{\cR}{\mathcal{R}}
\newcommand{\cS}{\mathcal{S}}
\newcommand{\cT}{\mathcal{T}}
\newcommand{\cU}{\mathcal{U}}

\newcommand{\ket}[1]{\left| #1 \right\rangle}

\newcommand{\proj}[1]{| #1\rangle\!\langle #1 |}
\newcommand{\Tr}{\mathrm{Tr}}

\newcommand{\hf}{\frac{1}{2}}

\def\Id{1\!\mathrm{l}}
\newcommand{\tr}{\text{tr}}
\newcommand{\supp}{\text{supp}}

\newcommand{\RP}{\hat\cE_\cP}
\newcommand{\Dp}{\Delta_+}
\newcommand{\Dm}{\Delta_-}
\newcommand{\Dpm}{\Delta_\pm}

\newcommand{\Einf}{\cE_\text{inf}}
\newcommand{\EQ}{\cE_R}
\newcommand{\Ne}{{N_\epsilon}}
\newcommand{\Ge}{\cG_\epsilon}
\newcommand{\EP}{\cE_{\cP_0}}
\def\cEP{{\hat\cE}_\cP}
\def\bC{\mathbb{C}}
\def\bR{\mathbb{R}}
\newcommand{\FixE}{\mathrm{Fix}(\cE)}
\newcommand{\RotE}{\mathrm{Rot}(\cE)}
\newcommand{\FixEdag}{\mathrm{Fix}(\cE^\dag)}

\newtheorem{theorem}{Theorem}
\newtheorem{lemma}[theorem]{Lemma}
\newtheorem{corollary}[theorem]{Corollary}

\newtheorem{definition}{Definition}
\newtheorem{lemma1}{Lemma}[theorem]

\newtheorem{example}{Example}
\newtheorem{proposition}{Proposition}
\newtheorem{principle}{Principle}
\newenvironment{numtheorem}[2][Theorem]
{\begin{trivlist}\item[\hskip \labelsep {\bf #1}\hskip \labelsep {\bf #2. }]\it}{\end{trivlist}}
\newenvironment{numlemma}[2][Lemma]
{\begin{trivlist}\item[\hskip \labelsep {\bf #1}\hskip \labelsep {\bf #2. }]\it}{\end{trivlist}}
\newenvironment{numcorollary}[2][Corollary]
{\begin{trivlist}\item[\hskip \labelsep {\bf #1}\hskip \labelsep {\bf #2. }]\it}{\end{trivlist}}


\def\THMPRESCORR{A [convex] code $\cC$ is correctable for $\cE$ if and
only if it is preserved 
by $\cE$.}

\def\THMPRESFIXED{Every maximum preserved code for a CPTP map $\cE$ is
$1$-isometric to the full set of fixed states for some other CPTP map
$\cR\circ\cE$.}

\def\THMFIXEDPTTHM{Let $\cE$ be a CPTP map on $\cB(\cH)$, and
  $\cE^\dagger$ its adjoint.  Let $\FixE$ be the fixed points of $\cE$, and
  $\FixEdag$ the fixed points of $\cE^\dagger$.  Then,
\begin{enumerate}[(i)]
\item Let $\cP_0\subseteq \cH$ be the support of $\FixE$.  Then
$\cP_0$ is an invariant subspace under $\cE$.
\item Let $\EP$ be the restriction of $\cE$ to $\cP_0$, so $\EP\equiv
\Pi_0\circ\cE\circ\Pi_0$, where $\Pi_0$ projects onto $\cP_0$.  Then
the fixed points of $\EP^\dagger$ form a matrix algebra $\cA$.
\item $\FixE$ is a distortion of $\cA$.
\item $\FixEdag$ is a 1:1 extension of $\cA$ from $\cP_0$ to $\cH$.
That is, for each $X\in\cA$, there exists precisely one
$X'\in\FixEdag$ so that $X=\Pi(X')=P_0 X' P_0$.
\end{enumerate}}

\def\LEMNSFIXED{Every noiseless code $\cC$ for $\cE$ is isometric to a
set of states that are fixed points of $\cE$.}  

\def\CORNSMAX{Every maximum noiseless code for a channel $\cE$ is
isometric to the full fixed-point set of $\cE$.}

\def\LEMNSSTRUCT{Let $\cE:\cB(\cH)\to\cB(\cH)$ be a CP map with a
full-rank fixed point, whose fixed points induce (see Theorem
\ref{thm:FixedPtThm}) the decomposition
\begin{equation*}
\cH=\bigoplus_k({A_k}\otimes {B_k}).
\end{equation*}
Then $\cC$ is a [convex] maximum noiseless code for $\cE$ if and only
if $\cC$ comprises all states of the following form
\begin{equation}
\rho = \sum_k{p_k\rho_{A_k}\otimes \tau_{k}},
\end{equation}
where the $\rho_{A_k}$ are arbitrary states on ${A_k}$ and each
$\tau_{k}$ is a fixed ({\em i.e.}, the same for all $\rho$) state on
${B_k}$.}

\def\LEMFINDINGIPSISHARD{The problem of finding the largest preserved
IPS for an arbitrary channel $\cE:\cB(\cH_d)\to\cB(\cH_{d^2})$ that
maps a $d$-dimensional system to a $d^2$-dimensional system is at
least as hard as the NP-complete problem \textbf{MAX-CLIQUE}.}

\def\LEMUNROT{If $\cC$ is a maximum unitarily noiseless code for a CP
map $\cE$, then $\cC$ is isometric to the set of all (positive
trace-1) states in the span of the rotating points of $\cE$.  In other
words, there exists a map $\Einf$ such that $\Vert
p~\Einf(\rho)-(1-p)\Einf(\sigma)\Vert_1=\Vert
p\rho-(1-p)\sigma\Vert_1$ for any $\rho,\sigma\in\cC$, $p\in[0,1]$,
and $\Einf(\rho)$ and $\Einf(\sigma)$ are in the span of the rotating
points of $\cE$.}


\begin{document}

\title{Information preserving structures: \\ A general framework for
quantum zero-error information}

\author{Robin Blume-Kohout} \email{robin@blumekohout.com}
\affiliation{Perimeter Institute for Theoretical Physics, 31 Caroline
Street North, Waterloo, ON N2L2Y5, Canada}

\author{Hui Khoon Ng} \email{nghk@theory.caltech.edu}
\affiliation{Institute for Quantum Information, California Institute
of Technology, Pasadena, CA 91125, USA}

\author{David Poulin} \email{david.poulin@usherbrooke.ca}
\affiliation{D\'epartment de Physique, Universit\'e de Sherbrooke,
Qu\'ebec J1K 2R1, Canada}

\author{Lorenza Viola} \email{lorenza.viola@dartmouth.edu}
\affiliation{\mbox{Department of Physics and Astronomy, Dartmouth College,
6127 Wilder Laboratory, Hanover, NH 03755, USA}}

\date{\today}

\begin{abstract}

Quantum systems carry information.  Quantum theory supports at least
two distinct kinds of information (classical and quantum), and a
variety of different ways to encode and preserve information in
physical systems.  A system's ability to carry information is
constrained and defined by the noise in its dynamics.  This paper
introduces an operational framework, using \emph{information-preserving
structures} to classify all the kinds of information that can be
perfectly (i.e., with zero error) preserved by quantum dynamics.  We
prove that every perfectly preserved code has the same structure as a matrix
algebra, and that preserved information can always be corrected.  We also
classify distinct operational criteria for preservation ({\em e.g.},
``noiseless'', ``unitarily correctible'', {\em etc.}) and introduce
two new and natural criteria for \emph{measurement-stabilized} and
\emph{unconditionally preserved} codes.  Finally, for several of these
operational critera, we present efficient [polynomial in
the state-space dimension] algorithms to find all of a channel's
information-preserving structures.


\end{abstract}

\pacs{03.67.Pp, 03.67.Lx, 03.65Yz, 89.70.+c}

\maketitle

\section{Introduction}
\label{sec:Intro}

Physical systems can be used to store, transmit, and transform
information.  Different systems can carry different kinds of
information; classical systems carry classical information, while
quantum mechanical systems can carry quantum information.  The
system's dynamics also affect the kind of information that it
carries.  For example, decoherence \cite{Zurek} can restrict a quantum
system to carry only classical information (or none at all).  This
suggests that perhaps a quantum system's dynamics can select other
kinds of information, neither quantum nor classical, but something in
between.  The central result of this paper is an exhaustive
classification of exactly what kinds of information can be selected in
this way.

Preservation of information in physical systems is important in
several contexts.  In communication theory, information originates
with a sender (``Alice'') who actively conspires with a receiver
(``Bob'') to transfer it over a communication channel.  Computational
devices require memory registers that can store information in the
face of repeated noise.  Experimental and observational sciences
require, in a more or less explicit way, the transmission of
information from a passive system of interest (perhaps a distant
galaxy, or a nanoscale device), through a chain of ancillary systems,
to an observer.  In each case, achieving the desired transformation
requires first that the information be \emph{preserved} by a noisy
dynamical process or ``channel'' -- yet, each operational scenario
poses a subtly different notion of ``preserved''.

In this paper we develop a theory that covers all these situations in
a unified framework.  We start by establishing a general setting for
information (and its preservation), using \emph{codes} (Section
\ref{sec:PresInfo}).  We state a minimal necessary condition for
information preservation, then prove that it is also sufficient (in a
particular strong sense), deriving a powerful structure theorem for
preserved codes (Section \ref{sec:Struct}).  On this foundation, we
build a hierarchy of different \emph{operational} criteria for
preservation (Section \ref{sec:Types}).  Stricter criteria correspond
to additional operational constraints -- {\em e.g.}, that information
persist for more than one application of the noise.  On the one hand,
some of these criteria allow us to make natural contact with
previously studied approaches to information preservation -- including
pointer states \cite{Zurek}, decoherence-free subspaces \cite{DFS} and
noiseless subsystems \cite{KnillPRL00,NSqubit,NS}, and quantum error
correcting codes \cite{QEC} -- while also proposing a couple of new
ones, related to what we call ``measurement-stabilized'' and
``unconditionally preserved'' codes.  On the other hand, our main
contribution is to gather them all into a single framework using
\emph{information-preserving structures} (IPSs).
IPSs classify the kinds of information that dynamical processes can
preserve.  In particular, we focus here on \emph{perfect IPS},
corresponding to \emph{zero-error} information.  Finally, we consider
how to find these structures for a given noisy process (Section
\ref{sec:App}).  It is NP-hard to find a channel's largest correctible
IPS, but for stricter preservation criteria it can be much easier.  We
provide efficient and exhaustive algorithms to find noiseless,
unitarily noiseless, and unconditionally preserved IPSs.

Our IPS framework establishes an explicit and rigorous connection
between perfectly preserved information and fixed points of channels.
By focusing on fixed points (see also \cite{Beny}), rather than on the
noise commutant, it provides a first step toward understanding
\emph{approximate IPS}, making contact with stability results for
decoherence-free encodings under symmetry-breaking perturbations
\cite{BaconPRA99}, and with approximate QEC
\cite{aQEC,NgBK,Ticozzi2010}.  
Our structure theorem for the fixed points of completely positive maps
extends previous results that apply only to unital processes
\cite{AGG02a,KribsPEMS03}, or processes with a full-rank fixed state
\cite{Frigerio}.  Our algorithm for finding noiseless and unitarily
noiseless codes improves on algorithms that are inefficient ({\em
e.g.}, Refs. \onlinecite{ZurekPTP93,ChoiPRL06}), or otherwise restricted to
purely noiseless information \cite{Knill06a} or unital channels
\cite{KS06a}.

Early aspects of this work appeared in Ref. \onlinecite{BNPV}.  Here,
we provide more results, full proofs, and detailed discussion.

\section{Preserved information}
\label{sec:PresInfo}

``What kinds of information can a quantum dynamical process
preserve?'' is a technical question, but one that requires a firm
conceptual foundation.  This section aims to provide one.  We begin
with an operational definition of ``information,'' then apply it to
quantum theory.  We use well-known results on the accuracy with which
quantum states can be distinguished to establish a mathematical
framework in which this central question can be answered.

``Information'' has a variety of meanings.  Any crisp definition will
inevitably run afoul of some alternative usage.  Throughout
\emph{this} paper, we will follow this basic operational definition:
\begin{principle}\label{def:information}
Information is a resource, embodied in a physical system, that can be
used to answer a question.
\end{principle}

A physical system $\cS$ can carry information.  If one party (Alice)
sends it to another (Bob), then the recipient can use it to answer a
question.  More precisely, possession of $\cS$ gives Bob a higher
probability of guessing the correct answer.  However, if $\cS$ evolves
during transmission -- {\em i.e.}, it undergoes a dynamical map $\cE$
-- then some information might be lost.  As a result, $\cE(\cS)$ may
be less useful than $\cS$.  It is not yet clear how to determine
whether information is ``preserved'', but two principles seem
self-evident:
\begin{principle}
\label{def:preservation_trivial}
If nothing happens to a system, then all the information in it is
preserved.
\end{principle}
\begin{principle}
\label{def:nonpreservation}
If a system evolves as $\cS\to\cE(\cS)$, and $\cE(\cS)$ is strictly
less useful than $\cS$ in answering some question, then some
information in $\cS$ was \textbf{not} preserved.
\end{principle}

These simple criteria bracket the (as-yet undefined) notion of
preservation -- of \emph{all} the information in a system.  But
information can be \emph{encoded} into one part of a
system.  Such information may be preserved even if other parts are
damaged or destroyed.  To properly represent this notion, we appeal to
another self-evident principle:
\begin{principle}
\label{def:triviality}
If some property or parameter of a system is already known to all
parties ({\em e.g.} Alice and Bob), then it carries no useful
information.
\end{principle}

For example, if a quantum system $\cS$ is \emph{known} to be in the
state $\proj{\psi}$, by all parties, then nothing is gained
by transmitting it.  Since a known property of $\cS$ carries no
information, disturbing it has no effect on the information embodied
in the system.  So, we can represent the sequestering of information
in a very general way by stating a promise or precondition, which
\emph{guarantees} certain properties of $\cS$.  Those properties,
being already known, carry no useful information.  Information carried
by $\cS$ \emph{conditional} on the promise can be preserved, even if
other properties (constrained by the promise) are disturbed.

Mathematically, a precondition on $\cS$ is a restriction of its state,
to some (arbitrary) subset.  We call such a set a \emph{code}.
\begin{definition}
\label{def:code}
A code $\cC$ for a system $\cS$ is an arbitrary subset of the system's
state space.
\end{definition}

Codes carry information.  Each system $\cS$ has a natural ``maximum
code'' containing all its possible states.  Smaller codes for that
system carry strictly less information -- but may be preserved even
when the system's maximum code is not.  A code that is a strict subset
of another preserved code is uninteresting, so we will focus on
\emph{maximal} preserved codes.
\begin{definition}
\label{def:maximal}
A preserved code $\cC$ is \textbf{maximal} if there exists no
preserved $\cC_{\mathrm{big}} \supset \cC$.  That is, if adding any
other state would render $\cC$ unpreserved.
\end{definition}

We can narrow our focus even more.  If $\cS$ has two preserved codes,
$\cC_{\mathrm{big}}$ and $\cC_{\mathrm{small}}$, where
$\cC_{\mathrm{big}}$ is strictly ``bigger'' than
$\cC_{\mathrm{small}}$, then we are not interested in
$\cC_{\mathrm{small}}$.  $\cC_{\mathrm{big}}$ is ``bigger'' than
$\cC_{\mathrm{small}}$ if it has a proper subset that is identical or
isomorphic to $\cC_{\mathrm{small}}$.  We can make this rigorous, but
only by borrowing a technical definition from the next section (see
Definition \ref{def:isometric}):
\begin{definition}
A preserved code $\cC$ is \textbf{maximum} if and only if there is no
preserved $\cC_{\mathrm{big}}$ such that $\cC$ is isometric to a
strict subset $\cC_{\mathrm{small}} \subset \cC_{\mathrm{big}}$.
\end{definition}
We will generally restrict our attention to maximum
codes\footnote{Graph theorists may recognize this terminology.
Maximal and maximum codes have the same relationship as maximal and
maximum cliques, or independent sets.  Note, however, that unlike a
graph, a channel need \emph{not} have a unique maximum code.  If a
channel preserves \emph{either} a quantum bit \emph{or} a classical
trit, they are incomparable -- neither is bigger than the other.}.  We
need a precise definition of a ``preserved'' code.  We begin by
adapting Principles \ref{def:preservation_trivial} and
{\ref{def:nonpreservation} to codes:
\begin{principle}
\label{def:codepreservation}
The information in a code $\cC$ is preserved by a dynamical map $\cE$
\textbf{if} $\cE$ leaves every state in $\cC$ unchanged.
\end{principle}
\begin{principle}
\label{def:codenonpreservation}
The information in a code $\cC$ is preserved by a dynamical map $\cE$
\textbf{only if} $\cE(\cC)$ is as useful as $\cC$ for answering any
question.
\end{principle}

These are sufficient and necessary (respectively) \emph{operational}
conditions for preservation.  Principle \ref{def:codenonpreservation}
seems much weaker than \ref{def:codepreservation} -- but we will show
that it is actually not.  If Principle \ref{def:codenonpreservation}
is satisfied, then there is a physically implementable \emph{recovery
operation} that restores every code state.  The ability to perform
this recovery is a resource -- a reasonable one, but a nontrivial one.
We will also consider several weaker resources ({\em e.g.},
restrictions on what recovery operations can be implemented), and the
corresponding stronger notions of preservation, in Section
\ref{sec:Types}.

This concludes the ``philosophical'' part of our framework, and in
what follows we will build on these foundations to establish technical
results.  Two final points deserve mention, however:

(i) Identifying ``information'' with codes (arbitrary sets of states)
is intended to be a very general paradigm.  A system's state, by
definition, specifies everything that can be known about that system.
Every question that can be answered using $\cS$ boils down to a
question about the state of $\cS$, and variations in that state
(restricted to some particular code) encode information.  If there are
exceptions to this rule -- that is, notions of information, consistent
with Principle \ref{def:information}, that cannot be represented using
codes -- then we are not aware of them\footnote{A simple and important
example is entanglement between $\cS$ and a reference system $\cR$.
Though not \emph{explicitly} mentioned, entanglement is easy to
characterize in our setting.  If $\cS$ and $\cR$ are maximally
entangled, then $\cS$ can be post-selectively prepared in any pure
state $\proj{\psi}$ by projecting $\cR$ into some $\proj{\psi'}$.
Entanglement is preserved if and only if the code containing
\emph{all} of these conditional states is preserved.}.  An extended
discussion can be found in Appendix \ref{sec:CodeApology}).

(ii) Our definition of ``information'' may not appear congruent with
Shannon's theory of communication \cite{CoverBook91,ShannonBook49}.
In fact, it is quite compatible.  There are, however, some subtle
differences: as mentioned, we focus on zero-error information;
furthermore, we consider a \emph{single} use of a communication
channel, rather than $N$ uses with $N\to\infty$.  An extended
discussion can be found in Appendix \ref{sec:ShannonTheory}.

\subsection{Systems, states, codes, and channels in quantum theory}

So far, we have used a language consistent with a broad range of
physical theories.  Let us now specialize to quantum theory.  States
of quantum systems are represented by density operators $\rho$, which
are positive trace-$1$ operators on the system's Hilbert space $\cH$.
Quantum dynamical maps (also known as \emph{channels}) are described
by completely positive (CP), trace-preserving (TP) linear maps on
density operators.  A CPTP map $\cE$ can be represented in two
equivalent ways.  In one formulation, the initial system $\cS_A$ comes
into contact with an \emph{uncorrelated} environment $E_0$, they
evolve unitarily, and then some part $E_f$ of this joint system is
discarded\footnote{A technical note is in order here.  If the
environment $E_0$ is initially correlated with the input system
$\cS_A$, then the resulting dynamics is generally \emph{not} CP, and
so initial decorrelation is a common assumption in the theory of open
quantum systems.  For our purposes, it is more than just an
assumption.  If $\cS_A$ is initially correlated with its environment,
then the latter contains information about $\cS_A$.  The system and
its environment \emph{together} may contain more information about
$\cS_A$ than does $\cS_A$ itself!  In the course of the ensuing
interaction, that information may flow back into the system.  It is
impossible (ill-defined, even) to say whether information in $\cS_A$
has been preserved in such a case, for it may have been replaced with
information initially residing in $E_0$.  Such an interaction is not,
in any sense, ``noise''.}, yielding a reduced state for the final
system $\cS_B$:
\begin{equation}
\rho_B = \cE(\rho_A) =
\Tr_{E_f}\left[U\left(\rho_A\otimes\rho_{E_0}\right)U^\dagger\right].
\end{equation}
The other representation of a CP-map is called the operator-sum
representation:
\begin{equation}
\rho_B = \cE(\rho_A) = \sum_i{K_i \rho_A K_i^\dagger},
\end{equation}
where the \emph{Kraus operators} $\{K_i\}$ satisfy $\sum_i{K_i^\dagger
K_i} = \Id$.  This representation is mathematically simpler but less
physically intuitive (for a complete treatment of CP maps, see
Refs. \onlinecite{KrausBook83,NielsenBook00}).  Note that in either
representation, $\cS_A$ and $\cS_B$ may be different systems, with
different Hilbert spaces.  However, the special case where they are
the same is very important -- for instance, all continuous-time
processes are described by such maps -- and we will often implicitly
assume it, dropping $A$ and $B$ subscripts and relying on context to
illustrate whether ``$\cS$'' refers to the channel's input or its
output.

Codes for quantum systems are sets of quantum states, e.g. $\cC =
\{\rho_1\ldots\rho_k\}$.  The code represents a promise that the
system will be prepared in some $\rho\in\cC$.  Each distinct code
represents a potentially distinct kind of information.  Note, however,
that we are not introducing an infinite proliferation of fundamentally
different ``kinds'' of information, nor are we suggesting that a qubit
carries fundamentally different information from a qutrit: Systems
with isomorphic state spaces carry the same kind of information.  $N$
qutrits equal $\log_23$ qubits, so they carry the same kind of
information, but more of it.  The important dividing line is between
systems that have no asymptotic equivalence, like a qubit and a
classical bit\footnote{Two systems $\cS_A$ and $\cS_B$ have an
asymptotic equivalence if there is a constant $R$ such that for all
$\epsilon >0$ and $N \rightarrow \infty$, (i) $N(R- \epsilon)$ copies
of $\cS_A$ is strictly less powerful that $N$ copies of $\cS_B$, and
(ii) $N(R+ \epsilon)$ copies of $\cS_A$ is strictly more powerful that
$N$ copies of $\cS_B$. Thus, any two finite non-trivial quantum
systems have an asymptotic equivalence in this sense.}.

Now that we have a well-defined mathematical theory, we need a
mathematical definition of preservation.  Principle
\ref{def:codenonpreservation} uses the very general idea of
``questions.''  A simple and well-defined set of questions turns out
to be sufficient: ``Was the system prepared in state $\rho$ or state
$\sigma$?''  Here, $\rho$ and $\sigma$ are states in the code $\cC$.
In general, these questions cannot be answered with certainty, for
most pairs of states are not perfectly distinguishable.  But if Bob
cannot distinguish them as well as Alice, then information has been
lost.  Of course, there may well be many other questions that
\emph{could} be asked, but it turns out that if \emph{these}
well-defined questions are all preserved, then the code can be
corrected (and therefore \emph{every} question must be preserved!)

\begin{example}
\label{ex:Codes}
Suppose that $\cS$ is a quantum bit.  If its dynamics are noiseless,
then every state passes unchanged through the channel.  We can
describe the preserved information in terms of a code
$\cC_{\mathrm{qubit}}$ that contains \emph{all} the possible states
for a qubit.  Now, suppose $\cS$ experiences a dephasing channel,
which transforms an arbitrary superposition of the computational
states $\ket{0}$ and $\ket{1}$ into a mixture,
$$\cE: \alpha\ket{0}+\beta\ket{1} \longrightarrow |\alpha|^2\proj{0} +
|\beta|^2\proj{1},$$ and which maps the Bloch sphere into itself like
this:
\begin{center}\includegraphics[width=25mm]{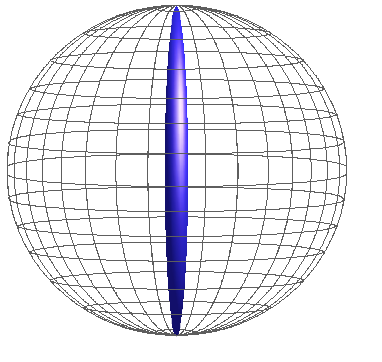}
\end{center}

The code $\cC_{\mathrm{qubit}}$ is no longer preserved.  Because the
two states $\ket{\pm} = \frac{\ket{0}\pm\ket{1}}{\sqrt2}$ are both
mapped to $\rho_B = \frac12{\Id}$, Bob cannot answer the
question ``Was $\cS$ prepared in $\ket{+}$ or $\ket{-}$?''  However,
the more restricted code $\cC_{\mathrm{cbit}} = \{\proj{0},\proj{1}\}$
\emph{is} preserved, for Bob can distinguish between these states just
as well as Alice.  The preserved code describes a different kind of
information: one classical bit.
\end{example}

Here are some familiar examples of preserved information, represented
as codes.

\begin{example}
\label{ex:PointerBases}
A \textbf{pointer basis} comprises a set of mutually orthogonal
``pointer states'' $\{\ket{\psi_1}\ldots\ket{\psi_N}\}$ that are
unaffected (or ``least affected'') by noise -- as originally
introduced in the study of quantum measurement and decoherence
\cite{Zurek}.  A pointer basis can be described by the code containing
all the pointer states (PSs) $\proj{\psi_k}$ and their convex
combinations.  Classical information is stored in the index $k$, but
not quantum information, because superpositions are not preserved, and
thus cannot be included in the code.  PSs are preserved in the
strongest possible sense: Every state in the code is a fixed point of
$\cE$.
\end{example}

\begin{example}
\label{ex:DFS}
A \textbf{decoherence-free subspace} (DFS) is an entire subspace of
the system's Hilbert space, $\cP\subseteq\cH$, which is invariant
under the noise \cite{DFS} (see also Zurek's prior discussion of
``pointer subspaces'' \cite{ZurekPRD82}).  The corresponding code $\cC$ contains
every density operator supported on $\cP$.  Since $\cC$ includes
superpositions of any given basis for $\cP$, a DFS preserves quantum
information, and can in principle support encoded quantum computation.
Like pointer bases, DFSs are preserved in the strongest sense
(although, especially in the context of Markovian dynamics, the
definition is commonly relaxed to allow unitary evolution, see also
\cite{ShabaniPRA05,TicozziTAC}).
\end{example}

\begin{example}
\label{ex:NS}
A \textbf{noiseless subsystem} (NS) shares with a DFS the property
that it can store quantum information.  Unlike a DFS, an NS can exist
even if no pure state in $\cH$ is invariant.  According to the original
definition \cite{KnillPRL00,NSqubit}, it suffices that the noise has a
trivial action on a ``factor'' of $\cH$.  That is, $\cS$ supports an
NS if there exists a subspace $\cH_{AB}\subseteq \cH$ that can be
factored as $\cH_{AB}=\cH_A\otimes\cH_B$, so that for every pair
of states $\rho_A$, $\rho_B$ supported on $\cH_A$, $\cH_B$,
respectively,
\begin{equation}
\cE\left(\rho_A\otimes\rho_B\right) = \rho_A \otimes \rho'_B,
\label{defNS0}
\end{equation}
for some state $\rho'_B$ on $\cH_B$. Thus, the restriction of $\cE$ to
$\cH_{AB}$ obeys 
\begin{equation}
\cE = {\Id}_A \otimes \cE_B ,
\label{defNS1}
\end{equation}
for some CPTP map on $\cH_B$.  Since, for every state $\rho_{AB}$
supported on $\cH_{AB}$,
\begin{equation}
\label{defNS2}
\tr_B\cE(\rho_{AB})=\tr_B\rho_{AB},
\end{equation}
it is clear that quantum information is preserved in the reduced state
of subsystem $A$.  However, it is not immediately obvious that (as in Examples
\ref{ex:PointerBases}-\ref{ex:DFS}) there is a corresponding fixed code for
$\cS$. In fact, the existence of such a code follows from
Eq. (\ref{defNS1}) and the fact that every channel $\cE_B$ has at
least one fixed point $\tau_B$ \cite{GranasBook03}.
Thus, the code $\cC_{\mathrm{NS}} = \{\rho_A\otimes\tau_B,\ \forall
\rho_A\}$, where $\rho_A$ is arbitrary on $\cH_A$, but $\tau_B$ is
fixed, is invariant under $\cE$.
\end{example}

\begin{example}
\label{ex:QECC}
A \textbf{quantum error correcting code} (QECC) \cite{QEC,KLP05a} also
preserves quantum information, but according to a weaker criterion.  A
QECC is a subspace $\cP$ for which there exists a physical recovery
operation $\cR$ so that $(\cR\circ\cE)(\ket\psi)=\ket\psi$ for all
$\ket\psi\in\cP$.  As with a DFS, the corresponding ``correctable
code'' contains all states supported on $\cP$.  Unlike the previous
examples, this code is \emph{not} fixed.  However, it is clearly
preserved, because $\cP$ can be turned into a DFS by applying $\cR$.
An ``operator QECC'' \cite{OQECC} is an NS for $\cR\circ\cE$.
Another variant stipulates active intervention {\em before} the
noise occurs \cite{KnillPRL00}, in which case the code is ``protectable''
rather than correctable \cite{Knill06a}.  While protectable
codes will not be further discussed in the present work, the notions
of protectability and correctability are not fundamentally different
and may, to a large extent, be viewed as ``dual'' to one another, as 
elucidated in \cite{Ticozzi2010}.
\end{example}

The above examples are not exhaustive, but they illustrate the
diversity of criteria for ``preserved'' information.  Each example is
specified by a different algebraic condition, dictated either by
operational constraints or by its relevance to the task at hand.  We hope
that unifying them will bring clarity to experimental implementations
of these ideas \cite{DFSexp,NSexp,QECexp}.

The key point of our framework, though, is to explore \emph{beyond}
these well-known examples.  In particular, all the situations
illustrated above can be described intuitively as ``quantum
information'' or ``classical information.''  What we would like to
know is whether more exotic codes are possible -- whether some weird
channel can preserve a form of information that is entirely unlike a
pointer basis, NS, or QECC.  We need a rigorous criterion for
preservation of codes, based on Principles \ref{def:codepreservation}
and \ref{def:codenonpreservation}.  Principle
\ref{def:codepreservation} is straightforward, but Principle
\ref{def:codenonpreservation} refers to \emph{any} operational task.
Our strategy will be to identify one particular task --
\emph{distinguishing} between code states.  Because we focus on just
one task, we will obtain a \emph{necessary} condition.  Having done
so, our next challenge will be to bring these conditions together.

\subsection{Single-shot distinguishability, Helstrom's theorem, 
and the 1-norm}


Suppose that Bob has access to a single copy of the system $\cS$, and
he wishes to guess correctly whether it was prepared in state $\rho$
or state $\sigma$ (both of which are in $\cC$).  He seeks to maximize
the probability that his guess is correct, and he knows that the prior
probabilities of $\rho$ and $\sigma$ are (respectively) $p$ and
$(1-p)$.  He can measure $\cS$ to help him decide, and the optimal
course of action is determined by Helstrom's theorem
\cite{HelstromBook76}:\\

\noindent\textbf{Helstrom's Theorem.}\textit{ Suppose a quantum system
$\cS$ was prepared in either in state $\rho$ or in state $\sigma$,
with respective probabilities $p$ and $(1-p)$.  The highest
probability of guessing correctly which was prepared is obtained by
measuring the Hermitian operator $\Delta_p = p\rho - (1-p)\sigma$,
then guessing ``$\rho$'' upon obtaining a result corresponding to a
positive eigenvalue and ``$\sigma$'' in the case of a negative
eigenvalue.  If a zero eigenvalue is obtained, either guess is equally
good. The success probability is given by $P_H(\rho,\sigma;p)=\frac
12(1 + \| \Delta_p \|_1)$, where $\Vert\cdot\Vert_1$ refers to the
1-norm, $\Vert A\Vert_1\equiv \tr\,\vert A\vert=\tr\sqrt{A^\dagger
A}$.
\label{thm:Helstrom}
}\\

The success probability $P_H$ is a measure of the {\em
distinguishability} between $\rho$ and $\sigma$.  It is non-increasing
under any CPTP map, because the 1-norm is contractive under CPTP maps
\cite{Contractivity}.  So, in order for $\{\cE(\rho),\cE(\sigma)\}$ to
be as distinguishable as $\{\rho,\sigma\}$, we require that for every
prior probability $p$, the Helstrom strategy yields the same success
probability for distinguishing $\rho$ from $\sigma$ as for
distinguishing $\cE(\rho)$ from $\cE(\sigma)$:
$$P_H(\cE(\rho),\cE(\sigma);p)=P_H(\rho,\sigma;p).$$ If Bob needs to
distinguish between two \emph{sets} of states, $\{\rho_k\}$ and
$\{\sigma_k\}$, he assigns prior probabilities $\{p_k\}$ and $\{s_k\}$
to the $\{\rho_k\}$ and $\{\sigma_k\}$, respectively.  Then his task
is to distinguish
$$\rho = \frac{1}{\sum_k{p_k}}\sum_k{p_k\rho_k}$$
from
$$\sigma = \frac{1}{\sum_k{s_k}}\sum_k{s_k\sigma_k},$$ 
\noindent 
where the prior probabilities of $\rho$ and $\sigma$ are,
respectively, $p = \sum_k{p_k}$ and $1-p$.

This measure of distinguishability is, in fact, a metric on the space
of linear operators.  Its preservation implies a kind of rigid
equivalence, which we make precise with the following definition:
\begin{definition}\label{def:isometric}
Two codes $\cC_1$ and $\cC_2$ are \textbf{1-isometric} (or just ``isometric'')
to each other if and only if there exists a linear 1:1 mapping $f:\cC_1\to\cC_2$ such that,
for all $\rho,\sigma$ in the convex closure of $\cC_1$ and all
$p\in[0,1]$, 
$$\Vert pf(\rho)-(1-p)f(\sigma)\Vert_1=\Vert
p\rho-(1-p)\sigma\Vert_1.$$
\end{definition}

\begin{definition}\label{def:isometric2}
A code $\cC$ is \textbf{1-isometric} (or just ``isometric'') for a
CPTP process $\cE$ only if $\cC$ is isometric to $\cE(\cC)$.
\end{definition}

So, if a code is isometric for a given map $\cE$, then $\Vert
p\cE(\rho)-(1-p)\cE(\sigma)\Vert_1=\Vert p\rho-(1-p)\sigma\Vert_1$ for
all $\rho,\sigma$ in the convex closure of $\cC$ and $p\in[0,1]$. A
stronger characterization is given by the following:

\begin{definition}
\label{FixedDef}
A code $\cC$ is \textbf{fixed} by a CPTP channel $\cE$ if and only if
$\cE(\rho) = \rho$ for all $\rho\in\cC$.
\end{definition}

\subsection{Criteria for preservation}

We are now in a position to state Principle \ref{def:codepreservation}
more precisely:\\

\noindent\textbf{Strong Condition for Preservation.}  \textit{A
\textbf{sufficient} condition for $\cC$ to be preserved by $\cE$ is
that $\cC$ be fixed by $\cE$.}\\

The Strong Condition is obviously sufficient, but (as demonstrated by
error correcting codes) it is not \emph{necessary} for preservation.
Principle \ref{def:codenonpreservation} implies a host of necessary
conditions -- one for every operational task.  We choose one in
particular: We demand that $\cE(\rho)$ and $\cE(\sigma)$ be just as
distinguishable\footnote{Note that $\rho$ and $\sigma$ need not be
perfectly distinguishable to start with.  A QECC contains
non-orthogonal states that cannot be perfectly distinguished, but they
can be distinguished just as well after $\cE$ as before.} as $\rho$
and $\sigma$.  We also require that questions like ``Was $\cS$
prepared in one of the states $\{\rho_1,\rho_2,\rho_3\ldots\}$, or in
one of the states $\{\sigma_1,\sigma_2,\sigma_3\ldots\}$?'' should be
preserved as well, so convex combinations of code states should
maintain their pairwise distinguishability.  There is nothing
inherently special about this particular operational task, except that
it produces a useful and convenient mathematical condition: \\

\noindent\textbf{Weak Condition for Preservation.}  \textit{A
\textbf{necessary} condition for $\cC$ to be preserved by $\cE$ is
that $\cC$ be isometric 
for $\cE$.}\\

These two criteria form the foundation of our framework.  To
illustrate their application, here are some examples both simple and
subtle.

\begin{example} 
\label{ex:ClassicalFourState}
Suppose $\cS$ is a classical system with four states labeled
$\{0,1,2,3\}$, each perfectly distinguishable from the others.  $\cS$
passes through a channel that maps state $k$ randomly to $k$ or $k+1$
(mod 4), represented as a stochastic map
$$\cE = \begin{pmatrix} \hf&0&0&\hf \\ \hf&\hf&0&0 \\ 0&\hf&\hf&0 \\
0&0&\hf&\hf .
\end{pmatrix}$$ 
\noindent 
A stochastic map's information-preserving properties can conveniently
be represented by an \emph{adjacency graph} for the input states,
where state $j$ is connected to state $k$ if $\cE(j)$ overlaps with
$\cE(k)$.  This map's adjacency graph is:

\begin{center}
\includegraphics[height=25mm]{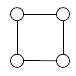}
\end{center}

The code $\cC_4 = \{0,1,2,3\}$ representing all information about
$\cS$ is not preserved, because $0$ and $1$ are perfectly
distinguishable, but $\cE(0)$ and $\cE(1)$ overlap.  A smaller code
$\cC_2 = \{0,2\}$ is preserved, even though neither $0$ nor $2$ is a
fixed point.  The code $\cC'_2 = \{1,3\}$ is also preserved, but the
union of $\cC_2$ and $\cC'_2$ is not preserved.  This demonstrates
that the set of preserved codes is \emph{not} convex; distinct
preserved codes may rely on mutually contradictory preconditions on
$\cS$, {\em e.g.}, ``$\cS$ was prepared in 0 or 2'' and ``$\cS$ was
prepared in 1 or 3.''
\end{example}

\begin{example}
\label{ex:WhyConvex}
Why must distinguishability be preserved, not just between code
states, but between convex combinations of them?

Let $\cE$ be a classical stochastic map on three states $\{0,1,2\}$,
which fixes states $0$ and $1$, but maps $2\to1$.  This map
``squashes'' the classical 3-simplex onto one of its sides, as in the
figure below.  Now, consider a code $\cC$ comprising the states on the
thick (red) line in the figure:
\begin{center}
\includegraphics[height=35mm]{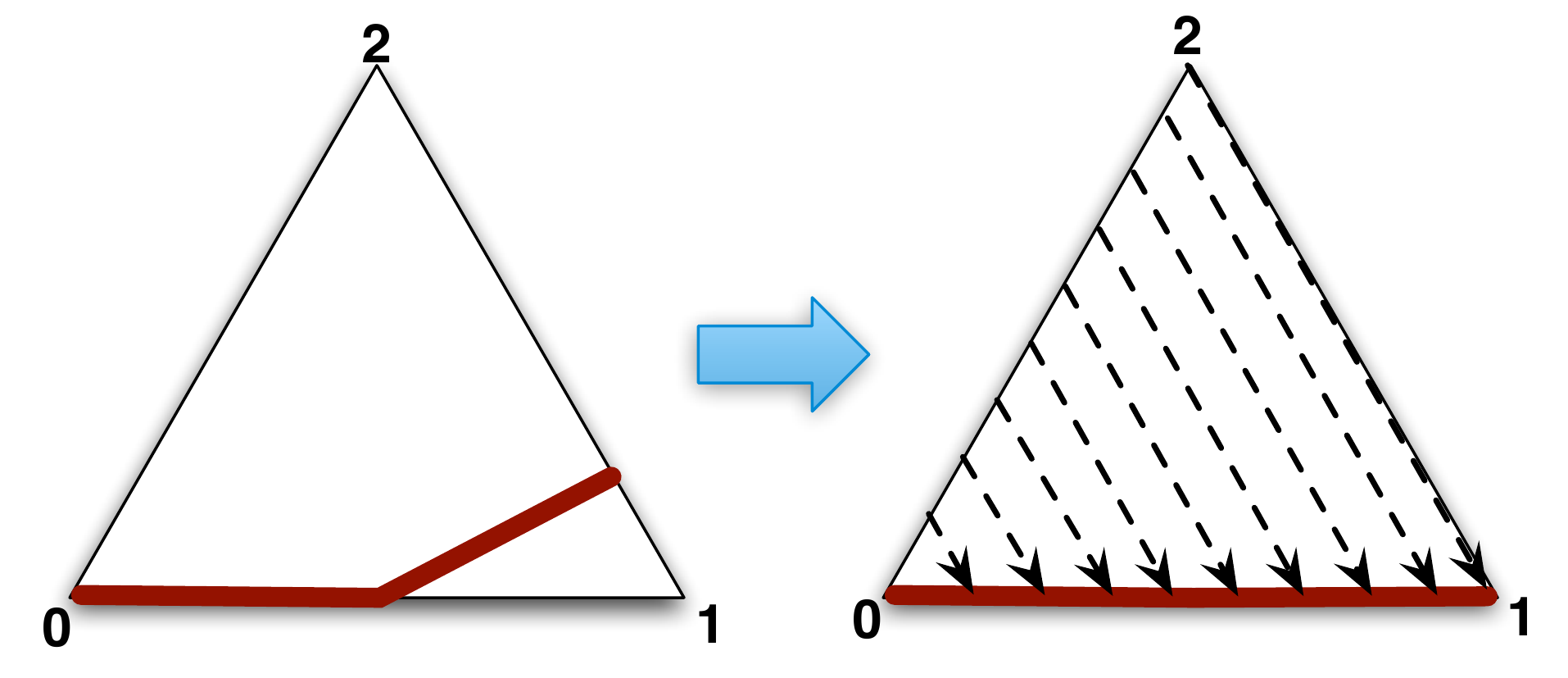}
\end{center}
This code is not preserved by $\cE$, because the original code has
structure that is missing in its image: States not on the line between
``0'' and ``1'' can be unambiguously discriminated (with $p>0$) from
states lying on the line.  However, there is no way to recover this
structure by applying another linear map afterward!  Still, if we
ignore convex combinations, then all the $1$-norm distances $\Vert
p\rho-(1-p)\sigma\Vert_1$, for $\rho,\sigma\in\cC$ are in fact
preserved by $\cE$.  This is because the best way to distinguish any
two states in $\cC$ is to measure $0$ vs. $\{1,2\}$, and because the
channel maps $2\to1$, it does not actually affect this measurement.
If we consider convex combinations, however, we see that $\cC$ is not
isometric to $\cE(\cC)$, resolving the problem.
\end{example}

\begin{example}
\label{ex:WhyWeights}
Why must all the \emph{weighted} 1-norm distances be preserved, rather
than just $\Vert \rho-\sigma\Vert_1$?

Let $\cH_3 = \bC^3$ be the state space of a qutrit.  Define $\cE$ to
be the channel that does nothing to the $\{\ket{0},\ket{1}\}$
subspace, but maps $\proj{2}\to\frac12(\proj{0}+\proj{1})$.  Now,
consider a code $\cC$ comprising all the states of the form
$$\rho = \frac12\left(\proj{\psi}_{\mathrm{Span}(\ket{0},\ket{1})} +
\proj{2}\right).$$ 
\noindent 
We can think of this code as the set of states that would be prepared
by a machine that is \emph{supposed} to produce qubit states in the
$\{\ket{0},\ket{1}\}$ subspace, but fails 50\% of the time and
produces $\proj{2}$ instead.

As in Example \ref{ex:WhyConvex}, this code is not preserved by $\cE$.
In this case, the problem is that Alice can check to see whether the
preparation failed or not, but Bob cannot.  As before, this intuition
is borne out by the fact that no recovery operation exists.  However,
if we compute the \emph{unweighted} 1-norm distances $\Vert
\rho-\sigma\Vert_1$, both before and after $\cE$ is applied, then we
find that they are unchanged.  Only when we require preservation of
the weighted 1-norm distances (corresponding to distinguishing states
with the aid of prior information), do we correctly derive that $\cC$
is not preserved.
\end{example}

As Example \ref{ex:WhyConvex} demonstrates, it is important that $\cE$
preserves distinguishability not just between states in $\cC$, but
between convex combinations of them.  This means that we can (without
loss of generality) \emph{extend $\cC$ to include all states in its
convex closure}.  From now on, we will simply assume that any
preserved code is convex in this sense, in line with \cite{BNPV}.  The
Weak Condition then has a simple geometric interpretation.  $\cE$ must
preserve the 1-norm distance between any two \emph{unnormalized}
states $p\rho$ and $(1-p)\sigma$.  This means that the entire convex
cone of $\cC$ -- that is, the set $\cC_+$ containing $x\rho$ for all
$x\geq0$ and $\rho\in\cC$ -- must be \emph{isometric} to its image
$\cE(\cC_+)$.  Two sets are isometric if there is a
distance-preserving mapping (an isometry) between them.  Here, the
relevant metric is the 1-norm distance
$$D(A,B) \equiv \Vert A-B \Vert_1,$$
\noindent 
and $\cE$ is the isometry that preserves it.  Thus, preservation
requires that the convex cone $\cC_+$ evolves {\em rigidly}, with
respect to the 1-norm distance, under $\cE$.

Our necessary and sufficient conditions bracket the as-yet-vague
notion of a code being preserved by a channel.  Fixedness seems too
strong, isometry perhaps too weak.  One of our main goals in this
paper is to derive a single, rigorously stated condition for
information to be ``preserved'' by a channel.  We will eventually do
so by squeezing the Strong and Weak Conditions together as follows:

\begin{proposition}
\label{prop:Preservedness}
If $\cC$ is a maximum isometric code for $\cE$ (i.e., it satisfies the
Weak Condition, and there is no larger $\cC$ that satisfies the Weak
Condition), then there exists a CPTP map $\cR$ such that
$\cR\circ\cE(\rho) = \rho$ for all states $\rho\in\cC$.
\end{proposition}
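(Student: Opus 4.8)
The plan is to first strengthen the hypothesis and only then construct the recovery. The Weak Condition as stated controls only the trace norms of weighted differences $p\rho-(1-p)\sigma$ of code states, but because we have agreed to take $\cC$ convex, every Hermitian operator in the real linear span $V=\mathrm{span}_\bR(\cC)$ is exactly such a difference up to scale: given $A=\sum_i c_i\rho_i$ with $\rho_i\in\cC$ and $c_i\in\bR$, collect the positive- and negative-weight terms and use convexity to renormalize each group into a single code state, writing $A=x\rho_+-y\rho_-$ with $\rho_\pm\in\cC$ and $x,y\ge 0$. Setting $p=x/(x+y)$ and applying the Weak Condition to the pair $(\rho_+,\rho_-)$ then yields $\Vert\cE(A)\Vert_1=\Vert A\Vert_1$ for every $A\in V$. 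So the first step establishes that $\cE$ acts on $V$ as a linear isometry of the trace norm; in particular $\cE|_V$ is injective and admits a linear left inverse $\cR_0:\cE(V)\to V$, itself a trace-norm isometry.

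The second step converts this linear reversibility into operator-algebraic data. Since $\cE$ is trace preserving it respects the trace functional as well as the trace norm on $V$, hence it preserves the full single-shot distinguishability geometry of the code: code states with $\Vert\rho-\sigma\Vert_1=2$ (orthogonal supports, perfectly distinguishable) must remain perfectly distinguishable, and every intermediate weighted distance is matched. Writing $\cE(\cdot)=\sum_i K_i(\cdot)K_i^\dagger$ and letting $\Pi$ project onto the support of $\cC$, I would show that saturating the contractivity of the $1$-norm on all weighted pairs forces a Knill--Laflamme--type condition on the code, i.e.\ that the compressions $\Pi K_i^\dagger K_j\Pi$ are block-scalar with respect to the decomposition of the code's support into superselection sectors $\cH=\bigoplus_k(A_k\otimes B_k)$ (made explicit by the fixed-point structure theorem, Theorem~\ref{thm:FixedPtThm}). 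Equivalently, in Stinespring language, the complementary channel of $\cE$ sends every state in $\cC$ to one and the same environment state, so the environment acquires no information distinguishing code states.

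With such conditions in hand the recovery is built along the standard route: one defines $\cR$ on the image $\cE(V)$ by a transpose (Petz) channel relative to a full-support reference state of $\cC$, and completes it to a CPTP map by padding the orthogonal complement with a fixed output, then verifies directly that the resulting $\cR$ is completely positive and trace preserving and that $\cR\circ\cE(\rho)=\rho$ for all $\rho\in\cC$. Maximality of $\cC$ enters precisely here: it guarantees that the code fills an entire information-preserving sector, so the reference state can be chosen full rank on the support and the transpose recovery is well defined and normalizable, rather than correcting only a proper sub-code.

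The step I expect to be the main obstacle is the middle one --- extracting the Knill--Laflamme--type (equivalently, ``no information to the environment'') conditions from mere preservation of the weighted $1$-norm. Norm preservation alone yields only a \emph{linear} inverse on $V$; promoting it to a \emph{completely positive} recovery requires using the preservation of \emph{all} weighted distances, not just $\Vert\rho-\sigma\Vert_1$ (which Example~\ref{ex:WhyWeights} shows is insufficient), to pin down the cross terms $\Pi K_i^\dagger K_j\Pi$, and then checking that the induced recovery respects complete positivity. I would isolate this as a separate lemma and expect it to carry the technical weight of the proof.
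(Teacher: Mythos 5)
Your first step (promoting the Weak Condition to a trace-norm isometry on the real span of the convex code) is sound and matches the paper's convex-cone observation, and your instinct that the middle step carries all the weight is exactly right --- but that step, which you leave as an unproved lemma, is where the proposal has a genuine gap, and in the form you state it the lemma is actually false. Your ``equivalently, the complementary channel sends every state in $\cC$ to one and the same environment state'' is the decoupling condition appropriate to a purely quantum (single-sector) code; a preserved code carrying classical information violates it. For a preserved pointer basis, or the code $\{0,2\}$ of Example~\ref{ex:ClassicalFourState}, the environment may learn the sector index $k$ perfectly without damaging the code, so no condition of the form ``constant environment output'' can follow from $1$-norm preservation. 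The correct algebraic condition must allow the environment state to depend on $k$ and on the noise-full factors $B_k$, and forbid dependence only on the $A_k$ contents. Relatedly, your appeal to the decomposition $\cH=\bigoplus_k(A_k\otimes B_k)$ ``made explicit by Theorem~\ref{thm:FixedPtThm}'' is circular at this stage: that theorem describes the fixed points of a channel, and the code acquires such a sector structure only as a \emph{consequence} of correctability (via Theorem~\ref{thm:PresCorr}, Lemma~\ref{lem:NSFixed}, and Lemma~\ref{lem:NSStruct}), which is precisely what is being proved.

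The paper fills this hole without ever deriving Knill--Laflamme-type conditions. Saturation of the triangle inequality in Lemma~\ref{lem1:Disjoint} shows that $\cE$ maps the positive and negative parts $\Delta_\pm$ of every weighted difference to operators with disjoint supports; Lemmas~\ref{lem1:InvPartition} and \ref{lem1:InvSubsp} then track these Helstrom supports through the four factors of $\cEP\circ\cE=\Pi\circ\cE^\dagger\circ\cN\circ\cE$ (with a limiting argument, using a full-rank code state obtained from \emph{convexity}, not maximality, when $\Delta$ is rank-deficient) to show each support is an invariant subspace, whence $\cC$ is noiseless for the transpose channel composed with $\cE$. Note also that your final step concludes too quickly: the transpose/Petz recovery by itself yields only \emph{noiselessness}, not $\cR\circ\cE(\rho)=\rho$, because the gauge states on the $B_k$ factors may still drift. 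Maximality enters not to make the reference state full rank, but through Lemma~\ref{lem:NSStruct}: since $\cEP\circ\cE$ is unital on the support, its maximum noiseless codes consist exactly of states $\rho=\sum_k p_k\,\rho_{A_k}\otimes\mu_k$ with sector-wise fixed $\mu_k$, and the paper then composes with a gauge-resetting map $\cT$ (replace each $B_k$ state by $\mu_k$) to produce the $\cR=\cT\circ\cEP$ that genuinely fixes every code state (Corollary~\ref{cor:FPCond}). If you want to salvage your route, you could replace your middle lemma by Petz's sufficiency theorem applied to the family $\cC$, but you would still need the support-invariance argument (or an equivalent) to verify its hypotheses, and the hybrid sector structure would still have to be derived rather than assumed.
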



By proving this proposition, we will demonstrate that the strong and
weak conditions for preservation are equivalent -- \emph{given} the
ability to apply a recovery operation.  The proof is somewhat
involved.  In the next section, we will derive a structure theorem for
preserved codes, explore its consequences, and finally derive
Proposition \ref{prop:Preservedness} as as corollary (Corollary
\ref{cor:FPCond}) of Lemma \ref{lem:NSStruct}, which follows from
Theorem \ref{thm:PresCorr}.  Anticipating this sequence of
derivations, we proffer the following definition of ``preserved'' now,
with the understanding that it will only be justified by what follows:

\begin{definition}
\label{def:pres}
A code is \textbf{preserved} by a CPTP $\cE$ if and only if it
satisfies the Weak Condition -- that is, 
$$\Vert \cE(p\rho-(1-p)\sigma)\Vert_1=\Vert
p\rho-(1-p)\sigma\Vert_1,$$ 
\noindent 
for all $\rho,\sigma\in\cC$ and $p\in[0,1]$.
\end{definition}


\section{The structure of preserved information}
\label{sec:Struct}

In Section \ref{sec:PresInfo}, we stated plausible necessary and
sufficient conditions for a code to be ``preserved'', and suggested a
formal definition of preservation (conditional on some technical
results to be proved in what follows).  Next, we shall build upon this
foundation, elucidating the structures that follow from it.  First, we
will prove a series of theorems about preserved codes, culminating in
a structure theorem showing that preserved codes have the same
``shape'' as matrix algebras.  This indicates that preserved codes are
related to algebras, but provides no real context for \emph{how} they
are related, nor what role the algebra is playing.  So, our second
task is to analyze the underlying IPS.

Except where explicitly noted, all the proofs of theorems and lemmas
in this section have been deferred to Appendix \ref{sec:Proofs}.

\subsection{The shape of a preserved code}

Suppose that 
$\cC$ is a preserved code for $\cE$.  Starting from Definition
\ref{def:pres}, what can we derive about $\cC$?  Quite a lot, as it
turns out.  The following two definitions from Ref. \onlinecite{BNPV}
will be needed.

\begin{definition}
\label{def:noiseless}
A code $\cC$ is \textbf{noiseless} for a CPTP $\cE$ if and only if it
is preserved by any convex combination $\sum_nq_n\cE^n$, with $q_n\geq
0$ and $\sum_nq_n=1$.
\end{definition}

Noiselessness is stricter than preservation (every noiseless code is
preserved, but most preserved codes are not noiseless), but weaker
than fixedness (every fixed code is noiseless, but some noiseless
codes are not fixed).  Noiseless codes are special because their
states remain distinguishable no matter how many times $\cE$ is
applied (note that only channels whose output space is the same as
their input space can have noiseless codes).  This captures the
operational significance of fixedness -- and as we will show below
(Lemma \ref{lem:NSFixed}), there is a close mathematical connection
between noiseless and fixed codes.

\begin{definition}
\label{def:correctable}
A code $\cC$ is \textbf{correctable} for $\cE$ if and only if there
exists a CPTP $\cR$ such that $\cC$ is noiseless for $\cR\circ\cE$.
\end{definition}

Correctable codes can be \emph{made} noiseless, by applying a suitable
correction operation every time $\cE$ happens.  Readers familiar with
QEC may worry that our definition is slightly different from the usual
one, which requires that $\cC$ be \emph{fixed} by $\cR\circ\cE$,
rather than just noiseless.  It will turn out that our (apparently
weaker) condition implies the usual one, so we obtain the same result
with a weaker assumption\footnote{In the terminology of
Ref. \onlinecite{Ticozzi2010}, a code $\cC$ which is fixed by
$\cR\circ\cE$ is referred to as ``completely correctable''.  That
complete correctability is in fact equivalent to correctability can be
alternatively established by exploiting the explicit form of
$1$-isometric encodings, see Thm. 4 therein.}.  We are now in a
position to state a key theorem:

\begin{numtheorem}
{\ref{thm:PresCorr}}
\THMPRESCORR
\end{numtheorem}

Although the full proof is rather technical (see Appendix
\ref{sec:Proofs}), one aspect is especially useful and interesting.
We prove the theorem by \emph{explicitly} constructing a correction
operation for an arbitrary code $\cC$.  Moreover, the correction
operation is independent of $\cC$'s structure, and depends only on
$\cC$'s support. A code's support is the subspace $\cP\subseteq\cH$,
comprising the union of the supports of all $\rho\in\cC$.  Since the
correction only depends on the code's support, every code with the
same support will be corrected by the same operation. Remarkably, this
operation coincides with the \emph{transpose channel} introduced in
Ref. \onlinecite{Barnum00}, defined as
\begin{equation}
\cEP=\Pi\circ\cE^\dagger\circ\cN,
\label{transpose}
\end{equation} 
where $P$ is the projector onto $\cP$, $\Pi(\cdot)=P\cdot P$ is the
projection onto $\cP$, $\cE^\dagger$ is the adjoint map of $\cE$, and
$\cN$ is a normalization map
$\cN(\cdot)=\cE(P)^{-1/2}(\cdot)\cE(P)^{-1/2}$.
  
This theorem has two consequences.  First, it strongly suggests that
Definition \ref{def:pres} captures the critical notions of information
preservation.  Second, it implies a simple corollary: Every preserved
code for $\cE$ is noiseless for some other map $\cR\circ\cE$.  This
connection from preserved to noiseless codes is a step toward proving
Proposition \ref{prop:Preservedness}.  Even more importantly, it will
let us derive a structure theorem for preserved codes.  To do so, we
need another result.

\begin{numlemma}{\ref{lem:NSFixed}}
\LEMNSFIXED
\end{numlemma}

This means that noiseless and fixed codes are geometrically 
almost the same.  A noiseless code does not have to be precisely
fixed, but it will always be isometric to a fixed code -- that is, it
will have the same shape.  A simple example may be in order.

\begin{example}
\label{ex:NSnotFixed}
Let $\cE$ be a channel on two qubits, labeled $A$ and $B$, that does
nothing to $A$ but depolarizes $B$:
$$\cE(\rho_{AB}) = \Tr_B(\rho_{AB})\otimes\frac{{\Id}_B}{2}.$$
Qubit $A$ clearly is a NS under $\cE$, whose fixed states are of the
form $\cC_{NS}=\rho_A\otimes\left(\frac{\Id}{2}\right)_B$. However,
there are other noiseless codes.  For instance, let $\cC$ comprise all
states of the form $\rho_A\otimes\proj{0}_B$.  Qubit $B$ carries no
information, so $\cE$'s action on it is irrelevant.  None of $\cC$'s
distinguishability properties are affected by $\cE$, even though $\cC$
is not actually fixed.  Note, however, that $\cC$'s image $\cE(\cC)$
\emph{is} a fixed code.  Repeated applications of $\cE$ map its
noiseless codes to fixed codes.
\end{example}

Lemma \ref{lem:NSFixed} implies that a channel has a unique maximum (largest)
noiseless code, and that the latter must be isometric to the set of
\emph{all} fixed states:

\begin{numcorollary}{\ref{cor:NSMax}}
\CORNSMAX
\end{numcorollary}

A channel can have smaller noiseless codes -- even maximal ones.
Consider the following example:

\begin{example}
Let $\cE$ be a channel on two qubits, labeled $A$ and $B$, acting as
follows: It measures $B$ in the $\{\ket{0},\ket{1}\}$ basis;
conditional on $\proj{0}$ it does nothing; conditional on $\proj{1}$,
it dephases $A$ and flips $B$ to the $\ket{0}$ state.  Every state of
the form $\rho_A\otimes\proj{0}_B$ is a fixed point, and so the
largest noiseless code encodes a single qubit in $A$, like in Example
\ref{ex:NSnotFixed}.  However, there is another maximal noiseless code
comprising all states of the form
$(p\proj{0}+(1-p)\proj{1})_A\otimes\proj{1}_B$.  It is isometric to a
strict subset of the fixed points, so it is not a maximum code.
\end{example}

Recall that any preserved code can be made noiseless, by applying a
suitable recovery map (Thm. \ref{thm:PresCorr}).  By combining this
theorem with the corollary to Lemma \ref{lem:NSFixed}, we establish a
direct connection between arbitrary preserved codes and fixed states
of CPTP maps.

\begin{numtheorem}{\ref{thm:PresFixed}}
\THMPRESFIXED
\end{numtheorem}
\begin{proof}
This follows from combining Lemma \ref{lem:NSFixed} with Theorem
\ref{thm:PresCorr} and Definition \ref{def:correctable}.
\end{proof}

This points the way to the structure theorem we are looking for,
provided that we can say something about the fixed points of the
unknown CPTP map $\cR\circ\cE$. Quite a bit is known about fixed
points of CPTP maps.  In particular, if $\cH$ is finite-dimensional,
and the map is \emph{unital} (meaning that it preserves the identity
operator), then its fixed points form a matrix algebra
\cite{AGG02a,KribsPEMS03}.

A matrix algebra (a.k.a. finite-dimensional $C^*$-algebra) is a vector
space of complex matrices, closed under multiplication and Hermitian
conjugation.  It follows that
\begin{enumerate}
\item The matrices must be square (otherwise they cannot be multiplied);
\item The set of \emph{all} $d\times d$ complex matrices ({\em i.e.},
operators on a $d$-dimensional Hilbert space $\cH$) is an algebra,
denoted $\cM_d$ or $\cM_{\cH}$ henceforth;
\item The set containing only the $d\times d$ identity matrix is an
algebra, denoted ${\Id}_d$ or ${\Id}_{\cH}$.
\end{enumerate}
Happily, these three simple facts are sufficient to describe
\emph{any} matrix algebra.  The structure theorem \cite{Dav96a} for
matrix algebras states that any such matrix algebra $\cA$ is unitarily
equivalent to the canonical form:
\begin{equation}
\label{decomp}
\cA\simeq\bigoplus_k\cM_{A_k}\otimes{\Id}_{B_k},
\end{equation}
where $A_k$ and $B_k$ are complex vector spaces of dimension $d_k$ and
$n_k$, respectively.  We will refer to each of the subspaces
$A_k\otimes B_k$ in the direct sum labeled by $k$ as a ``$k$-sector''.
Each $k$-sector factors into a \emph{noiseless subsystem} (with
Hilbert space $A_k$) and a \emph{noise-full subsystem} (with Hilbert
space $B_k$)\footnote{Note that in the original definition of
\cite{KnillPRL00}, a decomposition of the form given in
Eq. (\ref{decomp}) is applied to the (associative) {\em error algebra}
as opposed to states, whereby the identification of the noiseless
factors with $B_k$.}  Thus, every matrix algebra is built up out of
the two simple components described in points 2. and 3. above (the
algebra of all $d\times d$ matrices, and the trivial algebra).

As remarked earlier, the fixed points of a unital map form an algebra.
Prior to this work (and the results anticipated in \cite{BNPV}), no
such result was known for\emph{arbitrary} non-unital maps.
Before stating our main structure theorem, we need to define a couple
of terms.

\begin{definition}
\label{def:Distortion}
Consider a matrix algebra $\cA = \bigoplus_k\cM_{A_k}\otimes{\Id}_{B_k}$,
which induces a Hilbert space decomposition $\cH =
\bigoplus_k{A_k\otimes B_k}$.  A \textbf{distortion map} for $\cA$ is
a CPTP map $\cD$ such that, for every $X =
\sum_k{M_{A_k}\otimes{\Id}_{B_k}}$ in $\cA$,
$$\cD(X) = \sum_k{M_{A_k}\otimes\tau_{k}},$$ 
\noindent 
where $\tau_{k}$ is a positive semidefinite matrix on $B_k$ that does
not depend on $M_{A_k}$.  $\cD(\cA)$ is a \textbf{distortion} of
$\cA$.  A vector space of matrices $\tilde{A}$ is a \textbf{distorted
algebra} if it is a distortion of some matrix algebra $\cA$.
\end{definition}
A distorted algebra is simply an algebra in which each identity factor
has been replaced with an arbitrary (but fixed) matrix $\tau_k$.  A
distorted algebra is not an algebra under standard matrix
multiplication (because $\tau_k^2 \neq \tau_k$), although it is under
a suitably redefined matrix multiplication.  More importantly, there
exist CP distortion maps that reversibly transform
$\tilde{\cA}\leftrightarrow\cA$, simply by changing the $\tau_k$
factors.  Thus, $\tilde{\cA}$ and $\cA$ are isometric.

We can now characterize the fixed points of an arbitrary CPTP map
\emph{and} its adjoint (that is, fixed states \emph{and} observables): 

\begin{numtheorem}{\ref{thm:FixedPtThm}}
\THMFIXEDPTTHM
\end{numtheorem}

While Theorem \ref{thm:FixedPtThm}
is somewhat intimidating (we shall use all of its pieces in Section
\ref{sec:App}), the payoff for its complexity is that it consistently
unifies the Schr{\"o}dinger and Heisenberg pictures of information
preservation (see also Refs. \onlinecite{KnillPRL00,NSqubit,Beny}).  The
Schr{\"o}dinger approach involves looking at the fixed states in
$\FixE$.  The Heisenberg approach, on the other hand, emphasizes
\emph{observables} of the system, which evolve according to
$\cE^\dagger$ (since expectation values evolve as
$\Tr\{X\cE(\rho)\}=\Tr\{\cE^\dagger(X)\rho\}$). Fixed states of $\cE$
in the Schr{\"o}dinger picture translate to fixed observables of
$\cE^\dagger$ in the Heisenberg picture.  Theorem \ref{thm:FixedPtThm} shows that
{\em both} such fixed sets are isometric to the \emph{same} matrix
algebra $\cA$.  This algebra determines the structure of preserved
codes, so the two pictures (interpreted correctly) yield equivalent
characterizations of preserved information.

Some of the results in Theorem \ref{thm:FixedPtThm} were proved
previously, in different (though related) contexts.  Our
characterization of $\FixEdag$ [parts (ii) and (iv)] follows, in
particular, from a classic operator algebra paper by Choi and Effros
\cite{Choi77}.  Their results are substantially more abstract and less
constructive, but Kuperberg subsequently applied them to quantum
information (see Ref. \onlinecite{KuperbergIEEE03}, Theorems 2.2 and 2.3).
The proofs given here are self-contained (and perhaps more accessible
to physicists). 

The fact that an arbitrary CPTP map's fixed points are isometric to a
matrix algebra, together with Theorem \ref{thm:PresFixed}, nails down the
structure of \emph{every} preserved code.  If $\cC$ is a preserved
code for a channel $\cE$, then it is isometric ({\em i.e.}, rigidly
equivalent) to a matrix algebra.  Furthermore, $\cE$'s fixed points are a
subspace of matrices that looks very much like an algebra --
\emph{except} that each of the identity factors
${\Id}_{B_k}$ has been replaced by some fixed matrix $\tau_k$.

While the domain of $\cE$ contains all operators on $\cH$, its
physical significance comes from its action on positive semidefinite
states.  Given any algebra $\cA$ in the canonical form of
Eq. (\ref{decomp}), we can easily identify the set $\cA_+$ of
positive states in $\cA$: $\cA_+$ contains states of the form
$\sum_k{p_k\rho_k\otimes\left(\frac{{\Id}_{B_k}}{n_k}\right)}$,
where the $\{p_k\}$ form a probability distribution, and the
$\{\rho_k\}$ are arbitrary states on the noiseless factors.

$\cE$'s fixed states ($\FixE_+$) form a very similar set, comprising
states of the form $\sum_k{p_k\rho_k\otimes\tau_k}$, where the $\{p_k\}$
and $\{\rho_k\}$ are probabilities and arbitrary states as above, and the
$\tau_k$ are {\em fixed} density matrices determined by $\cE$.  Any set of
fixed states is a fixed code for $\cE$, and $\FixE_+$ is the unique
largest fixed code.  Lemma \ref{lem:NSFixed} implies a relationship between
noiseless and fixed codes, from which it follows that:

\begin{numlemma}{\ref{lem:NSStruct}}
\LEMNSSTRUCT
\end{numlemma}

Note that the lemma is only proved for channels with a \emph{full-rank
fixed point}.  We believe that a similar result can be proved for
arbitrary channels, but there are some tricky details that obscure the
main point.  We only need to apply this result to channels of the form
$\RP\circ\cE$, with $\RP$ defined in Eq. (\ref{transpose}).  Each such
channel, from $\cB(\cP)\to\cB(\cP)$, is actually unital (since $\RP
\circ \cE (P)=P$), so it has a full-rank fixed point, and Lemma
\ref{lem:NSStruct} is sufficient to characterize its noiseless codes:
They are isometric to the channel's fixed points, and those have
algebraic structure.

So while a channel $\cE$ typically has a lot of noiseless codes,
they turn out to be trivial variations on a constant theme.  The
variation is a \emph{gauge} -- a particular state $\mu_{B_k}$ for each
of the noise-full subsystems.  The actual information is carried by the
variation in the code states, which differ only on the noiseless factors
$A_k$, and in the weights $p_k$ assigned to the different $k$-sectors.
This suggests an obvious way to turn noiseless codes into fixed codes,
simply by adjusting the state of the noise-full subsystems.  Thus, we can
finally justify Proposition \ref{prop:Preservedness} with the following
corollary to Lemma \ref{lem:NSStruct}.

\begin{numcorollary}{\ref{cor:FPCond}}
For every maximum preserved code $\cC$, there exists a CPTP map $\cR$
such that $\cR\circ\cE(\rho)=\rho$ for all states $\rho\in\cC$.
\end{numcorollary}

We have finally proved the central proposition of the previous
section, justifying our definition of ``preserved''.  If and only if a
code satisfies Definition \ref{def:pres}, there exists a recovery
operation that makes it into a fixed code, which is clearly preserved
in the strongest possible sense.  However, this depends on Bob's
ability to apply the necessary recovery immediately after $\cE$
happens!  Section \ref{sec:Types} considers the effect of placing
operational restrictions on what Bob can do, and how this can change
the criteria for preservation.

We note in passing that the framework presented by Kuperberg in
\cite{KuperbergIEEE03} is similar and uses much of the same
mathematics.  However, it only addressed noiseless and unitarily
noiseless information (a.k.a. \emph{infinite-distance} codes), not
correctable information, or the relationship between preservation
and correctability.

\subsection{IPSs: The structures that underly preserved codes}

Taken together, the results we have presented thus far indicate a {\em
rigid algebraic structure} lurking within each CPTP map $\cE$, which
constrains the shape of its preserved and noiseless codes.  The codes
themselves are not the structure, however.  There are many noiseless
codes, all distortions of the same algebra.  What matters is
their shared structure.  In fact, all these noiseless codes are
manifestations of a unique noiseless IPS underlying the channel, which
we turn to explore next.  We begin with an example. 

\begin{example}
\label{ex:ManyCodesOneIPS}
Consider the two-qubit channel of Example \ref{ex:NSnotFixed}, which
depolarizes qubit $B$.  There is an infinite family of maximum
noiseless codes for this channel: If $\tau_B$ is a valid state for
$B$, then $\cC_\tau \equiv \{\rho_A\otimes\tau_B\ \forall\ \rho_A\}$
is a noiseless code.  While distinct, these noiseless codes are all
equivalent, and share the same recovery operation, $\cR ={\Id}$.
Thus, they are all manifestations of the same noiseless IPS.
\end{example}

This example demonstrates a noiseless IPS, but a channel can also have
correctable codes that are not noiseless.  However, these codes are
noiseless for the appropriate $\cR\circ\cE$, so the preserved codes
with a common recovery $\cR$ also share a common structure.  A channel
can have multiple preserved IPSs.  In a way, each IPS is akin to a
hole in the wall of noise, through which information can (if properly
aimed) pass unscathed.  The preserved codes reflect this structure,
but their diversity can also obscure it.  If we can concisely describe
a channel's IPSs, we have (for all practical purposes) completely
classified its preserved codes.

Let us define ``information-preserving structure'' more precisely.
Every maximum preserved code is isometric to an algebra, and preserved
codes isometric to the same algebra are essentially trivial variations
on a theme.  They are manifestations of the same underlying IPS.


\begin{definition}
An \textbf{information-preserving structure} for a CPTP map $\cE$ is
an equivalence class of maximum preserved codes for $\cE$.  Two codes
are equivalent if they are isometric to the same algebra, and are
preserved according to the same operational criterion ({\em e.g.},
Definition \ref{def:pres}, Definition \ref{def:noiseless}, or one of
the other operational criteria in Section \ref{sec:Types}) with the
same recovery operation.
\end{definition}

The IPS is {\em not} itself an algebra.  Rather, an IPS is an abstract
structure (an equivalence class of codes), whose properties are
defined by an associated algebra.  It is possible for a channel to
have two distinct IPS with the same (isomorphic) algebra.

By looking at the structure theorem for matrix algebras
(Eq. \ref{decomp}), we can interpret any given IPS.  It consists of
one or more $k$-sectors, each of which contains a noiseless subsystem
supported on $A_k$ and a noise-full subsystem supported on $B_k$. Any
information encoded into the $A_k$ factors will be preserved by $\cE$,
whereas any information encoded into the $B_k$ factors is irreparably
damaged.  The information-carrying capability of a code is determined
entirely by its underlying IPS; distinct codes that share an IPS are
equivalent, carrying the same kind and amount of information.

\begin{example}
\label{ex:MultipleIPS}
Consider a classical stochastic map on four symbols, $\{0,1,2,3\}$,
which maps each input symbol to a mixture of output symbols as follows
$$ 0\to\{0,1\},\:\; 1\to\{2,3\}, \:\;2\to\{0,2\},\:\; 3\to\{1,3\}.$$
\noindent 
There are exactly two maximal preserved codes for this channel, both
of which are actually noiseless: $\{0,1\}$ and $\{2,3\}$.  They are
equivalent, and both described by the same (commutative) algebra -- but
this is merely a coincidence.  The two codes occupy disjoint subspaces
of the input, they both get mapped to output states which span the
entire output space in different ways, they have entirely different
recovery maps, and by changing the channel slightly, we can easily
eliminate either code without affecting the other.  They are thus not
manifestations of the same IPS.
\end{example}

To make use of an IPS, Alice and Bob use any of the equivalent codes
associated with that IPS.  Each of these codes is isometric to the
IPS's algebra, so the structure of that algebra tells us everything
about its information-carrying capability.  Since the algebra can be
decomposed according to Eq. \eqref{decomp},
\begin{equation*}
\cA\simeq\bigoplus_k\cM_{A_k}\otimes{\Id}_{B_k},
\end{equation*}
we can represent it concisely by its \emph{shape}: the vector
$\{d_1,d_2,\ldots,d_n\}$ listing the dimensions of the
information-carrying factors $\cH_{A_k}$ (the noise-full factors are
irrelevant). Pictorially: 

\begin{center}
\includegraphics[width=2.5in]{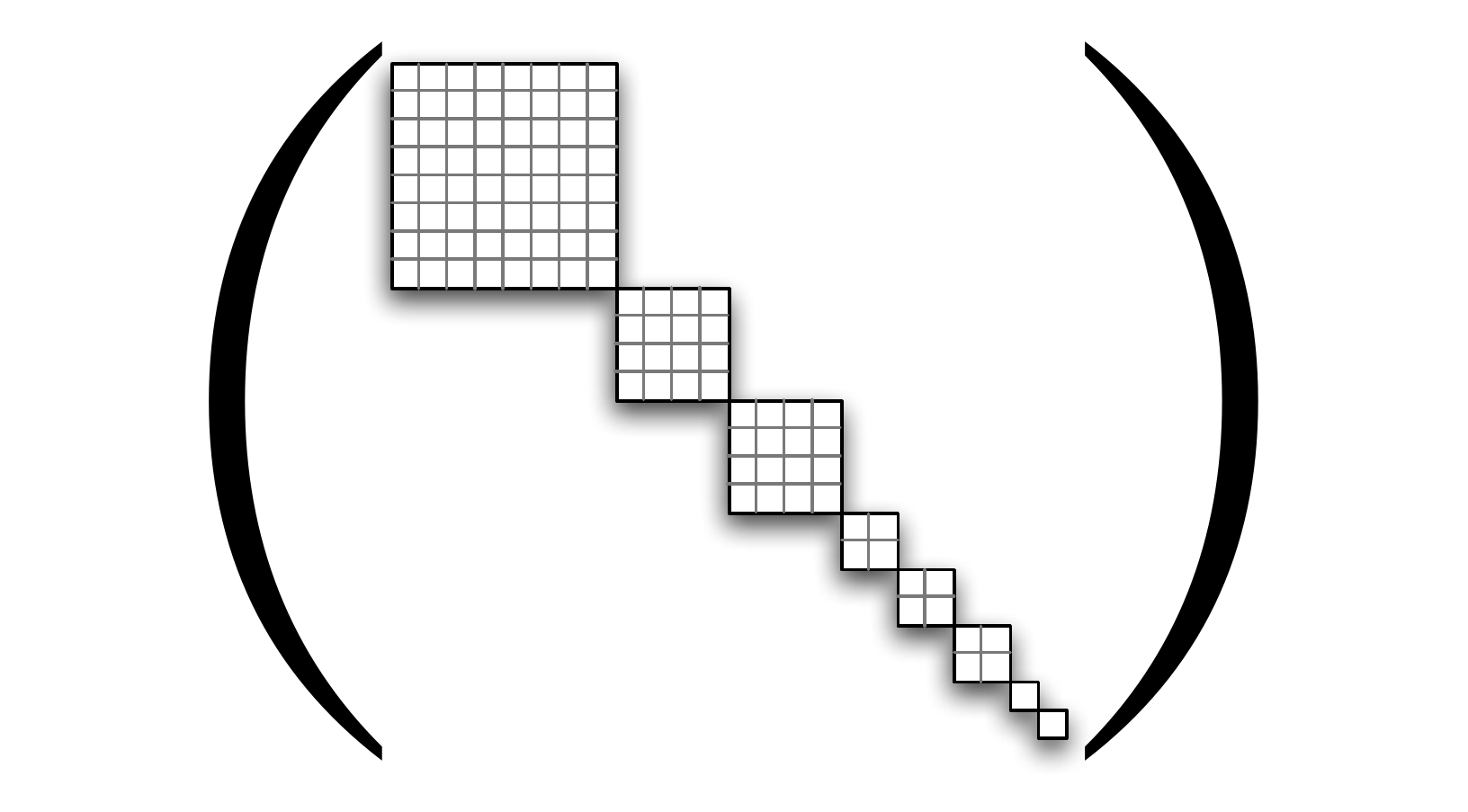}
\end{center}

The IPS shape characterizes the type and amount of information an IPS
can carry.  A $k$-sector with a $\cH_{A_k}$ factor of dimension
$d_k>1$ can carry quantum information.  Classical information is
carried by the choice between the different $k$-sectors.  Kuperberg,
in Ref. \onlinecite{KuperbergIEEE03}, described such a noiseless IPS as a
\emph{hybrid quantum memory}, capable of simultaneously storing or
transmitting a certain amount of quantum information and a certain
amount of classical information.  The IPS shape provides a very
concise way of describing the noise-free degrees of freedom within a
given system's Hilbert space -- much more convenient than listing the
$d^4$ real parameters required to specify a quantum process on a
$d$-dimensional Hilbert space!

From a physical standpoint, algebraic structure imposes a very strong
constraint on the types of information that a quantum process can
preserve.  {\em A priori}, we might suppose that any subspace of
$\cB(\cH)$ could be ``superselected'' by some process, however the
theorems proved above rule out most such possibilities.

\begin{example}
\label{ex:NoPancake}
Consider a single qubit, with $\cH={\mathbb C}^2$. Its dynamics will
be described by some CPTP map (or family of them).  These dynamics
destroy some information while preserving other information,
a.k.a. \emph{dynamical superselection}.  Although there are infinitely
many different kinds of dynamics, there are only three possible IPSs.
The dynamics can preserve the full qubit algebra $\cM_2$; or a
classical bit, represented (up to unitaries) by the algebra
$span\{{\Id},\sigma_z\}$; or nothing, represented by the trivial
algebra $\{ {\Id}\}$.  In particular, there are \emph{no}
CP maps that single out a \emph{rebit} (a mythical physical system
described by a 2-dimensional real Hilbert space).  This would
correspond to preserving information on some equatorial plane of the
Bloch sphere, spanned by $\sigma_x$ and $\sigma_y$, while annihilating
information about $\sigma_z$.  But $span\{\sigma_x,\sigma_y\}$ is not
a closed algebra, for $\sigma_x$ and $\sigma_y$ generate the full
qubit algebra.  The fact that no CPTP map can annihilate $\sigma_z$
while preserving $\sigma_x$ and $\sigma_y$ is known, in quantum
information folklore, as the ``No-Pancake Theorem''.
\begin{center}
\includegraphics[height=35mm]{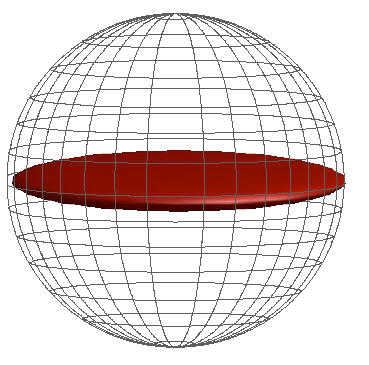}
\end{center}
Our central result might be thought of as a fully general No-Pancake
Theorem, since it rules out the dynamical superselection of all such
non-algebraic IPS.
\end{example}

We can safely talk about ``qu\emph{d}its'' of information within the
code, specified by the IPS shape.  Each qudit corresponds to a logical
subsystem -- a $d$-dimensional Hilbert space within the full Hilbert
space, which need not correspond to a physical subsystem but is
nonetheless an independent quantum degree of freedom.  Multiple qudits
in a direct sum represent a \emph{classical} degree of freedom, for
while the different terms in the direct sum correspond to perfectly
distinguishable states, superpositions across them are not preserved.
We can use these rules to exhaustively catalog all the possible
degrees of freedom (up to unitary rotations) within any given quantum
system.

\subsection{Different kinds of IPS}
\label{sub:kinds}

We identified the Weak Condition as the weakest reasonable condition
for information to be preserved. It ensures that Bob can \emph{in
principle} restore the system's initial state -- but, if Bob has
limited resources, then he may be unable to do so \emph{in practice}.
Still, Bob's resources may be sufficient to correct a code that
satisfies some stronger condition.  Each operational constraint on Bob
defines some condition on $\cC$ that is necessary and sufficient for
it to be ``preserved'' in this situation.

One important example has already appeared, \emph{noiseless}
information (Definition \ref{def:noiseless}).  Noiseless codes require
no correction at all, so noiselessness is a very strong condition.  In
Section \ref{sec:Types}, we will consider several other conditions.
Each such condition defines a distinct class of IPSs.  So amongst one
or more preserved IPSs a channel may support, one may also be
noiseless.  A channel's noiseless IPS is unique, because of its
relationship to the channel's fixed points (see also Section
\ref{sec:App} for further discussion of this point).

Most of the commonly studied techniques for information preservation
correspond either to a noiseless IPS, or to a preserved/correctable
IPS.  Three of the ``canonical'' structures that we mentioned in
Section \ref{sec:PresInfo} -- pointer bases, DFSs, and NSs --
correspond to noiseless IPS.  Pointer bases have the shape
$\{1,1,1\ldots\}$, describing a complete set of 1-dimensional
$k$-sectors (both $A_k$ and $B_k$ are trivial for all $k$).  A DFS has
the shape $\{d\}$, describing a single $k$-sector with a trivial
$\cH_{B_k}$.  A NS has the same shape $\{d\}$, but it corresponds to
the $A_k$ factor of a single $k$-sector with a nontrivial co-factor
${B_k}$.

The relationship between a NS defined in the traditional way as
discussed in Example \ref{ex:NS} and a noiseless IPS as defined in
\cite{BNPV} and in this paper, has some subtleties.  A noiseless IPS
rests upon a family of noiseless codes, or sets of states, whereas the
traditional definition of a NS makes no direct reference to sets of
states.  The correspondence between the two frameworks arises because
Eq. (\ref{defNS2}) is satisfied if and only if there exist noiseless
codes.  This does \emph{not} imply that Eq. (\ref{defNS2}) has
anything directly to do with noiseless codes!  In particular, a set of
states $\{\rho_{AB}\}$ satisfying Eq. (\ref{defNS2}) need {\em not} be
a noiseless code.

\begin{example}
Consider a bipartite system $AB$ with Hilbert space $\cH_{AB} =
\cH\otimes\cH$, a channel $\cE$ that depolarizes system $B$ but leaves
$A$ untouched, and the set of states given by $\cC =
\{\ket\psi\otimes\ket\psi\}$ for all $\ket{\psi}\in\cH$.  Since
$$\cE(\proj{\psi}\otimes\proj{\psi}) =
\proj{\psi}\otimes\frac{{\Id}}{\mathrm{dim}(\cH)},$$ 
\noindent 
$\cC$ satisfies Eq. (\ref{defNS2}).  (In fact, every state $\rho_{AB}$
satisfies Eq. (\ref{defNS2}).)  Nonetheless, $\cC$ is not noiseless.
Eq. (\ref{defNS2}) merely guarantees that a noiseless code will exist.
\end{example}

Error-correcting codes are built upon preserved IPSs.  Most QECCs are
subspace codes, so a code with a recovery operation $\cR$ is a DFS of
$\cR\circ\cE$.  While {\em every} subspace code is associated to an NS
of $\cR\circ\cE$ (as implied by Thm. 6 in \cite{KnillPRL00}), an
operator code (OQECC) is also an NS of $\cR\circ\cE$, for the same
$\cR$.  In each case, the code is built upon the noiseless IPS of
$\cR\circ\cE$, \emph{not} of $\cE$ itself.  In fact, $\cE$ may
have no noiseless IPS at all.  However, since these codes are
correctable for $\cE$, they are preserved by it, and so they are
associated with preserved IPSs of $\cE$.

\begin{example}
Consider a system of 5 qubits, and a channel $\cE$ that picks one
qubit at random and depolarizes it.  This is precisely the error model
for which the 5-qubit QECC was developed
\cite{BennettPRA96,LaflammePRL96}, so $\cE$ has a 1-qubit preserved
IPS.  However, it has no noiseless codes at all, because repeatedly
applying $\cE$ will eventually depolarize all five qubits with high
probability.
\end{example}

Example \ref{ex:MultipleIPS} demonstrates that a channel can have more
than one preserved IPS.  Each is a noiseless IPS for some
$\cR\circ\cE$ (a consequence of Theorem \ref{thm:PresCorr}), and may
be associated with many preserved codes, all of which are corrected by
the same $\cR$.  We would like to have a procedure for listing, or at
least counting, all the IPS for a given channel -- but unfortunately
we do not know how to do this.

What we \emph{can} say (from Theorem \ref{thm:PresCorr}) is that
$\cE$'s IPSs comprise all the noiseless IPSs of $\cR\circ\cE$ for
\emph{all} CPTP maps $\cR$.  A simpler and stronger characterization
follows from the structure of the proof.  The correction operation for
a code depends only on the code's support, so every code with the same
support will be corrected by the same operation.  This yields a
simpler description: $\cE$'s IPSs comprise all the noiseless IPSs of
$\cEP\circ\cE$ for all subspaces $\cP\subseteq\cH$.

While this suggests a way of searching for IPSs (just try every
subspace, one at a time), there are uncountably many subspaces to
search (see \cite{ChoiPRL06}).  It may be possible to reduce this
problem to searching a countable, even finite set.  Unfortunately, it
is \emph{not} possible to do so efficiently.  Just finding the largest
classical code for an arbitrary channel is NP-hard, so listing all its
preserved IPS is at least this hard.  More precisely, let the size of
an IPS be measured by the total number of perfectly distinguishable
states in one of its preserved codes. Then we have the following:

\begin{numlemma}{\ref{lem:FindingIPSisHard}}
\LEMFINDINGIPSISHARD
\end{numlemma}

\section{Operational constraints and preserved codes}
\label{sec:Types}

Our focus thus far has been on a single notion of preservation.  We
assumed that Alice and Bob were unlimited in their actions (within the
laws of physics), and ended up with a preservation condition that
depended only on whether $\cE$ actually destroyed some of the
information.  In this section, we will relax this focus, and consider
the effect of restrictions on the sender and receiver.  Bob may not
want to correct the channel constantly, or he may not know how many
times $\cE$ has been applied.  Alice may have a faulty encoder -- or
perhaps she is not even cooperative.  Operational constraints of this
sort lead to alternative conditions for preservation.  We shall
discuss some of the most useful and interesting operational
constraints, and the corresponding types of IPS.


\subsection{Infinite-distance IPSs}

Suppose we want to store information in a physical system for a time
$T>0$, during which $\cE$ will be applied $n$ times.  Further, we
cannot perform any active operations on the system during this period.
Then the information carried by a code $\cC$ remains intact only if
$\cC$ is preserved by the channel $\cE^n$. If $T$ (or $n$) is unknown
in advance, $\cC$ has to be preserved by all possible powers of $\cE$.
One example of a channel for which this holds is a unitary channel:
\begin{equation}
\cE(\cdot)=U(\cdot)U^\dagger,
\end{equation}
for some unitary $U$.  A unitary channel adds no noise at all; it just
rotates the code around, and the actual rotation depends on how many
times it is applied.  As long as we know how many times $U$ has been
applied, we can recover \emph{any} initial state by applying $U^{-n}$.

This kind of behavior can be found even in channels that are not
purely unitary:

\begin{example}
\label{ex:UNS}
Consider a channel on two qubits, labeled $A$ and $B$, which applies a
unitary $U$ to qubit $A$ and depolarizes qubit $B$.  The channel is
not unitary, for it adds entropy to any pure state -- but nonetheless,
it acts unitarily on qubit $A$.  The code $\cC =
\left\{\rho_A\otimes\left(\frac{{\Id}_B}{2}\right)\ \forall\
\rho_A\right\}$ is preserved by any number of applications of $\cE$.
\end{example}

We shall refer to a code that remains preserved no matter how many
times $\cE$ is applied as \emph{unitarily noiseless} under
$\cE$. Formally, we define a unitarily noiseless code as in \onlinecite{BNPV}:

\begin{definition}
A code $\cC$ is \textbf{unitarily noiseless} under a CPTP $\cE$ if and
only if it is preserved by $\cE^n$ for any $n\in\mathbb{N}$.
\end{definition}

\noindent 
Notice that to retrieve the information stored in a unitarily
noiseless code, we need to know the value of $n$ or, equivalently the
length of time $T$, in order to construct the appropriate Helstrom
measurement.  In the previous example, if we lose track of $n$, then
qubit $A$ will get dephased in the diagonal basis of $U$.  Ensuring
that unitarily noiseless codes are preserved indefinitely requires a
good clock.

Are there codes for which we do not even need a clock? Certainly --
for instance, a code containing fixed states of $\cE$.  Such a code is
fixed not only by $\cE$, but also by $\cE^n$ for any $n$, and by any
convex combination $\sum_n{q_n\cE^n}$ (where $\{q_n\}$ is a probability
distribution).  So someone ignorant of $n$ can describe the process
by a mixture of different $\cE^n$, and information in a fixed code is
still preserved!  Moreover, only the information-carrying part of the
code needs to be invariant under repeated applications, which is the
operational motivation for noiseless codes (Definition \ref{def:noiseless}).

Noiseless and unitarily noiseless information are preserved
indefinitely.  No matter how many times $\cE$ is applied, we can still
distinguish code states.  In classical information theory, the number
of errors ({\em i.e.}, bit flips) required to transform one code word
into another is called the \emph{distance} of the code.  Under the
more general definition of distance introduced by Knill {\em et al.}
\cite{KnillPRL00} (based on defining a single application of $\cE$ as
an ``error''), noiseless and unitarily noiseless codes are
\emph{infinite-distance codes}, with respect to the noise model
defined by $\cE$.  Each infinite-distance code is a manifestation of
an underlying noiseless or unitarily noiseless IPS.  Infinite-distance
IPSs may be viewed as degrees of freedom into which $\cE$ introduces
no entropy at all, transforming them reversibly (if at all).  We do
not have to pump entropy out of infinite-distance IPS, and so no
active error correction is required.  For this reason, these have also
been called \emph{passive} error-correcting codes.

\subsection{Constraints on the recovery operation}

Suppose that we \emph{can} do something to the system \emph{in
between} applications of $\cE$. This is crucial whenever the channel
preserves information, but maps it to a part of the Hilbert space that
is unprotected against further applications of $\cE$.  Now we must
intervene, applying active correction to move our precious information
back into protected sectors, and ensure its continued survival.  If we
can do absolutely anything, then we can correct any preserved code
(thanks to Theorem \ref{thm:PresCorr}).  In practice, however, we may
only be able to do certain operations.  Any CPTP map can be decomposed
into (i) a POVM measurement, followed by (ii) a conditional unitary
that depends on the outcome of the POVM.  This decomposition suggests
two natural restrictions on $\cR$: It can consist only of a
measurement, or it can be completely unitary.

\subsubsection{Measurement-stabilized codes}

If unitary operations are costly or noisy, but measurements can be
performed relatively quickly, the only ``corrections'' that we can
perform effectively are pure measurements.  For our purposes
\footnote{This careful definition may seem pedantic.  However,
``measurements'' are sometimes defined very generally, with an update
rule involving \emph{any} square root of the effect $E_m$.  This
trivializes our distinction between measurements and arbitrary
CP-maps.  The convention we adopt here is known as \emph{L\"uder's
Rule}, and defines the unique minimally disturbing (and maximally
repeatable) implementation of a given measurement.}, a
measurement is a POVM defined by a set of effects,
\begin{equation*}
\cM = \{E_m\},\;\;\mathrm{\ where\ }\sum_m{E_m = {\Id}}.
\end{equation*}
The outcome of such measurement is a particular value of $m$, with
probability $Pr(m) = \Tr(E_m\rho)$, and a post-measurement state 
$$\rho \to E_m^{\frac12}\rho E_m^{\frac12},$$, 
\noindent 
where $E_m^{\frac12}$ is the unique positive semidefinite square root
of $E_m$.

Can we use measurements to correct noise?  At first, it seems
implausible -- after all, while a measurement provides information, it
actually does not \emph{do} anything.  However, the existence of
unitarily noiseless codes shows that passive information gain, such as
knowing how many times $\cE$ has been applied, can be useful.  This
motivates a definition of \emph{measurement-stabilized codes}, whose
information is preserved indefinitely provided that a measurement is
performed after every application of the channel:

\begin{definition}
\label{def:MS}
A code $\cC$ is \textbf{measurement-stabilized} for a CPTP map $\cE$
if there exists a measurement $\cM=\{E_m\}$ such that, conditional on any
outcome $m$, $\cC$ is unitarily noiseless for $\cM\circ\cE$.
\end{definition}

Stabilizer codes for Pauli channels \cite{NielsenBook00} are an
example of measurement-stabilized codes.  Stabilizer codes divide the
system into two degrees of freedom, the code and the syndrome.
Measuring the syndrome ``collapses'' the error, revealing which Pauli
unitary transformed the information-carrying subsystem.  In the usual
paradigm, we would undo this unitary -- but this is not actually
necessary, as long as we keep track of the current ``Pauli frame''
\cite{Knill05} by recording the results of each syndrome measurement
as the system evolves.

The key to reconciling the behavior of stabilizer codes with
Definition \ref{def:MS} is conditioning on the syndrome measurements.
Since each syndrome measurement collapses the syndrome subsystem into
a particular basis state, we can see the overall system's dynamics,
conditional on the measurement record, as a rather strange
time-dependent unitary evolution: At each time step, the code subspace
gets transformed by some Pauli operator $P_l$, and the syndrome state
jumps from $\ket{k}\to\ket{k+l}$.  Since the code evolves unitarily at
every step, it is unitarily noiseless, and the information in it can
be recovered at any time.

At first glance, this may seem trivial, for as we observed above,
\emph{any} correction operation $\cR$ can be written as a measurement
followed by a conditional unitary.  So, given a generic correctable
code, couldn't we just do the measurement, skip the conditional
unitary, and keep track of which unitary we did not do?  This does not
work in general, because $\cE$ may have moved the code to a different
subspace which is not, itself, a code.  Stabilizer codes can be
measurement-stabilized because they actually comprise a large set of
preserved codes, and (conditional on the syndrome measurement) the
channel merely permutes the codes while transforming them unitarily.
It is an open question whether all measurement-stabilized codes are of
this form (that is, a large set of isomorphic codes, indexed by a
syndrome), or if the above definition permits other structures.

\subsubsection{Unitarily correctable codes}

In some systems, we have the opposite situation: Measurements are slow
and/or hard, while unitary evolution is fast and relatively easy
(liquid-state NMR quantum computation is an extreme example; most
solid-state architectures also fall into this category).  Now we can
only apply unitary gates after each application of $\cE$.  The authors
of Ref. \onlinecite{OQECC} considered this situation, and demanded that
there exist a unitary matrix $U$ on $\cH = (\cH_A\otimes \cH_B) \oplus
\cH_C$ such that $\tr_B \{U \cE(\rho_{AB})U^\dagger \} = \tr_B
\rho_{AB}$ for all $\rho_{AB} \in \cB(\cH_{A}\otimes\cH_B)$.  The $A$
subsystem is a \emph{unitarily correctable}\footnote{The authors of
\cite{OQECC} called this ``unitarily noiseless'', but we believe the
term ``unitarily correctable'' is more appropriate.} subsystem (see
also \cite{KS06a}).

\begin{definition} 
\label{def:UC}
A code $\cC$ is \textbf{unitarily correctable} for a channel $\cE$ if
there exists a unitary correction map $\cU(\cdot)= U\cdot U^\dagger$,
for some unitary operator $U$, so that $\cC$ is noiseless for
$\cU\circ\cE$.
\end{definition}

Unitarily correctable codes are interesting in part because $\cE$ does
not inject entropy into the code states\footnote{Actually, it is
slightly more technical than this: Given any unitary correctable code,
there is another code associated with the same unitarily correctable
IPS, into which $\cE$ does not inject any entropy.  This is directly
related to the fact that a code can be noiseless without being fixed
-- in both cases, repeated application of $\cE$, or $\cU\circ\cE$,
causes the code to converge toward a fixed code, whose entropy does
not increase thereafter.}.  If it did, the error could not be
corrected by a unitary operation.  Kribs and Spekkens considered
unitarily correctable codes in some detail in Ref. \onlinecite{KS06a}, and
noted that while any preserved code is ``unitarily recoverable'' --
{\em i.e.}, there is a unitary that puts the information back into the
subsystem where it originated -- this need not suffice to correct the
errors, and cooling may be required to protect the information against
subsequent iterations of the noise.  Sufficient conditions for unitary
correctability have likewise been directly derived from the structure
of $1$-isometric encodings \cite{Ticozzi2010} (see, in particular,
Prop. 1 therein).

\begin{example}
Consider two qubits labeled $A$ and $B$, and let $\cE$ act as follows:
$B$ is measured in the $\{\ket{0},\ket{1}\}$ basis; If the result was
``1'', then $A$ is depolarized.  Finally, $B$ is depolarized.
The code $\cC = \{\rho_A\otimes\proj{0}_B\ \forall\ \rho_A\}$ is
preserved.  It is unitarily recoverable -- in fact, no recovery is
necessary because the information remains in the $A$ subsystem.  It is
{\em not} unitarily correctable, however, because unless $B$ is cooled
to the $\ket{0}$ state, $\cE$'s next iteration may damage the
information.
\end{example}

Kribs and Spekkens also pointed out that, under certain circumstances,
unitarily correctable codes can be found efficiently.  This
observation is closely related to our next topic.

\subsection{Unconditionally preserved information}

If a code $\cC$ is preserved, then Bob can distinguish between states
in $\cC$ (and their convex combinations) just as well as Alice.  So if
we want to know ``Was the system prepared in $\ket{\psi}\in\cC$?'',
Bob can answer
just as well as Alice could have, by discriminating $\proj{\psi}$ from
a convex combination of all states orthogonal to $\ket{\psi}$.  What
he cannot do is determine whether the initial state was in $\cC$.
Information is preserved \emph{conditional} on the system being
prepared in $\cC$, as illustrated by the following example.

\begin{example} 
\label{ex:UCP1}
Let $\cE$ be the following [effectively classical] channel from a
$d$-dimensional system to itself.  On the subspace $\cH_{d-1}$ spanned
by $\{\ket{0}\ldots\ket{d-2}\}$, $\cE$ acts as the identity channel.
However, $\ket{d-1}$ is decohered and mapped to the maximally mixed
state $\frac{1}{d}{\Id}$.

The code comprising all states on $\cH_{d-1}$ is preserved, so Bob can
distinguish between $\ket{0}$ and any convex combination of
$\ket{1}\ldots\ket{d-2}$.  If the input state was supported on
$\cH_{d-1}$, Bob can determine whether $\ket{0}$ was prepared.
Without this promise, however, \emph{any} measurement result on the
output is consistent with the input state $\ket{d-1}$.
\end{example}



Sometimes, a channel preserves some properties of the input state
\emph{irrespective} of what it is.  For instance, if $\cE$ is the
identity channel, then Bob can make any measurement that Alice can.
His conclusions from those measurements do not depend on any prior
information about the input.  The following example is less trivial.

\begin{example}
\label{ex:UnconditionalPreservation}
Consider the classical channel whose action is pictorially shown below:
\begin{center}
\includegraphics[width=35mm]{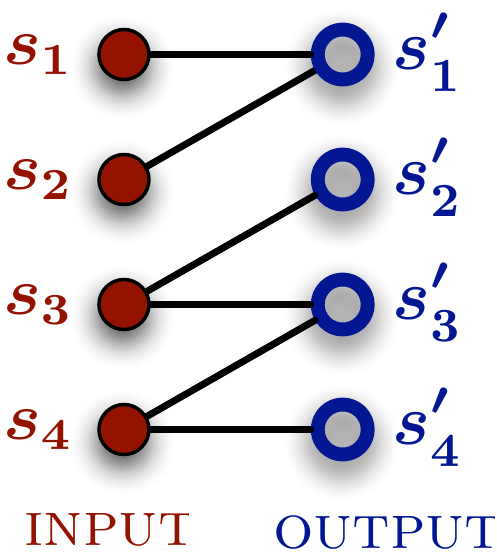}
\end{center}
which corresponds to a stochastic map of the form
$$\cE = \begin{pmatrix} 1&1&0&0 \\ 0&0&\frac12&0 \\
0&0&\frac12&\frac12 \\ 0&0&0&\frac12 \end{pmatrix}.$$
\noindent 
Bob can measure $\{1'\}$ vs. $\{2',3',4'\}$, and from the result infer
exactly what Alice would have gotten had she measured $\{1,2\}$
vs. $\{3,4\}$.  So this property of the input state is unconditionally
preserved: No matter what the input state was, Bob can determine
whether it was in $\{1,2\}$ or not.  Note that unconditional
preservation need {\em not} be related to noiselessness -- applying
this channel twice ruins the information.
\end{example}

This illustrates \emph{unconditionally preserved information}.  The
most natural way to define unconditional preservation is not in terms
of states or codes, however, but rather in terms of measurements.

\begin{definition}
Let $\cE:\cB(\cH)\to\cB(\cH')$ be a channel, and $\cM = \{P_1\ldots
P_n\}$ a projective measurement on Hilbert space $\cH$ (so
$\sum_k{P_k}={\Id}$).  Then $\cM$ is \textbf{unconditionally
preserved} by $\cE$ if and only if there exists another measurement
$\cM' = \{Q_1\ldots Q_n\}$ on $\cH'$ such that $\cM'$ simulates $\cM$:
that is, $\Tr[P_k\rho] = \Tr[Q_k\cE(\rho)]$ for all density matrices
$\rho$ on $\cH$.
\end{definition}

This condition on measurements is based in the Heisenberg picture of
quantum mechanics, in which states stay fixed, but measurements evolve
according to $\cE^\dagger$.  In order for $\cM$ to be unconditionally
preserved, there must be some measurement $\cM'$ that evolves into
$\cM$.  We can also if desired define an equivalent condition on states:

\begin{definition}
A code $\cC$ is \textbf{unconditionally preserved} by a channel $\cE$
if and only if the Helstrom measurement for every weighted pair of
states $p\rho,q\sigma$ in the convex closure of $\cC$ is
unconditionally preserved.
\end{definition}

The second definition is strictly more general: Every unconditionally
preserved measurement $\cM = \{P_1\ldots P_2\}$ can be identified
uniquely with a code
$$\cC =
\left\{\frac{P_1}{\Tr(P_1)}\ldots\frac{P_n}{\Tr(P_n)}\right\},$$ which
is unconditionally preserved if and only if $\cM$ is.  Every classical
code whose support is all of $\cH$ defines a single unconditionally
preserved measurement.  Quantum (or hybrid) codes whose support is all
of $\cH$ define entire algebras of unconditionally preserved
measurements.  Codes restricted to a subspace do not generally
correspond to unconditionally preserved measurements.

The code associated with a given unconditionally preserved measurement
spans the entire Hilbert space.  Therefore, following the proof of
Theorem \ref{thm:PresCorr}, it can be corrected using a transpose map
$\cEP$ -- where $\cP$ is the entire Hilbert space!  Since this
statement holds for every unconditionally preserved measurement, we
can correct \emph{every} unconditionally preserved code using a single
unique recovery, which we denote $\hat\cE$:
\begin{equation}
\hat\cE(\cdot) = \cE^\dagger\left( \cE({\Id})^{-\frac12} \cdot
\cE({\Id})^{-\frac12} \right).
\end{equation}
It follows that every unconditionally preserved measurement consists
of projectors $P_k$ that are fixed points of $\hat\cE\circ\cE$.  There
exists a \emph{unique} unconditionally preserved IPS, which contains
all the unconditionally preserved codes.  Moreover, we can find its
structure quite easily by constructing and diagonalizing
$\hat\cE\circ\cE$.  Other codes are hard to find, precisely because we
need to know their support $\cP$.

Kribs and Spekkens observed that if $\cE$ is unital (that is,
$\cE({\Id}) = {\Id}$), then its unitarily correctable
codes are fixed points of $\cE^\dagger\cE$.  This is an interesting
special case of unconditional preservation.  If $\cC$ is unitarily
correctable, then the channel does not add any entropy to it -- thus,
every pure state in the code remains pure.  But if $\cE$ is unital, it
cannot map two orthogonal subspaces to overlapping subspaces of the
same size, because this would cause a pile-up of probability on the
overlapping portion.  So every unitarily correctable code must be
unconditionally preserved, because no other subspace can be piled on
top of it in the output space.  Finally, for a unital channel,
$\hat\cE = \cE^\dagger$, so $\cE^\dagger$ corrects every
unconditionally preserved code.

\section{Applications}
\label{sec:App}

In this section, we present three applications of the IPS framework
that we have derived.  First, we state a very simple algorithm that
efficiently finds all noiseless and unitarily noiseless codes for a
given map $\cE$.  We then present a similar algorithm to find all the
unconditionally preserved codes.  Finally, we show how to address
so-called ``initialization-free'' DFSs and NSs within our framework.

\subsection{Finding infinite-distance codes}

Our discussion suggests a natural strategy for finding all the
preserved codes of a channel $\cE$: First, find all its preserved IPS;
then build codes from the IPS.  Unfortunately, there is a potential
IPS for each and every subspace $\cP\subseteq\cH$.  So, searching for
IPS seems to require an exhaustive search over all subspaces of $\cH$
(see \cite{ChoiPRL06}).  We can find \emph{some} preserved codes by
picking particular subspaces, but we may not find the largest IPS (or
any of them).  Since the problem is NP-hard, an efficient algorithm
seems unlikely (though it should be noted that we have only proven
that finding the best \emph{classical} code is NP-hard -- other
special cases, for instance the largest quantum code, might
conceivably be easier).

Let us focus instead on \emph{noiseless} codes.  The noiseless IPS of
$\cE$ is unique, because all the maximum noiseless codes are isometric
to $\cE$'s fixed points.  So, to find the unique noiseless IPS, we
need only determine the structure of $\cE$'s fixed points.  Theorem
\ref{thm:FixedPtThm} defines this structure, and suggests an efficient
algorithm to find it:\\

\noindent{\bf Algorithm for finding noiseless IPS:}
\begin{enumerate}
\item Write $\cE$ as a $d^2\times d^2$ matrix, where $d$ is the
dimension of the Hilbert space.

\item Diagonalize the matrix, and extract its eigenvalue-1 right and
left eigenspaces (corresponding to $\FixE$ and $\FixEdag$,
respectively).

\item Compute $\cP_0$, the support of $\FixE$, and project $\FixEdag$
onto $\cP_0$ to obtain a basis for $\cA$.

\item Find the shape of $\cA$.
\end{enumerate}

In the last step, we need to find the canonical decomposition, Eq.
\eqref{decomp} of a finite-dimensional matrix algebra specified as a
linear span. This can be done efficiently using, for example, the
algorithm presented in Ref. \onlinecite{HolbrookQIP04}. This canonical
decomposition step is also present in existing algorithms for finding
NSs \cite{ChoiPRL06,Knill06a}.  Our algorithm improves on previous
algorithms by providing a straightforward method of finding $\cA$ as a
linear span.  Its hardest step is diagonalizing a $d^2\times d^2$
matrix, which runs in time $O(d^6)$. As such, it is more efficient
than algorithms (such as \cite{ZurekPTP93,ChoiPRL06}) that require
exhaustive search over states or subspaces in $\cH$, for these sets
grow exponentially in volume with $d$.

We can generalize this algorithm to find an arbitrary channel's
unitarily noiseless IPS.  Whereas the noiseless IPS consists of
$\cE$'s fixed points -- operators $X$ such that $\cE(X) = X$ -- the
unitarily noiseless IPS consists of \emph{rotating points} --
operators $X$ such that $\cE(X) = e^{i\phi_X}X$.

\begin{definition}
\label{def:RotatingPoint}
Let $\cE:\cB(\cH)\to\cB(\cH)$ be a CPTP map.  An operator
$X\in\cB(\cH)$ is a \textbf{unitary eigenoperator} of $\cE$ if and
only if $\cE(X) = e^{i\phi}X$ for some $\phi\in\bR$.  The
\textbf{rotating points} of $\cE$ comprise all operators in the span
of its unitary eigenoperators.
\end{definition}

Note that a rotating point need not be an eigenoperator -- for
instance, a linear combination of two unitary eigenoperators with
different phases is a rotating point, but not itself an eigenoperator.
As an example, consider the unitary qubit channel $\cE(\rho) =
e^{-i\phi\sigma_z} \rho e^{i\phi\sigma_z}$.  The Pauli operators
$\sigma_x$ and $\sigma_y$ are not eigenoperators, but they are
rotating points.

\begin{numlemma}{\ref{lem:UNRot}}
\LEMUNROT
\end{numlemma}

We adapt the above algorithm by shifting its focus from fixed points
to rotating points.  It is useful to note that the support of the
rotating points is the same as the support $\cP_0$ of the fixed
points.  Therefore, we just need to replace step 2 above by the
following:

\begin{itemize}
\item[$2'$.] Diagonalize the matrix, and extract the right and left
eigenoperators with unit modulus eigenvalues. Let $\FixE$ ($\FixEdag$)
be the linear span of the unit-modulus right (left) eigenoperators.
\end{itemize}

\noindent 
This again runs in time $O(d^6)$ as before.  It is (to our knowledge)
the first efficient algorithm to find unitarily noiseless codes for
arbitrary channels.  We note in passing that both algorithms -- for
finding noiseless and unitarily noiseless IPS -- rely on the codes
having infinite distance, so they are unlikely to be adaptable to
finding other kinds of IPS.

\subsection{Finding the unconditionally preserved IPS}

We know that preserved codes are in general hard to find, but in the
previous section we saw how to take advantage of infinite-distance
codes' structure to find the unique noiseless and unitarily noiseless
IPSs.  Unconditionally preserved IPSs are another special case.  A
channel has a unique unconditionally preserved IPS, and we can find it
efficiently.  The algorithm is extremely simple: Construct
\begin{equation}
\hat\cE(\cdot) = \cE^\dagger\left( \cE({\Id})^{-\frac12} \cdot
\cE({\Id})^{-\frac12} \right),
\label{uncondrec}
\end{equation}
diagonalize $\hat\cE\cE$, and extract its fixed points (the eigenspace
with eigenvalue $+1$).  These will form an algebra, which defines the
IPS we are looking for.  However, one might reasonably inquire why the
unconditionally preserved IPS is interesting and useful.

If we ask ``What information is preserved by a given channel $\cE$?'',
then one possible answer consists of an exhaustive list of all the
channel's preserved IPSs.  This is somewhat unsatisfactory for three
reasons.  First, we do not know how to find such a list (though we
know that it is generally hard).  Second, it might be very very long,
even for channels on small systems.  Third, the preserved codes
corresponding to these IPSs represent information that \emph{could} be
preserved by the channel, depending on what the sender chooses to do,
and conditional upon prior agreement between sender and receiver.

Unconditional preservation provides an alternative answer.  Every
channel has a unique unconditionally preserved IPS, comprising all the
information that is \emph{definitely} preserved by $\cE$.  In the
important case where the ``sender'' is a natural process, this IPS
represents everything that the observer can determine with certainty.
Any further conclusions are valid only conditional upon certain prior
assertions about the ``distant'' system ({\em e.g.}, that its state
lay in some subspace $\cP$).  This interpretation alone is sufficient
reason to consider the unconditionally preserved IPS -- independent of
the happy accident that it is unique and easily calculable.

\subsection{Initialization-free DFS and NS}

As discussed in Section \ref{sub:kinds}, DFSs and NSs are
manifestations of $\cE$'s noiseless IPS.  We can demand further
operational requirements on a DFS or NS.  One particular criterion is
{\em robustness} against initialization errors -- that is, we demand
not only that information encoded \emph{in} the DFS/NS be preserved
indefinitely, but also that if Alice failed to prepare a state within
the DFS/NS, that this can be detected by Bob.  Such
``initialization-free'' (IF) DFS and NS were first studied in Ref.
\onlinecite{ShabaniPRA05}, and have been further characterized in
Ref. \onlinecite{TicozziTAC} in the context of Markovian dynamics.
Since a DFS is just a NS with a trivial noise-full subsystem, we shall
focus on IF-NS\footnote{Ref. \onlinecite{ShabaniPRA05} actually discusses
a more general case, allowing unitary evolution of the NS.  This
is what we call a unitarily noiseless code. To be consistent with the usual
definition of NS, we use the ``strict'' NS condition given in Eq.
\eqref{defNS2}, but everything in this section can easily be generalized
by using a channel's unitarily noiseless IPS instead of its noiseless IPS.}.

If we decompose the system's Hilbert space as
$$\cH=(A\otimes B)\oplus C,$$
\noindent 
and $A$ supports a NS, then we can write an arbitrary density
operator in the following block form:
\begin{equation}
\label{ImpInit}
\rho=\left(\begin{array}{cc}\rho_{AB}&\bar\rho\\
\bar\rho^\dagger&\rho_C\end{array}\right).
\end{equation} 
The NS is said to be perfectly initialized whenever $\bar\rho$ and
$\rho_C$ are zero.  If, in practice, it is not possible to guarantee
preparation within $A\otimes B$, then we need a special kind of NS that
is insensitive to such initialization errors.  The NS is
\emph{initialization-free} if the (possibly subnormalized) state
$\rho_{AB}$ on $A\otimes B$ satisfies the NS condition of
Eq. \eqref{defNS2}, even when $\rho_C$ is not zero.  In other words,
an IF-NS is one that is \emph{immune to interference} coming in from
orthogonal subspaces of $\cH$ (i.e., states that would not have been
prepared if the system had been perfectly initialized).

Our framework, as it turns out, provides a simple and elegant condition
for initialization-free NSs:  \emph{An NS is IF if and only if it is
noiseless \emph{and} unconditionally preserved}.  So, we can find a
channel's IF noiseless structures by intersecting its noiseless IPS
and its unconditionally preserved IPS.  In the remainder of this section,
we will demonstrate this equivalence.

Given a Hilbert space $\cH$ and a channel $\cE$, the channel's noiseless
IPS defines a subspace decomposition $\cH=\cP_0\bigoplus\overline\cP_0$.
Subspace $\cP_0$ is the support of the noiseless IPS.  The noiseless IPS
also defines a canonical decomposition of $\cP_0$ into $k$-sectors 
($A_k\otimes B_k$), so we write the Kraus operators of $\cE$ accordingly, as:
\begin{equation}\label{eq:IFNSKraus}
K_i=\left(\begin{array}{cc}\sum_k{\Id}_{A_k}\otimes\kappa_{i,B_k} &D'_i
\\ 0&C'_i\end{array}\right).
\end{equation}
Each $k$-sector is an invariant subspace.  So each NS ($A_k$) is automatically
resilient to initialization errors that prepare states in the wrong $k$-sector
(but still within $\cP_0$).

However, if faulty initialization puts support on $\overline\cP_0$, then this
error may spill into the noiseless sector.  Specifically, the $D'_i$ blocks in
Eq. \ref{eq:IFNSKraus} map $\overline{\cP_0}$ into $\cP_0$, which can interfere
with information stored in noiseless codes.  Since every NS is immune to
interference from other $k$-sectors within $\cP_0$, let us consider interference
from $\overline{cP_0}$.  

Consider, for the sake of simplicity, a noiseless IPS
containing a single $k-sector$, so $\cP_0 = A\otimes B$
(as in Eq. \eqref{ImpInit}).  Let $P$ be the projector onto $\cP_0$.
The Kraus operators are 
\begin{equation}
K_i=\left(\begin{array}{cc}A_i&D_i\\0&C_i\end{array}\right),
\end{equation}
and if the initial state is $\rho$ as given in Eq. \eqref{ImpInit},
then the final state on $\cP_0 = A\otimes B$ is
\begin{eqnarray}
P\cE(\rho)P&=&\sum_i A_i\rho_{AB}A_i^\dagger+\sum_iD_i\rho_C
D_i^\dagger\nonumber\\ &&+\sum_i(A_i\bar\rho
D_i^\dagger+D_i\bar\rho^\dagger A_i^\dagger).
\end{eqnarray}
For perfect initialization, only the first term is present.  The
remaining terms represent interference from faulty initialization
on $\overline{\cP_0}$.  The NS is IF if and only if they vanish,
which requires
\begin{equation}
\label{IFcond}
\sum_iD_i\rho_C D_i^\dagger=-
\sum_i(A_i\bar\rho D_i^\dagger+D_i\bar\rho^\dagger A_i^\dagger).
\end{equation}
Since $\rho_C$ is positive semi-definite, the left-hand side of Eq. \eqref{IFcond}
is also positive semidefinite.  But the right-hand side of Eq. \eqref{IFcond}
must be traceless, because in order for $\cE$ to be trace-preserving, 
$\sum_iA_i^\dagger D_i=0$, and so
\begin{equation}
\Tr\left[
\sum_i(A_i\bar\rho D_i^\dagger+D_i\bar\rho^\dagger A_i^\dagger)\right] =
2\Re\Tr\left(\sum_i A_i^\dagger D_i\bar\rho^\dagger\right) = 0.
\end{equation}
So the left-hand side is positive semidefinite \emph{and} traceless, which means it
vanishes -- and so Eq. \eqref{IFcond} holds if and only if
$\sum_iD_i\rho_C D_i^\dagger=0$ for all $\rho_C$ -- which implies $D_i=0$ for all $i$.

This means that in order for an NS whose support is $\cP_k = A_k \otimes B_k$ to be 
IF, the channel must not map anything from $\overline \cP_0$ into $\cP_k$.
That is, $\cP_k$ is orthogonal to $\cE(\rho_C)$ for every
$\rho_C\geq0$ on $\overline\cP_0$ (and, by Lemma \ref{lem1:Subsp} in Appendix \ref{sub:preserved},
it is sufficient to consider just one full-rank $\rho_C$ on $\overline{\cP_0}$). 
But this is precisely the condition for the corresponding code to be unconditionally
preserved: Bob must be able to determine whether the system was
correctly initialized, which means that the channel must not map any
part of $\overline{\cP_0}$ back into $\cP_k$.

\section{Conclusions and Outlook}
\label{sec:Conc}

We have presented a framework characterizing the information preserved
by a quantum process, described by an arbitrary CPTP $\cE$ map acting
on a finite-dimensional quantum system.  Information is carried by
codes; codes are preserved if their associated information can be
extracted after passing through the channel; preservation implies
correctability.  Preserved codes are built upon the channel's
information preserving structures (IPSs), which in turn inherit matrix
algebra structure from fixed point sets of CPTP maps.  This allows for
a very elegant and concise description of the full
information-carrying capability of any code.  We also discussed
several operational variations on preservation, with particular
attention to infinite-distance codes, and applied the theory to find
all of a channel's noiseless, unitarily noiseless, and unconditionally
preserved codes.

A number of important open problems and directions for further
investigation remain.  We have not explicitly addressed continuous-time
quantum processes.  Such a process is described by a 1-parameter family
$\{\cE_t: t\geq 0\}$ of CPTP maps.  A special subclass with particular physical
significance is Markovian noise, where $\cE(t) = e^{t\cL}$ for some
Liouville semigroup generator $\cL$ \cite{Alicki}.  In principle, our definitions of
noiseless and unitarily noiseless codes extend to the Markovian
setting, suggesting connections to recent studies of DFSs/NSs
under Markovian noise (see in particular Refs.
\onlinecite{ShabaniPRA05,TicozziTAC,Automatica,Oreshkov2010}), and to
earlier approaches such as ``damping bases'' developed in the
context of quantum optics \cite{DampBases}.  However, we believe it will
be necessary to extend our notion of correctability to address continuous-time
QEC, as developed for instance in \cite{Ahn}.

Our analysis has focused on information preservation under the uncontrolled
(``free'') evolution of an open system.  The ability to control that system's
dynamics \emph{while} it is experiencing noise (rather than correcting the errors
after they occur) raises questions that are interesting for practical quantum
information processing and from a control-theoretic perspective. It would be
valuable to know how to synthesize dynamics that support a given (desired) IPS,
using externally applied control, much as DFSs/NSs can be engineered using 
open-loop unitary manipulations \cite{dygen} or closed-loop feedback protocols
\cite{TicozziTAC,Automatica}.

Our current framework does not address ``post-selective'' preservation of
information, where the information is preserved conditional on a particular
measurement outcome.  Another natural direction for
generalization is to relax the ``zero-error'' requirement, looking at
imperfectly preserved information under CPTP channels or more general
noisy dynamics.  Preliminary investigations \cite{NgBK} indicate that
partial extensions of some of the structures present in the perfect
case carry over to the approximate case, but a variety of interesting
complications arise.  A final question that deserves further investigation
arises when the information-carrying system is not \emph{initially}
fully decoupled from its environment.  This particular kind of initialization
error can produce noise which cannot be described by CP maps, and
its analysis must address the influence of (weak) initial correlation with
the environment on the information [supposedly] stored within the system.

\acknowledgments

The authors acknowledge support in part by the Gordon and Betty Moore
Foundation (DP and HKN); by NSERC and FQRNT (DP); by the NSF under
Grants No. PHY-0803371, No. PHY-0456720; No. PHY-0555417 and
No. PHY-0903727 (LV); and by the Government of
Canada through Industry Canada and the Province of Ontario through
the Ministry of Research \& Innovation (RBK).  We also gratefully
acknowledge extensive conversations with Robert Spekkens, Daniel
Gottesman, Cedric Beny, and Wojciech Zurek.

\begin{appendix}

\section{Our framework for analyzing information}

\subsection{Our notion of information: Relation to Shannon theory}
\label{sec:ShannonTheory}

The most common technical meaning of ``information'' comes from
Shannon's theory of communication
\cite{ShannonBSTJ48,ShannonBook49,CoverBook91}.  Here, Alice and Bob
are connected by a communication channel $\mathcal{E}$ (a dynamical
map between input states and output states), and also have:

\begin{enumerate}
\item A codebook that tells Bob which signals Alice might send;
\item The patience and ability to send signals requiring arbitrarily
many uses of the channel;
\item A willingness to tolerate a very small probability of failure;
\item A guarantee that $\mathcal{E}$ will be applied exactly once.
\end{enumerate}
Although this paradigm is the backbone of both classical and quantum
information theory, it is not unique.  Any or all of the above
resources may be unavailable:

\begin{itemize}
\item Sometimes there is no codebook restricting the possible signals.
In scientific applications, the source of information is generally a
natural phenomenon rather than a canny and cooperative sender.  This
\emph{observational} paradigm restricts the questions whose answers
the receiver can learn.
\item In real-time applications, a signal has to be transmitted within
a strictly limited number ($N$) of channel uses.  This eliminates the
second resource (encoding over arbitrarily many uses), and motivates
\emph{single-shot} capacity: What can we accomplish with a single use
of the channel $\mathcal{E}^{\otimes N}$?
\item Some applications demand perfect reliability.  This eliminates
the third resource (tolerance of arbitrarily small failure
probability), and yields \emph{zero-error information theory}
\cite{ShannonITIT56,KornerIEEE98}.
\item Memory devices, which store information rather than transmitting
it, may violate the guarantee that $\cE$ is applied exactly once.  We
may wish our information to be preserved for an arbitrary number of
clock cycles, or $\mathcal{E}$ may be a snapshot of a continuous
process.  When $\cE$ may be applied many times, we turn to \emph{error
correction}.  \emph{Correctible} information requires active
correction after each iteration of $\mathcal{E}$; \emph{noiseless}
information persists through repeated iterations of $\mathcal{E}$ with
no intervention.
\end{itemize}

In this paper, we are concerned primarily with identifying the kinds
of information that can be preserved, rather than the rate at which
information can be sent or stored.  So, we focus on zero-error
information and the single-shot paradigm.  This does not really affect
the generality of our results: Since they apply to arbitrary channels,
we can discuss $\mathcal{E}^{\otimes N}$ for any $N$.  We do not know
for certain, however, whether tolerating an asymptotically small
amount of error changes the \emph{kinds} of information that can be
preserved by $\mathcal{E}^{\otimes N}$.

The other two resources (a pre-existing codebook, and exact knowledge
of $\mathcal{E}$) are quite important.  They yield different
preservation criteria, with substantially different consequences, and
we consider them separately.

\subsection{On the usefulness and generality of codes}
\label{sec:CodeApology}

Our framework for analyzing preserved information relies on
\emph{codes} to describe different kinds of information.  A code is an
arbitrary set of preparations (states) for a physical system $\cS$,
representing the alternatives available to the sender.  Essentially, a
code describes a very generalized ``subsystem'', in which information
can be encoded.  We settled on this formalism after quite a bit of
thought and exploration, and expect that some readers may seek a more
extensive explanation of why we believe it is useful, general, and
powerful.  The most efficient way to do so might be to anticipate some
potential objections.

\begin{itemize}
\item \emph{Using ``questions'' to define information seems inherently
classical, and inadequate to describe quantum information.}  The idea
of a question, with a definite answer, is indeed inherently classical.
Human beings are unavoidably classical, and as Bohr famously insisted
\cite{WheelerBook83}, our descriptions and perceptions of Nature are
always classical.  As such, we believe that a precise and general
definition of ``information'' must rely on classical concepts.  We can
nonetheless describe quantum information in this framework.  The
difference between a classical bit and a quantum bit is that the bit
admits just one sharp question, ``Is the bit 0 or 1?,'' whereas the
qubit supports an infinite continuum of inequivalent sharp questions,
``Is the qubit in state $\ket\psi$ or state $\ket{\psi_\perp}$,'' for
every orthogonal basis $\{\ket\psi,\ket{\psi_\perp}\}$.  By using
classical questions as a common denominator to define both classical
and quantum information in the same lingua franca, we have a framework
that is open to novel forms of information -- rather than begging the
question of whether they exist.

\item \emph{This definition does not seem to capture entanglement as a
form of information -- i.e., that $\cE$ might preserve entanglement
between $\cS$ and a reference system $\cR$.}  Entanglement is a
peculiarly quantum form of correlation, wherein the state of $\cS$ is
conditional upon observations on the reference system.  Projecting
$\cR$ into a state $\ket\psi$ \emph{steers} \cite{Steering} $\cS$ into
a corresponding $\rho_\psi$.  It is not difficult to show that $\cE$
preserves this entanglement if and only if it also preserves the code
comprising \emph{all} $\rho_\psi$ into which $\cS$ can be steered.
Thus, the code paradigm does address entanglement as a form of
information.

\item \emph{Preserved information should be addressed in the
Heisenberg picture, by considering preserved observables rather than
states.}  In fact, our analysis proceeds along these lines; we demand
that every measurement for distinguishing between code states be
reproducible on Bob's end.  However, the code $\cC$ is a crucial
ingredient in defining a kind of information, because it determines
which measurements need to be reproducible!  Otherwise, it is easy to
identify \emph{all} POVMs that can be reproduced on Bob's end with
``preserved information'' \cite{Beny}, an approach that we believe is
subtly flawed.  A preserved measurement $\cM$ represents perfectly
preserved information \emph{only} if there is some circumstance under
which Alice would measure $\cM$ in order to answer a question.  If
$\cM$ is inherently noisy and error-laden, then for any question Alice
might ask, there is always some $\cM'$ that would yield a better
answer.  The fact that $\cM$ can be reproduced by Bob is irrelevant if
Alice would never choose to make that measurement.

\item \emph{The whole idea of a code is appropriate only in the
communication-theoretic paradigm, not the observational one.  If the
input to the channel is controlled by an oblivious system (e.g., a
distant star) rather than a cooperative sender, then the
receiver/observer cannot rely on preparation within the code.}  This
is correct -- and yet the framework works nonetheless.  If any
information is perfectly preserved by the channel, then there
\emph{must} be at least two input states that remain distinguishable
at the output.  Conversely, if the channel mixes up \emph{every} pair
of input states, then there is absolutely no question that Bob can
answer as well as Alice.

It is true that the semantic meaning of a ``code'' is inappropriate to
the observational paradigm, since an oblivious ``sender'' is unlikely
to cooperate by carefully preparing within a code.  Ultimately, this
is why we focus not on codes, but on the underlying IPS.  The
existence of a preserved code is merely a symptom of the underlying
structure; if a code exists, then there is potentially an entire
equivalence class of codes.  This is especially true in the case of
unconditionally preserved information (the only kind relevant to
observation), where the recovery map $\hat\cE$ [recall
Eq. (\ref{uncondrec})] does not depend on any prior information about
the code ({\em e.g.}, a subspace projector $P$).  An unconditionally
preserved IPS is isometric to a subalgebra that spans the system's
entire Hilbert space (rather than a subspace $\cP$).  Every observable
in this algebra can be observed faithfully by the observer at the
channel's output.  Thus, in this situation, the code framework is
ancillary to the real question -- but it works nonetheless.

\end{itemize}

\section{Proofs}
\label{sec:Proofs}

In this section, we present complete proofs of the technical results
stated in the main text.

\subsection{Preserved information is correctable} 
\label{sub:preserved}

\begin{theorem}\label{thm:PresCorr}
\THMPRESCORR
\end{theorem}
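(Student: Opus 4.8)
The statement has two directions. The plan for the easy direction (correctable $\Rightarrow$ preserved) is to chain contractivity of the $1$-norm with the definition of noiselessness. If $\cC$ is correctable, there is a CPTP map $\cR$ for which $\cC$ is noiseless, hence preserved, under $\cR\circ\cE$; so $\Vert(\cR\circ\cE)(p\rho-(1-p)\sigma)\Vert_1=\Vert p\rho-(1-p)\sigma\Vert_1$ for all $\rho,\sigma\in\cC$ and $p\in[0,1]$. Since every CPTP map contracts the $1$-norm,
\begin{equation}
\Vert p\rho-(1-p)\sigma\Vert_1=\Vert\cR(\cE(p\rho-(1-p)\sigma))\Vert_1\le\Vert\cE(p\rho-(1-p)\sigma)\Vert_1\le\Vert p\rho-(1-p)\sigma\Vert_1,
\end{equation}
forcing equality throughout. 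Thus $\cE$ preserves every weighted distance, and $\cC$ is preserved by $\cE$ in the sense of Definition \ref{def:pres}.

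For the hard direction (preserved $\Rightarrow$ correctable) the plan is to construct the recovery explicitly as the transpose channel and show it does more than required: it \emph{fixes} $\cC$. Let $\cP$ be the support of $\cC$, $P$ its projector, and form $\cEP=\Pi\circ\cE^\dagger\circ\cN$ as in Eq. (\ref{transpose}), where $\cN(\cdot)=\cE(P)^{-1/2}(\cdot)\cE(P)^{-1/2}$ uses the inverse of $\cE(P)$ on its own support. First I would verify that $\cEP$ is a legitimate CPTP map on $\cB(\cP)$ (well-definedness of $\cN$ follows because $\cE(P)$ is invertible on its support, and trace preservation is a short computation). Then, since fixedness implies noiselessness, it suffices to prove $\cEP(\cE(\rho))=\rho$ for every $\rho$ supported on $\cP$; this exhibits $\cC$ as fixed, hence noiseless, for $\cEP\circ\cE$, which is exactly correctability (and, via the footnote to Definition \ref{def:correctable}, complete correctability as well).

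The crux is therefore to turn the purely metric hypothesis --- preservation of all weighted $1$-norm distances between convex combinations of code states --- into an algebraic reversibility statement. The plan is to show that this hypothesis is equivalent to the Knill--Laflamme conditions $PK_i^\dagger K_jP=\lambda_{ij}P$ for a Hermitian matrix $(\lambda_{ij})$, where $\{K_i\}$ are Kraus operators of $\cE$. Because $\cC$ is convex it contains a state of full rank on $\cP$, so preservation extends from the code states to all Hermitian operators supported on $\cP$, and the equality case of the $1$-norm data-processing inequality forces $\cE$ to be injective with orthogonality-preserving support structure on $\cB(\cP)$ --- precisely the Knill--Laflamme relations. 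Granting these, a direct computation (diagonalizing $(\lambda_{ij})$ to pass to a Kraus representation with $PK_i^\dagger K_jP\propto\delta_{ij}P$) shows that $\cEP$ inverts $\cE$ on $\cB(\cP)$, completing the argument.

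I expect the main obstacle to be this last translation step: extracting the Knill--Laflamme conditions from equality in the contraction of the trace distance. The subtlety is that the $1$-norm sees only distinguishability, so I must use the \emph{full} one-parameter family in $p$ together with convex combinations of code states to probe the entire operator structure on $\cP$, rather than just a few pairs of states (cf. Examples \ref{ex:WhyConvex} and \ref{ex:WhyWeights}, which show that weaker hypotheses genuinely fail). Care is also needed to ensure the reduction to the support $\cP$ is lossless --- that the convex closure of $\cC$ has full support on $\cP$, so that the recovered identity holds on all of $\cB(\cP)$ and not merely on $\cC$ itself.
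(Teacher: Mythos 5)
Your easy direction is exactly right and matches the paper. The hard direction, however, hinges on two claims that are false for the codes this theorem covers. First, the proposed equivalence between preservation of weighted $1$-norm distances and the Knill--Laflamme conditions $PK_i^\dagger K_j P=\lambda_{ij}P$: those conditions characterize correctability of the \emph{full subspace code} on $\cP$ (all states supported on $\cP$), whereas a preserved code here is an arbitrary convex code, generally isometric to a distorted algebra $\bigoplus_k \cM_{A_k}\otimes\tau_{B_k}$. The dephasing qubit of Example \ref{ex:Codes} already refutes the equivalence: the classical code $\cC_{\mathrm{cbit}}$ is preserved and its support is all of $\cH_2$, yet Knill--Laflamme on $\cH_2$ would make the dephasing channel correctable on the whole qubit, which it is not. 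Your bridging step --- that a full-rank code state lets you extend the metric hypothesis from weighted differences of code states to \emph{all} Hermitian operators on $\cP$ --- is where this breaks: those differences span only $\mathrm{span}(\cC)$ (e.g., $\{X_A\otimes\tau_B\}$), a proper subspace of $\cB(\cP)$, and the hypothesis says nothing about directions orthogonal to it, on which $\cE$ may be arbitrarily lossy. Second, the target identity $\cEP(\cE(\rho))=\rho$ on $\cB(\cP)$ --- or even just on $\cC$ --- is false in general. Take $\cE(\rho_{AB})=\Tr_B(\rho_{AB})\otimes\frac{\Id_B}{2}$ and the noiseless code $\cC_\tau=\{\rho_A\otimes\tau_B\}$ with $\tau_B$ full rank but not maximally mixed (Example \ref{ex:ManyCodesOneIPS}): then $\cP=\cH$, $\cE(P)=P$, so $\cEP=\cE^\dagger=\cE$, and $\cEP\circ\cE$ sends $\rho_A\otimes\tau_B\mapsto\rho_A\otimes\frac{\Id_B}{2}\neq\rho_A\otimes\tau_B$. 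The transpose channel does not invert $\cE$ on code states; it only renders the code \emph{noiseless}.

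This is precisely why the paper's proof never attempts an inversion. Instead it proves the weaker statement directly: for every weighted difference $\Delta=p\rho-(1-p)\sigma$ of code states, equality in the triangle inequality forces $\cE(\Delta_+)$ and $\cE(\Delta_-)$ to have disjoint supports (Lemma \ref{lem1:Disjoint}), and tracking the supports $\cP_\pm=\supp(\Delta_\pm)$ through the four factors of $\cEP\circ\cE=\Pi\circ\cE^\dagger\circ\cN\circ\cE$ shows they are invariant subspaces (Lemma \ref{lem1:InvPartition}, with a continuity argument for non-full-rank $\Delta$ in Lemma \ref{lem1:InvSubsp}). Trace preservation on invariant, disjoint supports then yields $\Vert\cF(\Delta)\Vert_1=\Vert\Delta\Vert_1$ for every convex combination $\cF$ of powers of $\cEP\circ\cE$ --- exactly noiselessness, which is all that Definition \ref{def:correctable} demands. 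Fixedness, which you tried to obtain in one step, is genuinely harder and is deferred: it follows only from the structure theorem for noiseless codes (Lemma \ref{lem:NSStruct}) together with an extra CPTP ``reset'' map $\cT$ composed after the transpose channel to restore the noise-full factors (Corollary \ref{cor:FPCond}). Your Knill--Laflamme route is sound for the special case where $\cC$ is the full subspace code on $\cP$, but that special case does not prove the theorem.
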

\begin{proof}
The ``only if'' direction is straightforward.  For any
$\rho,\sigma\in\cC$, any $p\in [0,1]$, define the weighted difference
$\Delta=p\rho-(1-p)\sigma$.  If $\cC$ is correctable, then there
exists a CPTP $\cR$ such that, for every such $\Delta$,
$\Vert\Delta\Vert_1=\Vert(\cR\circ\cE)(\Delta)\Vert_1$.  The trace
norm is contractive under CPTP maps \cite{Contractivity}, so
$$\Vert(\cR\circ\cE)(\Delta)\Vert_1\leq\Vert\cE(\Delta)
\Vert_1\leq\Vert\Delta\Vert_1.$$
\noindent 
Combining these two expressions yields
$\Vert\cE(\Delta)\Vert_1=\Vert\Delta\Vert_1$, which means that $\cC$
is preserved by $\cE$.

To prove that preservation implies correctability, we give an explicit
correction operation.  This operation is known as the \emph{transpose
channel} \cite{Barnum00}, defined as
$$\cEP=\Pi\circ\cE^\dagger\circ\cN,$$ 
\noindent 
where $\cP$ is the joint support of all $\rho\in\cC$, $\Pi$ is the
projection onto $\cP$, $P$ is the projector onto $\cP$, $\cE^\dagger$
is the adjoint map of $\cE$, and $\cN$ is a normalization map given
below.  If the operator sum representation of $\cE$ is
\begin{equation*}
\cE(\rho) = \sum_i{E_i\rho E_i^\dagger},
\end{equation*}
then the OSRs for these maps are:
\begin{eqnarray*}
\Pi(\rho) &=& P\rho P, \\ 
\cE^\dagger(\rho) &=& \sum_i{E_i^\dagger \rho E_i}, \\ 
\cN(\rho) &=& \cE(P)^{-\frac12}\rho\cE(P)^{-\frac12}, \\
\cEP(\rho) &=& \sum_i{\left(P E_i^\dagger
\cE(P)^{-\frac12}\right)\rho\left(\cE(P)^{-\frac12} E_i P\right)}.
\end{eqnarray*}
Note that the inverse in $\cE(P)^{-\frac12}$ is taken on the support
of $\cE(P)$.  It is simple to verify that $\cEP$ is a trace-preserving
CP map.

To prove that $\cEP$ corrects the code $\cC$, we need a couple of
technical lemmas.  The first makes rigorous the notion of a channel's
action on a subspace:

\begin{lemma1}
\label{lem1:Subsp}
Let $\cE:\cB(\cH)\to\cB(\cH')$ be a CP map, and $X_0$ be a positive
semidefinite operator on $\cH$.  If $X$ is an operator on the support
of $X_0$, then $\cE(X)$ is an operator on the support of $\cE(X_0)$.
\end{lemma1}
\begin{proof}
Both $X_0$ and $X$ are diagonalizable, so $X_0$ has a smallest
eigenvalue, and $X$ has a largest eigenvalue.  Thus for some
$\epsilon>0$, $X_0>\epsilon X$, which means that $X_0-\epsilon X>0$.
Since $\cE$ is CP, $\cE(X_0-\epsilon X) \geq 0$. Because it is linear,
$\cE(X_0) \geq \epsilon\cE(X)$.  This implies that $X$ is supported on
the support of $X_0$.\end{proof}


Now, recall that discriminating between two code states involves a
binary (Helstrom) measurement that projects onto one of two orthogonal
subspaces.  Our second lemma states that if a channel $\cE$ preserves
a code $\cC$, it also preserves the orthogonality of these subspaces.

\begin{lemma1}
\label{lem1:Disjoint}
Let $\cE$ be a CP map, $\rho$ and $\sigma$ be states in a code $\cC$
that is preserved by $\cE$, and $p\in[0,1]$.  Let us write $\Delta =
p\rho-(1-p)\sigma$ in terms of its positive and negative parts, as
$\Delta = \Dp-\Dm$, where $\Dpm$ are positive operators with disjoint
supports.  Then $\cE(\Dp)$ and $\cE(\Dm)$ have disjoint supports.
\end{lemma1}
\begin{proof}
The triangle inequality for the trace norm, together with the fact
that $\cE$ is TP, gives
\begin{align}
\Vert\cE(\Delta)\Vert_1 &=\Vert\cE(\Delta_+)-\cE(\Delta_-)\Vert_1\notag\\
&\leq\Vert\cE(\Delta_+)\Vert_1+\Vert\cE(\Delta_-)\Vert_1\notag\\
& =\tr(\Delta_+)+\tr(\Delta_-).
\label{contr}
\end{align}
Because $\cC$ is preserved,
$\Vert\cE(\Delta)\Vert_1=\Vert\Delta\Vert_1=\tr(\Dp)+\tr(\Dm)$.  This
implies equality throughout Eq. \eqref{contr}, that is,
$\Vert\cE(\Dp)-\cE(\Dm)\Vert_1=\Vert\cE(\Dp)\Vert_1+\Vert\cE(\Dm)\Vert_1$.
This is possible if and only if $\cE(\Dp)$ and $\cE(\Dm)$ have
disjoint supports.
\end{proof}

Armed with these results, we wish to prove that $\cC$ is noiseless for
$\cEP\circ\cE$.  To do so, we will show that for \emph{every} Helstrom
measurement $\{\cP_+,\cP_-\}$ that distinguishes between two states in
$\cC$, the subspaces $\cP_{\pm}$ are invariant under $\cE$.  First, we
prove this for the special case where the measurement forms a
partition of $\cP$ (that is, $\Delta$ is full-rank).

\begin{lemma1}
\label{lem1:InvPartition}
Define $\cE$ and $\Delta$ as in Lemma \ref{lem1:Disjoint}.  Define
$\cP_\pm\equiv \supp(\Dpm)$ and $P_\pm$ as the projector onto
$\cP_\pm$.  Then, if $\Delta$ is full-rank on $\cP$, then $\cP_+$ and
$\cP_-$ are invariant subspaces under $\RP\circ\cE$.
\end{lemma1}
\begin{proof}
$\RP$ is a composition of three CP maps, so $\RP\circ\cE$ can be
written as a composition of four maps:
$\RP\circ\cE=\Pi\circ\cE^\dagger\circ\cN\circ\cE$.  Let us define the
subspaces $\cQ_\pm\equiv \supp(\cE(\Dpm))$, and $Q_\pm$ as the
projectors onto $\cQ_\pm$.  We will prove the lemma by following the
subspaces $\cP_{\pm}$ through each of the four maps.

By Lemma \ref{lem1:Subsp}, $\cE$ maps every operator on $\cP_+$ to an
operator on $\cQ_+$, and every operator on $\cP_-$ to one on $\cQ_-$.
By Lemma \ref{lem1:Disjoint}, $\cQ_\pm$ are disjoint.  Thus, $\cE$
maps $\cP_\pm$ to disjoint subspaces $\cQ_\pm$.

Now we consider $\cN$.  $P_\pm$ and $\Delta_\pm$ have the same
support, so $\cE(P_\pm)$ is supported on $\cQ_\pm$.  Thus, $\cE(P_+)$
and $\cE(P_-)$ have disjoint supports, and because $P = P_+ + P_-$,
$$\cE(P)^{-1/2}=\cE(P_+)^{-1/2}+\cE(P_-)^{-1/2},$$
and so $\cN$ maps $\cQ_+\to\cQ_+$ and $\cQ_-\to\cQ_-$.

Now we consider $\cE^\dagger$.  Using the cyclic property of the
trace, $\tr(Q_\pm\cE(P_\mp))=0$ implies
$\tr(P_\mp\cE^\dagger(Q_\pm))=0$.  By Lemma \ref{lem1:Subsp},
$\cE^\dagger$ does not map $\cQ_\pm$ into $\cP_\mp$, which means that
$\cE^\dagger$ maps $\cQ_\pm$ to $\cP_\pm$.

Thus, $\cE^\dagger\circ\cN\circ\cE$ maps
$\cP_\pm\to\cQ_\pm\to\cQ_\pm\to\cP_\pm$.  The final projection $\Pi$
has no effect on any operator in $\cP$, so $\RP\circ\cE$ maps
$\cP_\pm\to\cP_\pm$.
\end{proof}

Lemma \ref{lem1:InvPartition} is the core of the proof for Theorem
\ref{thm:PresCorr}.  To complete the proof, we need to extend it to
cases where $\Delta$ is not full rank, and therefore $\{\cP_+,\cP_-\}$
do not form a partition of $\cP$.

\begin{lemma1}
\label{lem1:InvSubsp}
Lemma \ref{lem1:InvPartition} holds even if $\Delta$ is not full-rank
on $\cP$.
\end{lemma1}
\begin{proof}
There exists a full-rank (on $\cP$) state $\rho_0\in\cC$.  This
follows because $\cP$ is the support of $\cC$, and $\cC$ is convex.
For any $\epsilon\in(0\ldots1)$, $(1-\epsilon)\rho + \epsilon\rho_0$
is full rank.  So we consider, in place of $\rho$, a sequence of full
rank states $\{\rho'_n\}$, where $\rho'_n = (1-\epsilon_n)\rho +
\epsilon_n\rho_0$, and $\{\epsilon_n\}$ converges to 0.  Lemma
\ref{lem1:InvPartition}, applied to the sequence of full-rank weighted
differences $\Delta'^{(n)} = p\rho'_n - (1-p)\sigma$, implies that the
corresponding partitions $\{\cP'^{(n)}_+,\cP'^{(n)}_+\}$ are invariant
subspaces.  As $n\to\infty$, $\Delta'^{(n)}_-$ converges to
$\Delta_-$, and $\cP'^{(n)}_-$ converges to $\cP_-$, while
$\cP'^{(n)}_+$ converges to the orthogonal complement of $\cP_-$ in
$\cP$.  Thus $\cP_-$ is invariant under $\cEP\circ\cE$.  The same
argument, but with $\sigma$ replaced by $\sigma'_n =
(1-\epsilon_n)\sigma + \epsilon_n\rho_0$, shows that $\cP_+$ is
invariant under $\cEP\circ\cE$.
\end{proof}

Armed with Lemmas \ref{lem1:InvPartition} and \ref{lem1:InvSubsp}, it
is now easy to prove that $\cC$ is noiseless for $\cEP\circ\cE$.
Consider an arbitrary convex combination of powers of $\cE$,
$$\cF\equiv\sum_np_n(\RP\circ\cE)^n,$$ 
\noindent 
where $\{p_n\}$ is a probability distribution over non-negative
integers.  Let $\Delta$ be a weighted difference of code states.  By
Lemmas \ref{lem1:InvPartition}-\ref{lem1:InvSubsp}, the supports of
$\Delta_+$ and $\Delta_-$ are invariant and disjoint subspaces.  Since
$\cF$ is trace-preserving,
\begin{eqnarray}
\Vert\cF(\Delta)\Vert_1&=&\tr(\cF(\Delta_+))+\tr(\cF(\Delta_-))
\label{eq:TrNorm}\\
&=&\tr(\Delta_+)+\tr(\Delta_-)=\Vert\Delta\Vert_1. \nonumber
\end{eqnarray}
This condition -- satisfied for all $\Delta$ -- is sufficient for
$\cC$ to be noiseless.
\end{proof}

\subsection{The structure of noiseless codes}


\begin{lemma}\label{lem:NSFixed}
\LEMNSFIXED
\end{lemma}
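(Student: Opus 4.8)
The plan is to construct the promised isometry $\cE_\infty$ explicitly as the long-time (Ces\`aro) average of the channel,
\begin{equation*}
\cE_\infty = \lim_{N\to\infty}\frac{1}{N}\sum_{n=0}^{N-1}\cE^n,
\end{equation*}
and then to verify that it carries $\cC$ rigidly onto a set of fixed states. First I would establish that this limit exists and defines a CPTP map. Each partial average $\cF_N \equiv \frac{1}{N}\sum_{n=0}^{N-1}\cE^n$ is a convex combination of CPTP maps, hence itself CPTP, and the set of CPTP maps is compact, so limit points exist and are automatically CPTP. Convergence of the \emph{full} sequence follows from the standard spectral fact that a CPTP map on a finite-dimensional space has spectral radius $1$ with a semisimple peripheral spectrum (no nontrivial Jordan blocks on the unit circle): the eigenvalue $1$ contributes its spectral projector, unimodular eigenvalues $e^{i\theta}$ with $\theta\neq0$ average to zero, and eigenvalues of modulus $<1$ decay. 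Thus $\cE_\infty$ is well-defined and equals the spectral projection onto $\FixE$.

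Next I would check that $\cE_\infty$ actually lands in the fixed-point set. Since $\cE\circ\cF_N - \cF_N = \frac{1}{N}(\cE^N-\Id)$ and $\cE^N$ stays bounded, letting $N\to\infty$ and using continuity of composition with the fixed bounded map $\cE$ gives $\cE\circ\cE_\infty = \cE_\infty$. Hence $\cE(\cE_\infty(\rho)) = \cE_\infty(\rho)$ for every $\rho$, so the image $\cE_\infty(\cC)$ consists entirely of states fixed by $\cE$, as required.

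The heart of the argument is then that $\cE_\infty$ preserves all weighted trace distances on $\cC$. Fix $\rho,\sigma\in\cC$ and $p\in[0,1]$, and set $\Delta = p\rho-(1-p)\sigma$. Each $\cF_N$ is precisely a convex combination $\sum_n q_n\cE^n$ (with $q_n=1/N$ for $n<N$), so the \emph{noiselessness} of $\cC$ (Definition \ref{def:noiseless}) yields $\Vert\cF_N(\Delta)\Vert_1 = \Vert\Delta\Vert_1$ for every $N$. Because $\cF_N(\Delta)\to\cE_\infty(\Delta)$ and the trace norm is continuous, I conclude $\Vert\cE_\infty(\Delta)\Vert_1 = \Vert\Delta\Vert_1$. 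By linearity this is exactly $\Vert p\,\cE_\infty(\rho)-(1-p)\cE_\infty(\sigma)\Vert_1 = \Vert p\rho-(1-p)\sigma\Vert_1$, the isometry condition of Definition \ref{def:isometric}; injectivity of $\cE_\infty$ on $\cC$ follows by taking $p=\tfrac12$ (if $\cE_\infty(\rho)=\cE_\infty(\sigma)$ then $\Vert\rho-\sigma\Vert_1=0$). Hence $\cE_\infty$ is a CPTP $1$-isometry from $\cC$ onto a fixed code, which proves both the lemma and its ``furthermore'' clause simultaneously.

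The step I expect to be the main obstacle is the very first one: rigorously justifying that the Ces\`aro average converges to a genuine (not merely subsequential) limit, which rests on the semisimplicity of the peripheral spectrum of CPTP maps. If one wished to sidestep this spectral input, the weaker observation that $\{\cF_N\}$ has convergent subsequences by compactness already furnishes \emph{an} isometry $\cE_\infty$ satisfying $\cE\circ\cE_\infty=\cE_\infty$ together with the distance-preservation identity above — enough to prove the lemma, though at the cost of the clean closed-form expression for $\cE_\infty$.
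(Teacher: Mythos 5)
Your proof is correct and follows essentially the same route as the paper: both construct the Ces\`aro average $\cE_\infty=\lim_{N\to\infty}\frac{1}{N+1}\sum_{n=0}^{N}\cE^n$, observe $\cE\circ\cE_\infty=\cE_\infty$ so the image lies in $\FixE$, and invoke noiselessness (preservation under convex combinations of powers) to get the $1$-norm isometry. Your only addition is to spell out details the paper asserts without proof — convergence of the average via the semisimple peripheral spectrum, and the passage from preservation under each finite average $\cF_N$ to preservation under the limit via trace-norm continuity — which is a welcome tightening of the same argument rather than a different approach.
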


\begin{proof}
Consider the CPTP map
$$\cE_\infty=\lim_{N\rightarrow\infty}\frac{1}{N+1}\sum_{n=0}^N\cE^n.$$
The limit is well-defined for any map on a finite-dimensional Hilbert
space.  Note that $\cE\circ\cE_\infty = \cE_\infty$, so
$\cE[\cE_\infty(\rho)] =\cE_\infty(\rho)$ for any $\rho\in\cC$.  That
is, $\cE_\infty$ projects onto the fixed points of $\cE$.  Now, if
$\cC$ is noiseless for $\cE$, then it is preserved by any convex
combination of powers of $\cE$, and hence by $\cE_\infty$.  Since
$\cC$ is preserved by $\cE_\infty$, $\cC$ is isometric to
$\cE_\infty(\cC)$ (see Definition \ref{def:pres}).  As noted above,
$\cE_\infty(\cC)$ consists entirely of fixed states, so $\cC$ is
isometric to a set of fixed states.
\end{proof}

\begin{corollary}\label{cor:NSMax}
\CORNSMAX
\end{corollary}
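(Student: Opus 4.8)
The plan is to leverage Lemma~\ref{lem:NSFixed} together with the definition of a maximum code. The key observation is that the set $\FixE_+$ of \emph{all} fixed states of $\cE$ is itself a noiseless code: it is fixed by $\cE$, and every fixed code is noiseless. It is also convex (the $+1$ eigenspace of $\cE$ intersected with the state space is convex), so it qualifies as a legitimate code in our convention. It is moreover the unique largest fixed code, so it is the natural candidate for the target of the isometry.

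First I would invoke the explicit isometry $\cE_\infty=\lim_{N\to\infty}\frac{1}{N+1}\sum_{n=0}^N\cE^n$ constructed in the proof of Lemma~\ref{lem:NSFixed}. For any maximum noiseless code $\cC$, this map sends $\cC$ into a set $\cE_\infty(\cC)$ consisting entirely of fixed states, so $\cE_\infty(\cC)\subseteq\FixE_+$. Moreover $\cE_\infty$ is a genuine $1$-isometry in the sense of Definition~\ref{def:isometric}: it is linear, and because it preserves every weighted $1$-norm distance on $\cC$ it is automatically injective there (if $\cE_\infty(\rho)=\cE_\infty(\sigma)$ then $\Vert\rho-\sigma\Vert_1=0$, whence $\rho=\sigma$). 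Thus $\cC$ is isometric to $\cE_\infty(\cC)$, a subset of the noiseless code $\FixE_+$.

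Next I would close the argument by contradiction. Suppose the inclusion $\cE_\infty(\cC)\subseteq\FixE_+$ were strict. Taking $\cC_{\mathrm{big}}=\FixE_+$ and $\cC_{\mathrm{small}}=\cE_\infty(\cC)$, we would have $\cC$ isometric to the strict subset $\cC_{\mathrm{small}}\subset\cC_{\mathrm{big}}$ of a noiseless code, directly contradicting the assumption that $\cC$ is \emph{maximum}. Hence $\cE_\infty(\cC)=\FixE_+$, and $\cC$ is isometric to the full fixed-point set, as claimed. Uniqueness of the maximum noiseless code (up to isometry) follows immediately, since any two such codes are then isometric to the same set $\FixE_+$, hence to each other.

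The main obstacle I anticipate is not conceptual but one of bookkeeping: one must verify carefully that $\FixE_+$ really is a bona fide noiseless code (convexity together with fixedness), and that the image $\cE_\infty(\cC)$ is convex, so that the isometry of Definition~\ref{def:isometric} genuinely applies across the whole convex closure rather than merely to a generating set of states. Both points reduce to the linearity of $\cE_\infty$ and the fact that the fixed states form a convex set, so neither should present real difficulty; the essential content is entirely carried by Lemma~\ref{lem:NSFixed} and the definition of maximality.
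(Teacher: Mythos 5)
Your proof is correct and follows essentially the same route as the paper's: invoke Lemma~\ref{lem:NSFixed} (via the map $\cE_\infty$) to show $\cC$ is isometric to a subset of the fixed states, observe that the full fixed-state set $\FixE_+$ is itself a noiseless code, and conclude from the definition of a maximum code that the isometric image cannot be a strict subset. Your extra bookkeeping (injectivity of $\cE_\infty$ on $\cC$, convexity of $\FixE_+$ and of the image) just makes explicit details the paper's three-line proof leaves implicit.
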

\begin{proof}
Let $\cC$ be a noiseless code for $\cE$.  By Lemma \ref{lem:NSFixed},
$\cC$ is isometric to a subset of the fixed states.  The fixed states
themselves form a noiseless code $\cC_{\mathrm{max}}$.  If $\cC$ is
isometric to a proper subset of the fixed states, then $\cC$ is
strictly smaller than $\cC_{\mathrm{max}}$, and is therefore not
maximum.
\end{proof}

A similar result for preserved codes follows from the fact that they
can be made noiseless (Theorem \ref{thm:PresCorr}).

\begin{theorem}\label{thm:PresFixed}
\THMPRESFIXED
\end{theorem}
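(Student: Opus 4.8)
The plan is to chain together the three results just established. Let $\cC$ be a maximum preserved code for $\cE$. First I would invoke Theorem~\ref{thm:PresCorr}, which guarantees that $\cC$ is correctable; by Definition~\ref{def:correctable} this means there is a CPTP map $\cR$ (concretely, the transpose channel $\RP$ built from the support of $\cC$) such that $\cC$ is \emph{noiseless} for $\cR\circ\cE$. The goal statement names exactly this map $\cR\circ\cE$, so what remains is to show that $\cC$ is isometric not merely to \emph{some} set of fixed states of $\cR\circ\cE$, but to the \emph{full} fixed-point set. For that I would appeal to Corollary~\ref{cor:NSMax} (itself a consequence of Lemma~\ref{lem:NSFixed}), which identifies every \emph{maximum} noiseless code with the full fixed-point set — so the real work is to upgrade ``noiseless'' to ``maximum noiseless''.

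The bridge, and the one genuinely substantive step, is the observation that any code which is noiseless for $\cR\circ\cE$ is automatically preserved by $\cE$ itself. To see this, let $\Delta=p\rho-(1-p)\sigma$ be a weighted difference of states in such a code. Noiselessness implies preservation by $\cR\circ\cE$ (the $n=1$ term of Definition~\ref{def:noiseless}), so $\Vert(\cR\circ\cE)(\Delta)\Vert_1=\Vert\Delta\Vert_1$. Because the trace norm is contractive under CPTP maps \cite{Contractivity}, $\Vert\Delta\Vert_1=\Vert(\cR\circ\cE)(\Delta)\Vert_1\leq\Vert\cE(\Delta)\Vert_1\leq\Vert\Delta\Vert_1$, forcing $\Vert\cE(\Delta)\Vert_1=\Vert\Delta\Vert_1$, i.e. preservation by $\cE$ in the sense of Definition~\ref{def:pres}. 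This is the same squeezing argument used in the ``only if'' half of Theorem~\ref{thm:PresCorr}.

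With that in hand I would argue maximality by contradiction. Suppose $\cC$ were \emph{not} a maximum noiseless code for $\cR\circ\cE$: then there would exist a code $\cC'$, noiseless for $\cR\circ\cE$, to a strict subset of which $\cC$ is isometric. By the previous paragraph $\cC'$ is preserved by $\cE$, so $\cC'$ would witness that $\cC$ fails to be a maximum \emph{preserved} code for $\cE$ — contradicting the hypothesis. Hence $\cC$ is a maximum noiseless code for $\cR\circ\cE$, and Corollary~\ref{cor:NSMax}, applied to the channel $\cR\circ\cE$, delivers the desired $1$-isometry between $\cC$ and the full fixed-point set of $\cR\circ\cE$. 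I expect the main obstacle to be stating this maximality transfer cleanly: one must verify that the two notions of ``maximum'' in play — maximum among codes preserved by $\cE$, and maximum among codes noiseless for $\cR\circ\cE$ — are compatible, and it is precisely the contractivity observation of the second paragraph that makes every competitor in the noiseless problem a legitimate competitor in the preserved problem.
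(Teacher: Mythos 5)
Your proof is correct and takes essentially the same route as the paper, whose proof of Theorem~\ref{thm:PresFixed} is the one-line combination of Theorem~\ref{thm:PresCorr}, Definition~\ref{def:correctable}, and Lemma~\ref{lem:NSFixed} (with the ``full'' fixed-point set coming from Corollary~\ref{cor:NSMax}). The only difference is that you spell out the maximality-transfer step --- that any competitor noiseless for $\cR\circ\cE$ is, by the contractivity squeeze, also preserved by $\cE$ --- which the paper leaves implicit; this fills a genuine gap in the paper's terse argument rather than departing from it.
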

\begin{proof}
This follows from combining Lemma \ref{lem:NSFixed} with Theorem
\ref{thm:PresCorr} and Definition \ref{def:correctable}.
\end{proof}

These results tell us that maximum preserved codes have the same
structure as fixed-state sets -- but not what that structure is.  The
following theorem fills that gap, defining the structure of an
arbitrary CPTP map's fixed points.  It also characterizes the fixed
points of the adjoint map $\cE^\dagger$ (defined so that if $\cE(\rho)
= \sum_i{E_i\rho E_i^\dagger}$, then $\cE^\dagger(\rho) =
\sum_i{E_i^\dagger \rho E_i}$).  This extra result is useful in
Section \ref{sec:App}, in the algorithm for finding noiseless codes of
$\cE$.

\begin{theorem}\label{thm:FixedPtThm}
\THMFIXEDPTTHM
\end{theorem}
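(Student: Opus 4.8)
The plan is to reduce everything to the restricted map $\EP$, which (unlike $\cE$) carries a \emph{faithful} fixed state, and then run the standard Kadison--Schwarz/multiplicative-domain machinery. First, for (i), I would choose a fixed state $\tau$ that is full rank on $\cP_0$: such a $\tau$ exists because the set of fixed states is convex and $\cP_0$ is by definition the union of the supports of the fixed states, so a suitable average of a spanning family has support exactly $\cP_0$. Since $\cE(\tau)=\tau$, Lemma \ref{lem1:Subsp} applied with $X_0=\tau$ says every operator supported on $\cP_0=\supp\tau$ is carried by $\cE$ into $\supp\cE(\tau)=\cP_0$, which is precisely invariance of $\cP_0$. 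This same invariance shows that $\EP$ agrees with $\cE$ on $\cB(\cP_0)$, is CPTP there, has $\tau$ as a faithful fixed state, and satisfies $\mathrm{Fix}(\EP)=\FixE$.

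For (ii) I would set $\Phi:=\EP$, so $\Phi^\dagger=\EP^\dagger$ is unital CP with $\Phi(\tau)=\tau$, and show $\cA=\mathrm{Fix}(\Phi^\dagger)$ is a $*$-algebra. It is plainly a $\dagger$-closed subspace containing $\Id$, so the only content is closure under products. For $X\in\mathrm{Fix}(\Phi^\dagger)$, Kadison--Schwarz gives $\Phi^\dagger(X^\dagger X)\ge\Phi^\dagger(X)^\dagger\Phi^\dagger(X)=X^\dagger X$; pairing the difference with $\tau$ and using $\tr(\tau\Phi^\dagger(Y))=\tr(\Phi(\tau)Y)=\tr(\tau Y)$ forces $\tr(\tau[\Phi^\dagger(X^\dagger X)-X^\dagger X])=0$, and faithfulness of $\tau$ on $\cP_0$ collapses the inequality to equality. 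Hence $X$ (and, by the same argument applied to $X^\dagger$, also $X^\dagger$) lies in the multiplicative domain of $\Phi^\dagger$, so $\Phi^\dagger(XY)=\Phi^\dagger(X)\Phi^\dagger(Y)=XY$ for every $Y\in\mathrm{Fix}(\Phi^\dagger)$; thus $\cA$ is multiplicatively closed. The matrix-algebra structure theorem \eqref{decomp} then yields $\cA\simeq\bigoplus_k\cM_{A_k}\otimes\Id_{B_k}$ and the decomposition $\cP_0=\bigoplus_k A_k\otimes B_k$.

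For (iii) I would introduce the ergodic projection $\Phi_\infty=\lim_N\frac{1}{N+1}\sum_{n=0}^N\Phi^n$ onto $\mathrm{Fix}(\Phi)=\FixE$, well defined in finite dimension exactly as in the proof of Lemma \ref{lem:NSFixed}. Its adjoint $\Phi_\infty^\dagger$ is a unital CP idempotent whose range is $\cA$, i.e.\ a conditional expectation onto $\cA$; the standard form of a conditional expectation onto $\bigoplus_k\cM_{A_k}\otimes\Id_{B_k}$ is block diagonal, acting in the $k$-th block as $X\mapsto\Tr_{B_k}[(\Id_{A_k}\otimes\omega_k)X]\otimes\Id_{B_k}$ for some state $\omega_k$ on $B_k$. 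Dualizing, $\Phi_\infty=(\Phi_\infty^\dagger)^\dagger$ sends $\rho\mapsto\bigoplus_k\Tr_{B_k}(\rho_k)\otimes\omega_k$, so its range --- which is exactly $\FixE$ --- is $\{\bigoplus_k\sigma_{A_k}\otimes\omega_k\}$. This is precisely a distortion of $\cA$ (with $\tau_k\propto\omega_k$), and $\Phi_\infty$ is itself the CPTP distortion map of Definition \ref{def:Distortion}; in particular $\dim\FixE=\sum_k d_k^2=\dim\cA$.

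For (iv) the workhorse is the compression identity $P_0\,\cE^\dagger(Y)\,P_0=\EP^\dagger(P_0 Y P_0)$ for all $Y$, which I would obtain from the dual of invariance: $\Tr[\overline P_0\,\cE(\rho_0)]=0$ for states $\rho_0$ on $\cP_0$ gives $P_0\cE^\dagger(\overline P_0)P_0=0$, and since $\cE^\dagger(\overline P_0)\ge0$ this forces $\cE^\dagger(\overline P_0)$ to be supported on $\overline{\cP_0}$, after which complete positivity (operator Cauchy--Schwarz) kills the off-diagonal contributions, so the $\cP_0$-compression of $\cE^\dagger(Y)$ sees only $P_0 Y P_0$. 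With this identity, $\pi(X')\equiv P_0 X' P_0$ maps $\FixEdag$ into $\cA$ (since $\EP^\dagger\pi(X')=P_0\cE^\dagger(X')P_0=\pi(X')$), and for any $X\in\cA$ the operator $\cE^\dagger_\infty(X)\in\FixEdag$ is a preimage, because $P_0(\cE^\dagger)^n(X)P_0=(\EP^\dagger)^n X=X$ follows by induction from the identity. Thus $\pi$ is onto $\cA$, and since $\dim\FixEdag=\dim\FixE=\dim\cA$ (adjoint operators share the dimension of the eigenvalue-one eigenspace, and the second equality is (iii)), $\pi$ is a bijection, giving the unique extension. The main obstacle throughout is the Schr\"odinger--Heisenberg duality: manufacturing a faithful fixed state to drive Kadison--Schwarz in (ii), and --- the genuinely delicate step --- pinning down how $\FixE$ and $\FixEdag$ sit relative to $\cA$ in (iii)--(iv), which hinges on reading $\Phi_\infty^\dagger$ as a conditional expectation and on the compression identity.
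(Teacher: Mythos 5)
Your proposal is correct, and it reaches the theorem by a genuinely different, more operator-algebraic route than the paper's Kraus-operator arguments. For (ii), the paper proves (Lemma \ref{lem1:Commute}, adapted from Lindblad) that $\mathrm{Fix}(\EP^\dagger)$ is the \emph{commutant} of the Kraus operators $\{K_i\}$, using the same trace-against-a-faithful-fixed-state trick you use; your Kadison--Schwarz/multiplicative-domain version gets the algebra property without touching Kraus operators at all. The paper, however, reuses its commutant characterization: it yields the triangular Kraus form $K_i=\bigoplus_k(\Id_{A_k}\otimes K_{i,B_k})$ (Lemma \ref{lem1:TriangularKrausForm}), from which (iii) is proved by exhibiting the distortion $\tilde\cA$ with each $\tau_k$ a Schauder fixed point of $\cE_{B_k}$, followed by a dimension count (equality of the $+1$-eigenspace dimensions of $L_{\EP}$ and $L_{\EP}^\dagger$). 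Your alternative --- reading $\Phi_\infty^\dagger$ as a unital CP idempotent onto $\cA$, hence a conditional expectation, and dualizing its canonical block form --- delivers the same conclusion, with the bonus that positivity of the $\tau_k$ and the explicit form of $\cE_\infty$ (which the paper re-derives later as Eq.~\eqref{Einfty} in the proof of Lemma \ref{lem:NSStruct}) come for free; note you do not even need to cite Tomiyama, since your own multiplicative-domain argument applied to $\Phi_\infty^\dagger$ (which also has $\tau$ as a faithful invariant state and fixed-point set $\cA$) gives the bimodule property directly. For (iv), your compression identity is exactly the content of the paper's Lemma \ref{lem1:OneToOne}, and your preimage $\cE_\infty^\dagger(X)$ together with $\dim\FixEdag=\dim\FixE=\dim\cA$ replaces the paper's explicit resolvent extension $({\Id}-L_\cF^\dagger)^{-1}L_\cG^\dagger X_0$, which in turn requires its Lemma \ref{FLem} (that $L_\cF$ has no fixed points); you trade that explicit formula for well-definedness of the full-space Ces\`aro limit, which the paper already assumes in Lemma \ref{lem:NSFixed}.

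One step needs patching: in (i) you assert that $\cP_0$ is ``by definition the union of the supports of the fixed states.'' It is not: the theorem defines $\cP_0$ as the support of the whole fixed-point \emph{space} $\FixE$, which contains non-positive operators, and the identification of that support with the support of a fixed \emph{state} is precisely the content of the first lemma in the paper's proof (that $\FixE$ contains a positive operator full rank on $\cP_0$), established there by a contradiction argument using complete positivity of $\cE_\infty$. The patch is cheap in your own language: $\tau\equiv\cE_\infty({\Id})/d$ is a fixed state; every fixed point lies in the range of $\cE_\infty$; for $A\geq 0$ one has $A\leq\Vert A\Vert\,{\Id}$, hence $\cE_\infty(A)\leq \Vert A\Vert\, d\,\tau$ and $\supp\,\cE_\infty(A)\subseteq\supp\,\tau$; splitting an arbitrary operator into four positive parts then shows every element of $\FixE$ is supported on $\supp\,\tau$, so $\supp\,\tau=\cP_0$ and $\tau$ is the faithful fixed state you need. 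With that repair, the rest of your argument goes through.
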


\begin{proof} 
First, we will prove that $\cP_0$ is an invariant subspace under
$\cE$, using the following lemma.
\begin{lemma1}
$\FixE$ contains a positive, full-rank (on $\cP_0$) operator.
there exists $\rho_0\in\FixE$, such that
$\langle\psi|\rho_0|\psi\rangle>0$ for all pure states
$|\psi\rangle\in\cP_0$.
\end{lemma1}
\begin{proof}
Let $\rho_0\equiv\cE_\infty({\Id})$, where ${\Id}$ is the
identity on $\cH$. Since $\cE_\infty$ is CP and projects onto fixed
points of $\cE$, $\rho_0$ must be a non-negative fixed point of $\cE$,
and hence is in $\FixE$. Let $\cQ\subseteq\cP_0$ be the support of
$\rho_0$. We want to show that $\cQ=\cP_0$. Suppose $\cQ$ is a proper
subspace of $\cP_0$. Then, there exists $|\psi\rangle$ in
$\cP_0\backslash\cQ$ such that $\langle\psi|\rho_0|\psi\rangle=0$, but
there exists $X\in\FixE$ such that $\langle\psi|X|\psi\rangle\neq 0$.
Let $Y$ be one of the four possible Hermitian operators:
$\pm(X+X^\dagger)$, $\pm i(X-X^\dagger)$, chosen so that
$\langle\psi|Y|\psi\rangle<0$ (this must be true for at least one of
the four possibilities). Since $X^\dagger$, $-X$ and $iX$ are all in
$\FixE$ if $X\in\FixE$, $Y$ is also in $\FixE$, so
$\cE_\infty(Y)=Y$. Now consider the operator $\rho={\Id}+\delta
Y$, where $\delta>0$ is chosen small enough so that $\rho$ is
non-negative. Then, $\cE_\infty(\rho)=\rho_0+\delta Y$. However,
$\langle\psi|\rho|\psi\rangle<0$, which contradicts the CP property of
$\cE_\infty$. Therefore, $\cQ=\cP_0$, and $\rho_0$ is the desired
positive, full-rank fixed operator.
\end{proof}

Applying Lemma \ref{lem1:Subsp} to $\rho_0$ implies that $\cP_0$ is an
invariant subspace under $\cE$, which proves part (i) of the theorem.

Now, to prove part (ii), we consider $\EP\equiv
\Pi_0\circ\cE\circ\Pi_0$, the restriction of $\cE$ to $\cP_0$.  Its
Kraus operators are $\{K_i\}=\{P_0E_iP_0\}$, where $P_0$ is the
projector onto $\cP_0$. Since $\cP_0$ is an invariant subspace,
$E_iP_0=P_0E_iP_0$ $\forall i$, which means that $\EP$ is TP, {\em
i.e.} $\sum_i K_i^\dagger K_i=P_0$. Furthermore, since all of $\cE$'s
fixed points are supported on $\cP_0$, $\EP$ has the same fixed points
as $\cE$.

We can now show that $\EP^\dagger$'s fixed points must commute with
its Kraus operators.
\begin{lemma1}
\label{lem1:Commute}
For any $X\in\mathcal{B}(\cP_0)$, $\EP^\dagger(X)=X$ if and only if
$[X,K_i]=0$ for all $i$.
\end{lemma1}
\begin{proof}
If $[X,K_i]=0\:\forall i$, then 
$$\EP^\dagger(X)=\sum_i K_i^\dagger X K_i=\Big(\sum_i K_i^\dagger
K_i\Big)X=P_0X=X.$$
\noindent 
Conversely, suppose $\EP^\dagger (X)=X$.  Consider the quantity
$$
\sum_i[X,K_i]^\dagger[X,K_i] = 
\EP^\dagger(X^\dagger X)-X^\dagger X, $$ 
\noindent 
after some algebra. By construction, this is non-negative.  Now,
observe that
\begin{eqnarray*}
\Tr\{\rho_0[\EP^\dagger(X^\dagger X)-X^\dagger
X]\}&=&\Tr\{\EP(\rho_0)X^\dagger X\}\\&&-\Tr\{\rho_0X^\dagger X\} \\
&=&0,
\end{eqnarray*}
since $\rho_0$ is fixed under $\cE$ (and hence $\EP$).  Because
$\rho_0$ is full-rank and positive, for any positive operator
$Y\in\cB(\cP_0)$, $\Tr(\rho_0Y)=0\Leftrightarrow Y=0$. Therefore,
$\EP^\dagger(X^\dagger X)-X^\dagger X=0$, and
$\sum_i[X,K_i]^\dagger[X,K_i]=0$. Since every term in the sum is
non-negative, we conclude that $[X,K_i]=0~\forall i$.  (Note: This
proof is adapted from a result in \cite{Lin99a}.)
\end{proof}

Lemma \ref{lem1:Commute} tells us that the fixed points of
$\EP^\dagger$ are precisely the commutant in $\cB(\cP_0)$ of the Kraus
operators $\{K_i\}$.  Commutants are closed under addition and
multiplication, and the fixed points of $\EP^\dagger$ are closed under
Hermitian conjugation.  Therefore, the fixed points of $\EP^\dagger$
form a matrix algebra, which completes the proof of part (ii) of the
theorem.

Let us denote this matrix algebra $\cA$.  The structure theorem for
matrix algebras (see Eq. (\ref{decomp}) and Ref. \onlinecite{Dav96a}) states
that, in some basis, we can write $\cA$ as
\begin{equation}
\label{decompA}
\cA\cong\bigoplus_k(\cM_{A_k}\otimes{\Id}_{B_k}),
\end{equation}
which induces a natural Hilbert space decomposition:
\begin{equation}
\label{eq:HDecomp}
\cH=\cP_0\oplus\overline{\cP_0}=\left[\bigoplus_k({A_k}
\otimes{B_k})\right]\oplus\overline{\cP_0}.
\end{equation}

In this basis, we can say something about the Kraus operators of
$\cE$.
\begin{lemma1}
\label{lem1:TriangularKrausForm}
Given a CPTP map $\cE$ on $\cB(\cH)$, let $\cP_0$ be the support of
its fixed points, and $\cA$ the algebra fixed by $\EP^\dagger$ (as in
Theorem \ref{thm:FixedPtThm}).  In the decomposition of $\cH$ induced
by $\cA$ (Eq. \ref{eq:HDecomp}), the Kraus operators of $\cE$ have the
form:
\begin{equation}
\label{eq:TriangularKrausForm}
E_i=\left(\begin{array}{cc}\bigoplus_k({\Id}\otimes K_{i,k})
&D_i\\0&C_i\end{array}\right),
\end{equation}
for some operators $K_{i,k}\in\cB({B_k})$,
$C_i\in\cB(\overline{\cP_0})$ and $D_i\in\cB(\overline{\cP_0},\cP_0)$.
\end{lemma1}
\begin{proof}
The $E_i$ operators can always be written in the $2\times 2$ block
form given above. Since $C_i$ and $D_i$ are arbitrary, we need only
show that the upper left block is of the given form, and that the
lower left block must vanish.  The upper left block of each $E_i$ is a
Kraus operator $K_i$ of $\EP$.  These are the Hermitian conjugates of
the Kraus operators for $\EP^\dagger$, which (by Lemma
\ref{lem1:Commute}) commute with $\cA$.  Therefore, they must be of
the form
$$K_i=\bigoplus_k({\Id}_{A_k}\otimes K_{i,B_k}),$$ 
\noindent 
which is the desired form for the upper left block.  Finally, we
observe that the lower left block maps operators on $\cP_0$ to
operators on $\overline{\cP_0}$.  Since $\cP_0$ is an invariant
subspace, this block must vanish.
\end{proof}

In light of the above, $\EP$ acts trivially on each of the
``noiseless'' factors ${A_k}$ factors, but does something nontrivial
on each of the ``noisy'' factors ${B_k}$ factors.  Furthermore, $\cE$
acts identically to $\EP$ on the $\cP_0$ subspace, but may do anything
at all to its complement (including mapping states on
$\overline{\cP_0}$ onto $\cP_0$).

The next step of the proof is to show that $\FixE$ is a distortion of
$\cA$.  Recall that $\cE$ and $\EP$ have the same fixed points, so we
need only characterize the fixed points of $\EP$.  We will do so by
constructing a vector space of fixed operators, then showing that this
exhausts the fixed points of $\EP$.

\begin{lemma1}
\label{lem1:FPDistortion}
Following the notation in Theorem \ref{thm:FixedPtThm}, let
$$\cA = \bigoplus_k{\cM_{A_k}\otimes{\Id}_{B_k}}$$ be the
algebra fixed by $\EP^\dagger$.  Then there exist positive
semidefinite operators $\tau_k \in \cB(\cH_{B_k})$ such that the
following distortion of $\cA$,
$$\tilde\cA = \bigoplus_k{\cM_{A_k}\otimes\{\tau_{B_k}\}},$$
consists entirely of operators that are fixed by $\cE$.
\end{lemma1}
\begin{proof}
Let $X = \sum_k{X_{A_k}\otimes\tau_{B_k}}$ be an element of
$\tilde\cA$.  By Lemma \ref{lem1:TriangularKrausForm},
\begin{eqnarray*}
\cE(X) &=& \sum_i{K_i X K_i^\dagger} \\ &=&
\sum_k{X_{A_k}\otimes\Big(\sum_i{K_{i,k}\tau_k
K_{i,k}^\dagger}\Big)} \\ 
&=& \sum_k{X_{A_k}\otimes \cE_{B_k}(\tau_k)}
\end{eqnarray*}
where for each $k$, $\cE_{B_k}: \cB(\cH_{B_k})\to\cB(\cH_{B_k})$ is a
CPTP map with Kraus operators $\{K_{i,k}\}$.  Schauder's fixed point
theorem \cite{GranasBook03} states that every CPTP map has at least
one fixed point.  If we let $\tau_{k}$ be a fixed point of
$\cE_{B_k}$, then $\cE(X) = X$.
\end{proof}

Now we need to show that $\tilde\cA$ contains \emph{all} the fixed
points of $\cE$.
\begin{lemma1}
Following the notation in Theorem \ref{thm:FixedPtThm}, let $\cA$ be
defined as in Lemma \ref{lem1:FPDistortion}.  Then every fixed point
of $\cE$ is in $\tilde\cA$.
\end{lemma1}
\begin{proof}
$\tilde\cA$ is closed under linear combination, so it is a vector
subspace of $\cB(\cP_0)$.  Its dimension is easily calculated:
$$\mathrm{dim}(\tilde\cA) =
\mathrm{dim}(\cA)\sum_k{\mathrm{dim}({A_k})^2}.$$
\noindent 
Let us view $\EP$ and $\EP^\dagger$ as matrices ($L$ and $L^\dagger$,
respectively) that act on vectors in $\cB(\cP_0)$. Since each element
of $\cA$ is fixed by $\cE$, and is therefore an eigenvector of $\EP$
with eigenvalue $+1$, $\EP$ has a +1 eigenspace of dimension at least
$\mathrm{dim(\cA)}$.  Furthermore, if $\cE$ had another fixed point
outside of $\cA$, then $\EP$'s +1 eigenspace would be strictly larger
than that.

Let $\{O_i\}$ be an orthonormal basis (in the Hilbert-Schmidt inner
product) for $\cB(\cP_0)$.  $L$ has matrix elements
$L_{ij}=\tr\{O_i^\dagger\EP(O_j)\}$, and $L^\dagger$ is its Hermitian
conjugate.  The eigenvalues of a matrix and its Hermitian conjugate
are complex conjugates of each other.  Thus, the dimensions of the
+1-eigenspaces of $\cL_{\EP}$ and $\cL_{\EP}^\dagger$ are equal, and
$\FixE$ and $\mathrm{Fix}(\EP^\dagger) = \cA$ have the same dimension.
So $\cE$ has no fixed points outside of $\tilde\cA$.
\end{proof}
These two lemmas prove that $\FixE=\tilde\cA$ is a distortion of $\cA$.

Finally, let us consider the fixed points of $\cE^\dagger$.  We begin
by showing that they are in 1:1 correspondence with the fixed points
of $\EP^\dagger$, by showing that $P_0\FixEdag P_0=\cA$.  The first
step is relatively straightforward.
\begin{lemma1}\label{lem1:OneToOne}
Following the notation in Theorem \ref{thm:FixedPtThm}, $P_0\FixEdag
P_0\subseteq\cA$.
\end{lemma1}
\begin{proof}
The Kraus operators of $\cE^\dagger$ are (by
Eq.\ref{eq:TriangularKrausForm} in Lemma
\ref{lem1:TriangularKrausForm})
$$ E_i^\dagger=\left(\begin{array}{cc}\bigoplus_k({\Id}\otimes
K^\dagger_{i,k}) & 0 \\ D^\dagger_i &
C^\dagger_i\end{array}\right). $$ Let $X$ be an element of $\FixEdag$.
By writing $X$ in block-diagonal form with respect to the
decomposition $\cH = \cP_0\oplus\overline{\cP_0}$, and noting that
$\cE^\dagger(X) = \sum_i{E_i^\dagger X E_i}$, it is straightforward to
show that
$$\EP^\dagger(P_0 X P_0) = P_0 \cE^\dagger(X) P_0,$$ and since
$\cE^\dagger(X) = X$, we conclude that $P_0XP_0$ is a fixed point of
$\EP^\dagger$, and therefore is an element of $\cA$.  So $P_0\FixEdag
P_0\subseteq\cA$.
\end{proof}

Now we need to show that $\cA\subseteq P_0\FixEdag P_0$.  This is a
bit more difficult, and requires a technical lemma.  Let us partition
the Hilbert-Schmidt space into subspaces as follows:
\begin{eqnarray*}
\cK &\equiv& \cB(\cH), \\
\cK_0 &\equiv& \cB(\cP_0), \\
\overline{\cK}_0 &\equiv& \cK / \cK_0 .
\end{eqnarray*}
We can write the matrix representing $\cE$ in block form as
\begin{equation}\label{Eblock}
L=\left( \begin{array}{cc} L_{\EP} & L_\cG\\0& L_\cF \end{array} \right).
\end{equation}
Here, $L$ corresponds to the map $\cE$, which acts on vectors in
$\cK$.  $L_{\EP}$ corresponds to the map $\EP$ and maps $\cK_0$ back
into itself.  $L_\cF$ maps $\overline{\cK}_0$ back into itself, while
$L_\cG$ maps $\overline{\cK}_0$ to $\cK_0$.  Because $\cP_0$ is an
invariant subspace, $L$ does not map $\cK_0$ to $\overline{\cK}_0$.
The matrix for $\cE^\dagger$ is the Hermitian conjugate
$L_\cE^\dagger$.

\begin{lemma1}
\label{FLem}
$L_\cF$ has no fixed points.
\end{lemma1}
\begin{proof}
Suppose there exists $X\in\overline{\cK}_0$ such that $L_\cF(X)=X$.
Define $Y=L_\cG(X)$.  Then
$$L_\cE\binom{0}{X}=\binom{Y}{X},$$ and the action of $\cE^n$ on the
operator corresponding to $\binom{0}{X}$ is given by
$$(L_\cE)^n\binom{0}{X}=\left(
\begin{array}{c}\sum_{m=0}^{n-1}L_{\EP}^m(Y)\\ X\end{array}\right).$$
If $Y$ is orthogonal to the subspace $\FixE$, then as $n\to\infty$,
the sum converges to
$$\lim_{n\to\infty}(L_\cE)^n\binom{0}{X}=\left(
\begin{array}{c}({\Id}-\EP)^{-1}(Y)\\ X\end{array}\right).$$
This is a fixed point of $\cE$ not contained in $\FixE$, which
contradicts the definition of $\FixE$.  On the other hand, if $Y$ is
\emph{not} orthgonal to $\FixE$, then the sum diverges as
$n\rightarrow\infty$. This implies that $\cE$ is non-contractive,
which violates complete positivity \cite{Contractivity}.  So, either
way, we have a contradiction.
\end{proof}

Using Lemma \ref{FLem}, we can show that every fixed point of
$\EP^\dagger$ has an extension to a fixed point of $\cE^\dagger$:
\begin{lemma1}\label{lem1:Extension}
Let $X_0\in\cA$ be a fixed point of $\EP^\dagger$.  Then there exists
a fixed point $X\in\cB(\cH)$ of $\cE^\dagger$ such that $P_0XP_0=X_0$.
\end{lemma1}
\begin{proof}
Both $X_0$ and $X$ are vectors in the Hilbert-Schmidt space $\cK =
\cB(\cH)$.  Using the decomposition $\cK = \cK_0 \oplus
\overline{\cK_0}$, we can write $X_0$ in block form:
$$X_0 = \binom{X_0}{0}.$$
In this block form, we choose
$$X = \binom{X_0}{({\Id}_{\overline{\cK}_0}-L_\cF^\dagger)^{-1}L_\cG^\dagger X_{\cK_0}}.$$
Note that $L_\cF$ has no fixed points (by Lemma \ref{FLem}), so
${\Id}_{\overline{\cK}_0}-\cL_\cF^\dagger$ is invertible, which
means that $X$ is well-defined.  Furthermore, $P_0XP_0 = X_0$ by
construction.  To show that $X$ is a fixed point of $\cE^\dagger$, we
simply compute
\begin{eqnarray*}
L^\dagger(X) &=& \left( \begin{array}{cc} L^\dagger_{\EP} & 0 \\
 L^\dagger_\cG & L^\dagger_\cF \end{array} \right)
 \binom{X_0}{({\Id}_{\overline{\cK}_0}-L_\cF^\dagger)^{-1}L_\cG^\dagger X_0} \\ &=&
 \binom{ L^\dagger_{\EP}(X_0)}{\left({\Id} +
 L_{\cF}^\dagger\left({\Id} - L_{\cF}^\dagger\right)^{-1}
 \right)L^\dagger_{\cG}(X_0)} \\ &=& \binom{ X_0}{\left({\Id} -
 L_{\cF}^\dagger + L_{\cF}^\dagger\right)\left({\Id} -
 L_{\cF}^\dagger\right)^{-1}L^\dagger_{\cG}(X_0)} \\ &=&
 \binom{X_0}{({\Id}_{\overline{\cK}_0}-L_\cF^\dagger)^{-1}L_\cG^\dagger X_0} \\ &=& X.
\end{eqnarray*}
\end{proof}

Lemma \ref{lem1:Extension} implies that $\cA\subseteq P_0\FixEdag
P_0$.  Combining this with Lemma \ref{lem1:OneToOne}, we conclude that
$\cA=P_0\FixEdag P_0$, which completes the proof of Theorem
\ref{thm:FixedPtThm}.
\end{proof}

Now, we want to show that $\cE$'s noiseless codes have a rigid
structure dictated by the fixed points.


\begin{lemma}
\label{lem:NSStruct}
\LEMNSSTRUCT
\end{lemma}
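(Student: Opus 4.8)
The plan is to treat the two directions separately, reducing everything via Corollary~\ref{cor:NSMax} to the explicit structure of $\FixE$ supplied by Theorem~\ref{thm:FixedPtThm}. Since $\cE$ has a full-rank fixed point, the support $\cP_0$ of its fixed points is all of $\cH$, so Theorem~\ref{thm:FixedPtThm}(iii) gives $\FixE=\bigoplus_k\cM_{A_k}\otimes\{\tau_k^\cE\}$, and Lemma~\ref{lem1:TriangularKrausForm} (with no $\overline{\cP_0}$ block) says the Kraus operators are sector-block-diagonal, $E_i=\bigoplus_k({\Id}_{A_k}\otimes K_{i,k})$. Each restricted channel $\cE_{B_k}$ (Kraus operators $\{K_{i,k}\}$) then has a one-dimensional fixed-point space $\mathrm{span}(\tau_k^\cE)$, with $\tau_k^\cE$ full-rank on $B_k$, while its adjoint fixes only multiples of ${\Id}_{B_k}$; hence the Ces\`aro average $\cE_\infty=\lim_N\frac{1}{N+1}\sum_{n=0}^N\cE^n$ acts as $\cE_\infty(\rho)=\sum_k\tr_{B_k}(\Pi_k\rho\Pi_k)\otimes\tau_k^\cE$, where $\Pi_k$ projects onto $A_k\otimes B_k$. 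In particular $\cE_\infty$ is the projector onto $\FixE$, and $\FixE_+$ consists exactly of the states $\sum_k p_k\rho_{A_k}\otimes\tau_k^\cE$.

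For the ``if'' direction I would take $\cC$ to be the set of all $\rho=\sum_k p_k\rho_{A_k}\otimes\tau_k$ with the $\tau_k$ fixed (the same for every $\rho\in\cC$) and verify noiselessness directly. Using the block-diagonal Kraus form, any convex combination $\cF=\sum_n q_n\cE^n$ acts as $\tau_k\mapsto\sum_n q_n\cE_{B_k}^n(\tau_k)$ while leaving the $A_k$-factors and the weights $p_k$ untouched, and the image $\sum_n q_n\cE_{B_k}^n(\tau_k)$ is again a state on $B_k$. Because the sectors are mutually orthogonal the trace norm of a weighted difference splits as a sum over $k$, and because the trace norm is multiplicative across the tensor factors with $\Vert(\text{state on }B_k)\Vert_1=1$, every weighted $1$-norm distance between code states is unchanged by $\cF$. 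Thus $\cC$ is noiseless, and the map $\sum_k p_k\rho_{A_k}\otimes\tau_k^\cE\mapsto\sum_k p_k\rho_{A_k}\otimes\tau_k$ is a $1$-isometry onto $\FixE_+$, so $\cC$ is maximum by Corollary~\ref{cor:NSMax}.

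For the ``only if'' direction I would start from a maximum noiseless $\cC$. Noiselessness makes $\cC$ preserved by $\cE_\infty$, hence isometric to $\cE_\infty(\cC)\subseteq\FixE_+$; maximality together with Corollary~\ref{cor:NSMax} forces $\cE_\infty(\cC)=\FixE_+$, so $\cE_\infty$ restricts to a bijective $1$-isometry $\cC\to\FixE_+$ and each $\rho\in\cC$ satisfies $\cE_\infty(\rho)=\sum_k p_k\rho_{A_k}\otimes\tau_k^\cE$. The remaining task is to reconstruct $\rho$ itself and show it equals $\sum_k p_k\rho_{A_k}\otimes\tau_k$ for states $\tau_k$ common to the whole code. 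My strategy is to exploit the equality case of trace-norm contractivity: applying Lemma~\ref{lem1:Disjoint} to $\cE_\infty$ aligns the supports of code differences sector-by-sector with those of their $\FixE_+$ images, and the reversibility underlying Theorem~\ref{thm:PresCorr} (the transpose channel for $\cE_\infty$ on the support of $\cC$) then provides a single recovery $\cR$, depending only on that support, which inverts $\cE_\infty$ on $\cC$ and acts block-diagonally across the sectors.

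The main obstacle is precisely this last reconstruction step: upgrading ``$\cE_\infty|_\cC$ is a $1$-isometry onto $\FixE_+$'' to the rigid product form. Two features of $\rho$ are a priori invisible to $\cE_\infty$ — coherences between distinct sectors, which $\cE_\infty$ annihilates, and the detailed $B_k$-content of each diagonal block, of which $\cE_\infty$ retains only the partial trace — and both must be ruled out. I expect to argue that, because $\cR$ acts block-diagonally and sends $\tau_k^\cE\mapsto\tau_k$ within each sector, it must take $\rho_{A_k}\otimes\tau_k^\cE\mapsto\rho_{A_k}\otimes\tau_k$ for a state $\tau_k$ independent of $\rho_{A_k}$; this simultaneously forbids cross-sector coherence and forces each diagonal block to factorize. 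The fact that $\cC$ is noiseless for every power $\cE^n$, not merely for $\cE_\infty$, is what guarantees that $\tau_k$ is genuinely the same state for all $\rho\in\cC$. This yields $\rho=\sum_k p_k\rho_{A_k}\otimes\tau_k$ and completes the characterization.
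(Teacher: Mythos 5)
Your ``if'' direction and the opening of your ``only if'' direction (the Ces\`aro map $\cE_\infty$, its explicit action $\rho\mapsto\sum_k\tr_{B_k}(\Pi_k\rho\Pi_k)\otimes\tau_k^\cE$ with $\Pi_k$ the sector projector, and the use of maximality plus Corollary \ref{cor:NSMax} to conclude that $\cE_\infty$ restricts to a bijective $1$-isometry from $\cC$ onto $\FixE_+$) match the paper's proof. The genuine gap is the reconstruction step, and your proposed repair fails for two reasons. First, Theorem \ref{thm:PresCorr} does not supply a map inverting $\cE_\infty$ on $\cC$: it shows only that $\cC$ is \emph{noiseless} for $\cR\circ\cE_\infty$ with $\cR$ the transpose channel, not that $\cR\circ\cE_\infty(\rho)=\rho$. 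The upgrade from ``noiseless'' to ``fixed'' is exactly Corollary \ref{cor:FPCond}, which the paper derives \emph{from} Lemma \ref{lem:NSStruct} -- so invoking an inverting recovery at this point is circular. Second, the structural properties you attribute to $\cR$ -- that it acts block-diagonally across sectors and sends $\rho_{A_k}\otimes\tau_k^\cE\mapsto\rho_{A_k}\otimes\tau_k$ with $\tau_k$ independent of $\rho_{A_k}$ and common to the whole code -- are precisely the product structure being proven; nothing in the definition of the transpose channel (whose very construction requires $\supp\,\cC$, unknown at this stage) delivers them. Your closing heuristic, that noiselessness under every power $\cE^n$ is what forces a common $\tau_k$, is also off-target: in the paper this follows from the $1$-isometry under $\cE_\infty$ alone.

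The paper closes the gap with no recovery map at all, using only convexity of $\cC$ and injectivity of $\cE_\infty$ on $\cC$ (immediate from the isometry at $p=\tfrac12$: equal images would give $\Vert\rho-\sigma\Vert_1=0$). Concretely: (i) a fixed state supported in a single sector pulls back to a code state supported in that sector, since $\tr_{B_k}(\Pi_k\rho\Pi_k)=0$ forces $\Pi_k\rho\Pi_k=0$ and positivity then kills the off-diagonal blocks; (ii) convex combinations of such single-sector code states lie in $\cC$ and share their $\cE_\infty$-image with any would-be code state carrying cross-sector coherence, which injectivity excludes, so every code state is block-diagonal; (iii) a fixed state $\proj{\psi}\otimes\tau_k^\cE$ pulls back to a code state whose $A_k$-marginal is the pure state $\proj{\psi}$, hence a product $\proj{\psi}\otimes\mu_{B_k}$; (iv) for mixed $\sigma_{A_k}=\sum_\alpha q_\alpha\proj{\psi_\alpha}$, convexity plus injectivity again rules out $A_k$--$B_k$ correlations, and preservation of the weighted $1$-norm distances among the states $\proj{\psi_\alpha}\otimes\mu_{B_k,\alpha}$ under $\cE_\infty$ forces all the $\mu_{B_k,\alpha}$ to coincide, giving the common gauge state. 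To salvage your route you would have to prove directly that the transpose channel of $\cE_\infty$ relative to $\supp\,\cC$ has the block structure you assert, which, once unpacked, amounts to redoing steps (i)--(iv).
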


\begin{proof} 
If $\cC$ has the given structure, then:
\begin{enumerate}
\item It is maximum, since it is isometric to the full set of fixed
states of $\cE$.
\item It is noiseless, because $\cE$ leaves the states on subsystem
$A_k$ intact, and every $\rho_k$ state has the same noise-full state
$\mu_k$.  So $\cE$ preserves all the weighted 1-norm distances between
code states.
\end{enumerate}

To show the converse, we must show that if $\cC$ is \emph{not} of this
form, then it is not maximum noiseless.  If $\cC$ is not of this form,
then either

\begin{enumerate}
\item It contains only a strict subset of the states given above; or,
\item It contains at least one state with correlations (off-diagonal
elements) between different $k$-sectors; or
\item It contains at least one state with correlations between
${A_k}$ and ${B_k}$; or 
\item It contains states that differ on ${B_k}$.
\end{enumerate}
If $\cC$ is a strict subset, then it is obviously not maximum.

The key to proving the converse is showing that the condition for
noiselessness (Definition \ref{def:noiseless}) forbids correlations
between the $k$-sectors as well as between ${A_k}$ and ${B_k}$.  The
proof relies both on convexity and on the code being maximum.  First,
recall the map $\cE_\infty$ from Lemma \ref{lem:NSFixed}, which
projects onto the fixed point set $\FixE$. Given the structure of
$\FixE$, the CPTP $\cE_\infty$ must act on states on $\cP_0$ as:
\begin{equation}
\label{Einfty}
\cE_\infty(\rho)=\bigoplus_k~\big(\tr_{B_k}\{P_k\rho P_k\}\otimes
\tau_{B_k}\big),
\end{equation}
where $\tau_{B_k}$ is the fixed state on ${B_k}$ from Theorem
\ref{thm:FixedPtThm}, and $P_k$ projects onto the $k$th sector. From
Lemma \ref{lem:NSFixed}, we know that for every fixed state of the
form $\rho_f\equiv\bigoplus_k (\sigma_{A_k}\otimes \tau_{B_k})$, there
exists exactly one code state $\rho\in\cC$ such that
$\cE_\infty(\rho)=\rho_f$. From Eq. \eqref{Einfty}, this demands
$\tr_{B_k}\{P_k\rho P_k\}=\sigma_{A_k}$ for all $k$.

Now, focus on the case with only two $k$-sectors, labeled 1 and
2. Consider two fixed states in these sectors with block-diagonal
form:
\begin{equation*}
\rho_{f1}=\left(\begin{array}{cc}\rho'_{f1}&0\\0&0\end{array}\right),
\quad
\rho_{f2}=\left(\begin{array}{cc}0&0\\0&\rho'_{f2}\end{array}\right).
\end{equation*}
The two code states that are isometric to the fixed points must
respectively be of the form
\begin{equation*}
\rho_1=\left(\begin{array}{cc}\rho'_1&0\\0&0\end{array}\right),\qquad
\rho_2=\left(\begin{array}{cc}0&0\\0&\rho'_2\end{array}\right).
\end{equation*}
By convexity of $\cC$, any convex combination of $\rho_1$ and $\rho_2$
must also be in $\cC$. This excludes from $\cC$ any state with
on-diagonals equal to this convex combination, but non-zero
off-diagonals, since the two different states will have the same image
(and hence indistinguishable) under $\cE_\infty$. Generalizing this to
any number of $k$-sectors, we find that any code state in $\cC$ must
be block-diagonal: $\rho=\bigoplus_k\rho'_k$.

Next, consider the state $\rho'_k$ for the $k$th sector. We need to
show that only product states of ${A_k}\otimes{B_k}$ are allowed. We
first consider a fixed state $\rho'_f$ on this sector of the form
$|\psi\rangle\langle \psi|_{A_k}\otimes \tau_{B_k}$. Since the state
on ${A_k}$ is pure, the corresponding code state whose image under
$\cE_\infty$ is $\rho'_f$ must also be pure on ${A_k}$. It is hence a
product state of the form $|\psi\rangle\langle\psi|_{A_k}\otimes
\mu_{B_k}$. Next, suppose $\rho'_f=\sigma_{A_k}\otimes \tau_{B_k}$,
where $\sigma_{A_k}$ is in general a mixed state writable as
$\sigma_{A_k}=\sum_\alpha
q_\alpha|\psi_\alpha\rangle\langle\psi_\alpha|_{A_k}$. Now, each state
$|\psi_\alpha\rangle\langle\psi_\alpha|_{A_k}\otimes \tau_{B_k,0}$ is
a fixed state, with corresponding code state
$\rho'_{k,\alpha}=|\psi_\alpha\rangle\langle\psi_\alpha|_{A_k}\otimes
\mu_{B_k,\alpha}$. By convexity, the state $\sum_\alpha
q_\alpha\rho'_{k,\alpha}$ is also in $\cC$ and maps to
$\rho_f=\sigma_{A_k}\otimes \tau_{B_k}$ under $\cE_\infty$. This
excludes from $\cC$ any other state with non-zero correlations between
${A_k}$ and ${B_k}$, but with the reduced state on ${A_k}$ equal to
$\sigma_{A_k}$. Furthermore, we must have that
$\mu_{B_k,\alpha}=\mu_{B_k}$ $\forall \alpha$ in order for the
(1-norm) distinguishability between the $\rho'_{k,\alpha}$'s to remain
unchanged under $\cE_\infty$. Therefore, $\rho'_k$ must be of the form
$\sigma_{A_k}\otimes \mu_{B_k}$ for some $\mu_{B_k}$.
\end{proof}

\noindent 
We knew already that noiseless codes are isometric to fixed states
(Lemma \ref{lem:NSFixed}) and that fixed states are isometric to
algebras (Theorem \ref{thm:FixedPtThm}).  Now we know explicitly what
these codes look like.  The isometry is very similar to the one
between the fixed states ($\FixE$) and the underlying algebra $\cA$: A
noiseless code is obtained from $\FixE$ just by changing the state of
the noise-full factors\footnote{Since $\cC$ only contains
\emph{states}, we are really restricting to the positive trace-1
operators in $\cM_k$ within $\FixE$ and $\cA$. This is what we mean by
``$\cC$ is isometric to a matrix algebra.''}.

Finally, it follows from this lemma that not only can we make
preserved codes noiseless, but we can also make them fixed.
\begin{corollary}\label{cor:FPCond}
For every maximum preserved code $\cC$, there exists a CPTP map $\cR$
such that $\cR\circ\cE(\rho)=\rho$ for all states $\rho\in\cC$.
\end{corollary}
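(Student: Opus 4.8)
The plan is to upgrade the recovery map supplied by Theorem~\ref{thm:PresCorr} from one that merely makes $\cC$ noiseless to one that makes it genuinely fixed, by appending a ``gauge-fixing'' distortion that resets the noise-full factors to their code values. First I would invoke Theorem~\ref{thm:PresCorr}: since $\cC$ is a maximum preserved code it is in particular preserved, so its transpose channel $\cEP$ (with $\cP$ the support of $\cC$) makes $\cC$ noiseless for the composed channel $\cG\equiv\cEP\circ\cE$. As noted in the text, $\cG$ is unital on $\cP$ (because $\cEP\circ\cE(P)=P$), hence has a full-rank fixed point on $\cP$, so Lemma~\ref{lem:NSStruct} governs its noiseless codes. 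Using $\cG$'s fixed-point decomposition $\cP=\bigoplus_k(A_k\otimes B_k)$ from Theorem~\ref{thm:FixedPtThm}, I would conclude that every $\rho\in\cC$ has the explicit form $\rho=\sum_k p_k\,\rho_{A_k}\otimes\mu_{B_k}$, where the gauge states $\mu_{B_k}$ are the \emph{same} for every code state, whereas the true fixed states of $\cG$ replace each $\mu_{B_k}$ by the canonical $\tau_{B_k}$.

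Before applying the ``only if'' half of Lemma~\ref{lem:NSStruct}, I must verify that $\cC$ is a \emph{maximum} noiseless code for $\cG$, not merely a noiseless one. The key observation is that any code noiseless for $\cG=\cEP\circ\cE$ is automatically preserved by $\cE$: for $\Delta=p\rho-(1-p)\sigma$, contractivity of the trace norm gives $\Vert\Delta\Vert_1=\Vert\cEP\circ\cE(\Delta)\Vert_1\le\Vert\cE(\Delta)\Vert_1\le\Vert\Delta\Vert_1$, forcing $\Vert\cE(\Delta)\Vert_1=\Vert\Delta\Vert_1$. Hence a noiseless code for $\cG$ that strictly contained $\cC$ would be a preserved code for $\cE$ strictly containing $\cC$, contradicting the maximality of $\cC$. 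This pins down the structure of $\cC$ exactly.

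The final step constructs the extra recovery. Using the block-Kraus form of Lemma~\ref{lem1:TriangularKrausForm}, $\cG$ restricted to $\cP$ acts as $\bigoplus_k(\mathrm{id}_{A_k}\otimes\cE_{B_k})$ with each $\cE_{B_k}$ CPTP, so $\cG(\rho)=\sum_k p_k\,\rho_{A_k}\otimes\cE_{B_k}(\mu_{B_k})$. I would then define the distortion
\begin{equation*}
\cR_1(Y)=\sum_k \Tr_{B_k}\!\big(P_k\,Y\,P_k\big)\otimes\mu_{B_k},
\end{equation*}
which traces out each noise-full factor and reprepares it in the code's gauge state $\mu_{B_k}$, with $P_k$ the projector onto the $k$-sector. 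This $\cR_1$ is manifestly CP, and trace-preserving on $\cB(\cP)$ since $\sum_kP_k=P$. Because each $\cE_{B_k}$ is trace-preserving, $\Tr\,\cE_{B_k}(\mu_{B_k})=1$, so $\cR_1\circ\cG(\rho)=\sum_k p_k\,\rho_{A_k}\otimes\mu_{B_k}=\rho$ for every $\rho\in\cC$. Setting $\cR\equiv\cR_1\circ\cEP$ then yields a CPTP map with $\cR\circ\cE(\rho)=\rho$ on all of $\cC$, which is the claim.

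The main obstacle I anticipate is the bookkeeping guaranteeing that the gauge $\mu_{B_k}$ is common to all of $\cC$ and that $\cG$ genuinely acts sector-wise as $\mathrm{id}_{A_k}\otimes\cE_{B_k}$; both are consequences of Lemma~\ref{lem:NSStruct} and Lemma~\ref{lem1:TriangularKrausForm}, but they rely on first securing maximum-noiselessness via the contractivity argument above. A lighter alternative, which avoids the explicit Kraus form, is to replace $\cR_1\circ\cEP$ by $\cD\circ\cE_\infty\circ\cEP$, where $\cE_\infty$ (from Lemma~\ref{lem:NSFixed}) projects $\cC$ onto the canonical fixed states $\sum_k p_k\,\rho_{A_k}\otimes\tau_{B_k}$ and $\cD$ is the analogous reprepare map sending $\tau_{B_k}\mapsto\mu_{B_k}$; the identity $\cE_\infty\circ\cG=\cE_\infty$ then gives $\cD\circ\cE_\infty\circ\cEP\circ\cE(\rho)=\cD\circ\cE_\infty(\rho)=\rho$.
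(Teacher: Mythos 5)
Your proposal is correct and follows essentially the same route as the paper: invoke Theorem~\ref{thm:PresCorr} to make $\cC$ noiseless for the unital channel $\cEP\circ\cE$, use Lemma~\ref{lem:NSStruct} to write every code state as $\sum_k p_k\,\rho_{A_k}\otimes\mu_{B_k}$, and append a map repreparing each noise-full factor in $\mu_{B_k}$ --- your $\cR_1$ is exactly the paper's $\cT$, so your $\cR=\cR_1\circ\cEP$ coincides with the paper's $\cR=\cT\circ\cR_0$. Your explicit contractivity argument that maximality of $\cC$ as a preserved code for $\cE$ transfers to maximum-noiselessness for $\cEP\circ\cE$ fills in a step the paper leaves implicit, but it is a detail, not a different approach.
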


\begin{proof}
From Theorem \ref{thm:PresCorr}, we know that every preserved code
$\cC$ is correctable, so there exists a recovery map $\cR_0$ such that
$\cC$ is noiseless for $\cR_0\circ\cE$, and $\cR_0\circ\cE$ is unital.
By Lemma \ref{lem:NSStruct}, $\cC$ contains states all of the form
$\rho = \sum_k(\rho_k\otimes \mu_k)$.  Now let $\cR = \cT\circ\cR_0$,
where $\cT$ does nothing to the $A_k$ subsystems, but replaces the
state of each $B_k$ subsystem with $\mu_k$.  (Constructing such a map
is simple, and it is manifestly CPTP).  Now, every $\rho\in\cC$ is a
fixed state of $\cR\circ\cE$.
\end{proof}

\subsection{Finding preserved IPS is hard}


\begin{lemma}\label{lem:FindingIPSisHard}
\LEMFINDINGIPSISHARD
\end{lemma}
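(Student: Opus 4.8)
The plan is to give a polynomial-time reduction from \textbf{MAX-CLIQUE}. Given a graph $G=(V,E)$ with $|V|=d$, I will construct a channel $\cE_G:\cB(\cH_d)\to\cB(\cH_{d^2})$ whose largest preserved IPS has size exactly $\omega(G)$, the clique number of $G$. Since the size of an IPS is measured by the number of perfectly distinguishable states in one of its codes, it suffices to arrange that a set of inputs survives with pairwise perfectly distinguishable outputs \emph{if and only if} the corresponding vertices form a clique. The whole construction will be a classical (measure-and-prepare) channel, so no coherence survives the input dephasing and every preserved IPS is purely classical (every block has $d_k=1$); its size is then literally the maximal number of mutually distinguishable output states.

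Concretely, I identify the input basis $\{\ket{1},\dots,\ket{d}\}$ with the vertices. In the $d^2$-dimensional output I reserve a private vector $\ket{h_i}$ for each vertex $i$ and a shared vector $\ket{g_{ij}}=\ket{g_{ji}}$ for each \emph{non-edge} $\{i,j\}\notin E$; since there are $d+\binom{d}{2}\le d^2$ such vectors, they fit. Letting $\omega_i$ be the maximally mixed state on the span of $\ket{h_i}$ together with $\{\ket{g_{ij}}:\{i,j\}\notin E\}$, I define
\[
\cE_G(\rho)=\sum_{i=1}^d \braopket{i}{\rho}{i}\,\omega_i .
\]
This is manifestly completely positive (with Kraus operators proportional to $\ket{e^i_\alpha}\bra{i}$ over the support vectors of each $\omega_i$) and trace preserving, since each $\omega_i$ is a state. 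By construction $\omega_i$ and $\omega_j$ can share support only through $\ket{g_{ij}}$, so $\mathrm{supp}(\omega_i)\perp\mathrm{supp}(\omega_j)$ if and only if $\{i,j\}\in E$: the output-orthogonality graph of $\cE_G$ is exactly $G$.

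For the lower bound, any clique $W\subseteq V$ yields the code given by the convex closure of $\{\proj{i}:i\in W\}$; its inputs are pairwise orthogonal and its images $\{\omega_i:i\in W\}$ have pairwise disjoint supports, so by Definition \ref{def:pres} the code is preserved and manifests an IPS of size $|W|$. The harder direction, which I expect to be the main obstacle, is the matching upper bound: no preserved code may have more than $\omega(G)$ perfectly distinguishable states. Since $\cE_G$ dephases its input, every output is a mixture $\sum_i p_i\,\omega_i$, and $k$ pairwise-distinguishable outputs must have pairwise-orthogonal supports. The private vectors $\ket{h_i}$ force the vertex-supports of the $k$ underlying input distributions to be \emph{disjoint}, and the cross-orthogonality of their images forces every pair of these vertex groups to be \emph{completely joined} in $G$; picking one representative from each group then yields $k$ pairwise-adjacent vertices, so $k\le\omega(G)$.

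Combining the two bounds shows that the size of the largest preserved IPS of $\cE_G$ equals $\omega(G)$. As $\cE_G$ is computable in time polynomial in $d$, any algorithm that finds (or even sizes) a channel's largest preserved IPS would decide whether $G$ admits a clique of any prescribed size, establishing that the problem is at least as hard as \textbf{MAX-CLIQUE}.
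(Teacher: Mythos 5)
Your proof is correct and takes essentially the same route as the paper's: a classical (measure-and-prepare/dephasing) channel from $\cH_d$ to $\cH_{d^2}$ with one private output per vertex and pairwise-shared outputs encoding the graph, so that preserved codes correspond exactly to cliques and the quantum problem reduces to the classical zero-error one. The only cosmetic difference is that you share output vectors on \emph{non-edges} of $G$, encoding cliques of $G$ directly, whereas the paper shares outputs on edges of a graph $H$ and passes to cliques of the complement via independent sets of the adjacency graph; your explicit upper-bound argument (disjoint, pairwise completely joined vertex groups) simply reproves the standard confusability-graph fact the paper invokes.
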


\begin{proof}
The proof is straightforward, and proceeds in three steps.  First, we
review a known result connecting classical channels with graphs.
Second, we show that finding the largest code for a certain set of
classical channels is equivalent to \textbf{MAX-CLIQUE}.  Third, we
observe that the classical channels can be embedded in quantum
channels.

\begin{enumerate}
\item A classical channel $\cE_c$ maps a set of input symbols
$\{1\ldots N\}$ into mixtures of a set of output symbols $\{1\ldots
M\}$.  For each input symbol $n$, its \emph{image} $\cI(n)$ is the set
of output symbols to which $\cE$ maps it with nonzero probability.  A
set of input symbols $\cC=\{n_1\ldots n_k\}$ is a preserved zero-error
code for $\cE$ if and only if the images of all the $n_j$ are disjoint
-- i.e., it is possible to unambiguously identify which of the input
symbols was sent.  We can define the channel's \emph{adjacency graph}
$G$ (see Example \ref{ex:ClassicalFourState}) as follows: The vertices
are labeled by input symbols $\{1\ldots N\}$, and two vertices
$\{n,m\}$ are connected by an edge if and only if the images $\cI(n)$
and $\cI(m)$ are overlapping.  Now, a code $\cC$ is a subgraph of $G$,
and it is preserved if and only if no two of its vertices are
connected -- {\em i.e.}, if it is an \emph{independent set} of $G$.
The largest code is a \emph{maximum independent set} of $G$.  An
independent set for $G$ is a clique for its dual graph $G'$, and
finding the maximum clique for an arbitrary $G'$ is a well-known
NP-complete problem called \textbf{MAX-CLIQUE}.

\item We haven't yet shown that finding a classical channel's largest
code is NP-complete -- perhaps all channel's adjacency graphs are easy
instances of \textbf{MAX-CLIQUE}?  This turns out not to be the case;
any graph $H$ can be the adjacency graph of a classical channel.  Let
$H$ be a graph with vertices $\{1\ldots d\}$, and let $\cE$ be a
classical channel from $\{1\ldots d\}\to\{1\ldots d^2\}$, defined as
follows:

\begin{enumerate}
\item The $d$ input symbols are denoted $v\in\{1\ldots d\}$, and the
$d^2$ output symbols are denoted by ordered pairs $u\in\{1\ldots
d\}\times\{1\ldots d\}$.

\item For each input symbol $v\in\{1\ldots d\}$, $\cE$ maps $v$ (with
nonzero probability) to each of the $d$ output symbols
$\{(v,x):x=1\ldots d\}$.

\item For each input symbol $v$, $\cE$ maps each input symbol $v$ to
output symbol $(v',v)$ if and only if $H$ contains the edge $(v',v)$.
\end{enumerate}

Note that each output symbol $(a,b)$ can be produced by at most two
input symbols ($a$ and $b$).  So, if two input symbols $v$ and $v'$
are connected in $H$, then $\cE$ maps both of them to the output
symbol $(v',v)$, and so they are connected in the adjacency graph $G$.
But, if they are not connected in $H$, then they are not mapped to the
same output symbol, so they are not connected in $G$.  Ergo, $G=H$,
and any graph can be produced as the adjacency graph of a channel.

\item Finally, we need to show that for each such graph, we can
construct a quantum channel.  This is rather easy.  Let the input
space be $\cH_d$ and the output space be $\cH_{d^2}$.  Let
$\{\ket{1},\ldots,\ket{d}\}$ be a basis for $\cH_d$.  Then the $\cE$
we will consider acts as follows: First, it dephases in the given
basis ({\em i.e.}, measures it); and then it acts as the classical
channel above.
\end{enumerate}
\end{proof}

\subsection{Unitarily noiseless codes}

The analysis of unitarily noiseless codes follows closely that of the
noiseless codes.  The rotating points of $\cE$ replace its fixed
points, with a CPTP map that projects onto their span playing the role
that $\cE_\infty$ does for noiseless codes.

\begin{lemma}\label{lem:UNRot}
\LEMUNROT
\end{lemma}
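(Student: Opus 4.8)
The plan is to mirror the proof of Lemma \ref{lem:NSFixed}, replacing fixed points by rotating points and the Ces\`aro average $\cE_\infty$ by a suitable \emph{peripheral projection}, which I will also call $\Einf$. Let $\cR$ denote the span of the rotating points, i.e.\ the sum of the eigenspaces of $\cE$ (viewed as a linear operator on $\cB(\cH)$) whose eigenvalues lie on the unit circle. The first step is to construct $\Einf$. Because the $1$-norm is contractive under the CPTP maps $\cE^n$, the spectrum of $\cE$ lies in the closed unit disk and every unit-modulus eigenvalue is semisimple (a nontrivial Jordan block would force $\Vert\cE^n\Vert$ to grow). Hence $\cR$ is exactly the peripheral eigenspace, it is $\cE$-invariant, and $\cE|_\cR$ is diagonalizable with unit-modulus eigenvalues. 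The family $\{\cE^n|_\cR\}_{n\geq0}$ therefore lies in a compact (torus) group, so by recurrence there is a subsequence $n_k\to\infty$ with $\cE^{n_k}|_\cR\to\mathrm{id}_\cR$, while on the complementary invariant subspace (eigenvalues of modulus $<1$) every $\cE^{n_k}\to0$. I would define $\Einf\equiv\lim_k\cE^{n_k}$; as a limit of CPTP maps it is CPTP, and by construction it is the projection onto $\cR$ along the decaying subspace, with range exactly $\cR$.

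Second, I would record the \emph{rigidity} of $\cE$ on $\cR$: every power of $\cE$ preserves the $1$-norm there. Since $\{\cE^n|_\cR\}$ has compact closure, a further subsequence $\cE^{m_j}|_\cR$ converges to $(\cE|_\cR)^{-1}$, so for any $\Delta\in\cR$ we get $\Vert\Delta\Vert_1=\lim_j\Vert\cE^{m_j}(\cE(\Delta))\Vert_1\leq\Vert\cE(\Delta)\Vert_1\leq\Vert\Delta\Vert_1$ by contractivity, forcing $\Vert\cE(\Delta)\Vert_1=\Vert\Delta\Vert_1$. As $\cR$ is invariant this extends to all $\cE^n$, so the set $\cC_\cR$ of all positive trace-$1$ operators in $\cR$ satisfies the Weak Condition under every $\cE^n$ --- that is, $\cC_\cR$ is itself a unitarily noiseless code.

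Third, I would run the isometry argument exactly as in Lemma \ref{lem:NSFixed}. If $\cC$ is unitarily noiseless then each $\cE^{n_k}$ preserves every weighted distance $\Vert p\rho-(1-p)\sigma\Vert_1$ for $\rho,\sigma\in\cC$; passing to the limit (the $1$-norm is continuous) shows $\Einf$ preserves these distances, so the linear, distance-preserving map $\Einf$ witnesses that $\cC$ is isometric to $\Einf(\cC)$ in the sense of Definition \ref{def:isometric}. Since $\Einf$ is CPTP with range $\cR$, the set $\Einf(\cC)$ consists of states lying in $\cR$, i.e.\ $\Einf(\cC)\subseteq\cC_\cR$. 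This already establishes the ``in other words'' clause of the statement with the explicit $\Einf$.

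Finally, maximality upgrades ``isometric to a subset'' to ``isometric to all of $\cC_\cR$''. Because $\cC_\cR$ is unitarily noiseless (step two), if $\Einf(\cC)$ were a \emph{strict} subset of $\cC_\cR$ then $\cC$ would be isometric to a strict subset of a larger unitarily noiseless code, contradicting that $\cC$ is maximum; hence $\Einf(\cC)=\cC_\cR$ and $\cC$ is isometric to the full state set of $\cR$. I expect the main obstacle to be the construction and justification of $\Einf$: one must argue carefully that the peripheral eigenvalues are semisimple and then extract the two subsequences (one giving the projection $\Einf$, one giving $(\cE|_\cR)^{-1}$) from the compact-group structure of $\{\cE^n|_\cR\}$. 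The subtlety absent from the noiseless case is that the naive Ces\`aro average would \emph{average away} the phases and destroy the very information $\cC$ stores (cf.\ Example \ref{ex:UNS}), so one genuinely needs a phase-preserving projection onto $\cR$ rather than $\cE_\infty$.
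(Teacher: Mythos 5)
Your proposal is correct and takes essentially the same route as the paper's proof: the paper likewise shows that unit-modulus eigenvalues must sit in $1\times 1$ Jordan blocks (via a positivity-violation argument adapted from Terhal--DiVincenzo, equivalent to your norm-boundedness argument), proves the recurrence $\Vert(\EQ)^{\Ne}-{\Id}_R\Vert<\epsilon$ by the same torus/pigeonhole compactness you invoke, defines $\Einf=\lim_{n\rightarrow\infty}\cE^{n\Ne}$ (with $\epsilon=3^{-n}$) as a CPTP projection onto $\RotE$, and concludes by continuity of the $1$-norm exactly as you do. Your subsequential-limit packaging of $\Einf$, the rigidity observation that the states in $\RotE$ themselves form a unitarily noiseless code, and the explicit maximality upgrade (which the paper leaves implicit, in analogy with Corollary \ref{cor:NSMax}) are minor refinements of the same argument, not a different one.
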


\begin{proof}
By Definition \ref{def:RotatingPoint}, a rotating point $X$ of $\cE$
is a linear combination of operators $X_k$ such that
$\cE(X_k)=e^{i\phi_k}X_k$.  Let $\RotE$ be the complex span of all
rotating points of $\cE$. It is convenient to move to the
Hilbert-Schmidt space, where $\RotE$ can be viewed as a subspace
spanned by the vectors corresponding to the rotating points. Clearly,
$\RotE$ is an invariant subspace under the linear map $\cE$, in the
sense that any vector in $\RotE$ gets mapped under $\cE$ to another
vector in $\RotE$. Let $\EQ$ denote $\cE$ restricted to $\RotE$. We
view $\cE$ and $\EQ$ as matrices acting on vectors in the
Hilbert-Schmidt space.

Even though $\cE$ may not be a diagonalizable matrix, we can still
write it in the Jordan normal form \cite{LinAlg}: There exists an
invertible matrix $S$ such that $\cE=SJS^{-1}$, where $J$ is the
matrix $J=\text{diag}[J_1,J_2,\ldots, J_K]$. Each $J_k$ is called a
\emph{Jordan block}, and it is zero except on the diagonal and
first-off-diagonal:
\begin{equation}\label{eq:Jk}
J_k=\left(
\begin{array}{cccc}
\lambda_k& 1& &\\
&\ddots&\ddots&\\
&&\lambda_k&1\\
&&&\lambda_k
\end{array}
\right).
\end{equation}
The Jordan form for $\cE$ is unique up to permutation of the Jordan
blocks. Note that any vector $|v\rangle$ is an eigenvector of $J$ if
and only if $S|v\rangle$ is an eigenvector of $\cE$.

\begin{lemma1}
\label{lem:UNRotJordan}
For any $k$, the support of $J_k$ contains exactly one unit
eigenvector of $\cE$. The corresponding eigenvalue is $\lambda_k$.
\end{lemma1}
\noindent {\it Proof.} Let $\{|v_\alpha^{(k)}\rangle\}_{\alpha=1}^m$
be the ordered basis for the support of $J_k$ in which $J_k$ takes the
form Eq. \eqref{eq:Jk}. Clearly,
$J_k|v_1^{(k)}\rangle=\lambda_k|v_1^{(k)}\rangle$, so
$S|v_1^{(k)}\rangle$ is an eigenvector of $\cE$ with eigenvalue
$\lambda_k$. To show that this is the only eigenvector in this Jordan
block, let $|v\rangle\equiv \sum_\alpha
\mu_\alpha|v_\alpha^{(k)}\rangle$ be a vector in the support of
$J_k$. From the form of $J_k$ in Eq. \eqref{eq:Jk}, it is easy to see
that the coefficients $\{\mu_\alpha\}$ satisfy the equation
$J_k|v\rangle=a|v\rangle$ for some constant $a$ only if
$\mu_{\alpha+1}=(a-\lambda_k)\mu_\alpha$ for $\alpha=1,\ldots, m-1$,
and $(a-\lambda_k)\mu_m=0$. The only non-trivial solution is
$a=\lambda_k$ and $\mu_1\neq 0,
\mu_{\alpha>1}=0$. \hfill$\blacksquare$

This lemma tells us that the rotating points of $\cE$ are
mutually orthogonal, unless there are degenerate eigenspaces of
rotating points. In that case, we can still pick an orthonormal basis
for each degenerate eigenspace (already done in the Jordan normal
form), and these bases, together with the non-degenerate rotating
points, form an orthonormal basis of rotating points for $\RotE$. We
denote this basis as $\{X_l\}$. $\EQ$ is diagonal in this basis, with
entries $e^{i\phi_l}(=\lambda_l)$. Note that, for any CPTP map $\cE$,
the following lemma from \cite{TerhalDiVincenzo} holds:
\begin{lemma1}\label{lem:UNRotEig}
Any eigenvalue $\lambda$ of $\cE$ must satisfy $|\lambda|\leq 1$.
\end{lemma1}
\noindent This, together with Lemma \ref{lem:UNRotJordan}, implies
that $|\lambda_k|\leq 1~\forall k$.

Next, consider powers of $\cE$. $\cE^n$ can be written using the
Jordan normal form as $SJ^nS^{-1}$ where $J^n=\text{diag}[J_1^n,
J_2^n,\ldots, J_K^n]$ with each $J_k^n$ being an upper-triangular
matrix:
\begin{equation}
\label{eq:Jkn}
J_k^n=\left(
\begin{array}{cccc}
\lambda_k^n&
\binom{n}{1}\lambda_k^{n-1}&\binom{n}{2}\lambda_k^{n-2}&\ldots\\
0&\lambda_k^n&\binom{n}{1}\lambda_k^{n-1}&\ldots\\
0&0&\lambda_k^n&\ldots\\ &&&\ddots
\end{array}
\right)
\end{equation}
Using the form of $J_k^n$ in Eq. \eqref{eq:Jkn}, we can show the
following fact about the rotating points of $\cE$:
\begin{lemma1}\label{lem:UNRot1D}
Any (non-degenerate) rotating point of $\cE$ must occur in a
1-dimensional Jordan block.
\end{lemma1}

\noindent {\it Proof.} (This proof follows ideas from
\cite{TerhalDiVincenzo} for the proof of Lemma \ref{lem:UNRotEig}.)
Suppose there exists a rotating point $X$ such that it belongs to some
$m\times m$ Jordan block $J_k$ with $m>1$.  Let
$\{X_\alpha^{(k)}\}_{\alpha=1}^m$ be an operator basis for the
operators in the support (as vectors) of $J_k$, with $X_1^{(k)}\equiv
X$. Consider the completely mixed state $\rho_{\Id}\equiv{\Id}/d$
($d$ is the dimension of the Hilbert
space). Let $\sigma$ be some operator in the span of
$\{X_\alpha^{(k)}\}_{\alpha=2}^m$ and consider the operator
$\rho\equiv \rho_{\Id}+\eta\sigma$ where $\eta$ is a positive
number chosen small enough so that $\rho$ is positive. Applying
$\cE^n$ to $\rho$ gives $\cE^n(\rho)=\cE^n(\rho_{\Id})+\eta\cE^n(\sigma)$.
Since $\cE$ is TP, $\cE^n(\rho_{\Id})$
remains finite. However, since $X$ is a rotating point, we know that
$|\lambda_k|=1$, and the entries of $J_k^n$ grows in amplitude as $n$
increases, and hence the entries of $\cE^n(\sigma)$ (viewed as a
vector) grow in amplitude. For large enough $n$ ($\eta$ fixed), there
will be a choice of $\sigma$ such that $\cE^n(\rho)$ is no longer
positive semidefinite. But this violates the assumption that $\cE$ is
a CPTP map. Hence, we must have that $m=1$. \hfill$\blacksquare$

\noindent Lemma \ref{lem:UNRot1D} tells us that any Jordan block $J_k$
with $m>1$ must have $|\lambda_k|<1$.

Now, let $\{Y_\beta\}$ be an operator basis for operators outside of
$\RotE$. $Y_\beta$'s are the operators occurring in Jordan blocks with
$|\lambda_k|<1$, and hence $\lim_{n\rightarrow\infty}\cE^n(Y_\beta)=0$
since Eq. \eqref{eq:Jkn} tells us that
$\lim_{n\rightarrow\infty}J_k^n=0$ if $|\lambda_k|<1$. We can use
$\{X_l\}\bigcup\{Y_\beta\}$ as an operator basis for $\cB(\cH)$, and
write any operator $A\in\cB(\cH)$ as $A=\sum_l a_lX_l+\sum_\beta
b_\beta Y_\beta$. Then,
\begin{align}
\label{eq:EinfA}
\lim_{n\rightarrow\infty}\cE^n(A)&=\lim_{n\rightarrow
\infty}\Big(\sum_la_l(\EQ)^n(X_l)+\sum_\beta b_\beta \cE^n
(Y_\beta)\Big)\notag\\ &=\sum_la_l\lim_{n\rightarrow
\infty}(\EQ)^n(X_l),
\end{align}
assuming the limit $\lim_{n\rightarrow \infty}(\EQ)^n(X_l)$ exists for
all $l$.

To work out what $\lim_{n\rightarrow\infty}(\EQ)^n(X_l)$ is, we need
the following lemma:

\begin{lemma1}
\label{lem2}
For every $\epsilon>0$, there exists some $\Ne\in\mathbb{N}$ such that
$\Vert (\EQ)^{\Ne}-{\Id}_R\Vert<\epsilon$, where ${\Id}_R$
is the identity operator on $\RotE$.
\end{lemma1}

\noindent
\textit{Proof.}  Recall that $\EQ$ is a diagonal matrix, with entries
$e^{i\phi_l}$, $l=1,\ldots, M$, where $M =
\mathrm{dim}(\RotE)$. Therefore, $(\EQ)^n$ is also diagonal, with
entries $e^{in\phi_l}$, and in particular $(\EQ)^0 = {\Id}$.
The set of \emph{all} such matrices forms an $n$-torus with a finite
volume $(2\pi)^M$.  Each $(\EQ)^n$ is surrounded by an
$\epsilon$-neighborhood $\cN_n$, containing all matrices $X$ on the
torus such that $\Vert (\EQ)^n-X\Vert<\epsilon$.  Each such
neighborhood has volume at least $\epsilon^M$, and so if we consider
the neighborhoods of $(\EQ)^n$ for $n=0\ldots (2\pi/\epsilon)^M$, then
at least one pair must overlap.  Denote the pair with overlapping
neighborhoods

If $\phi_l$'s are all rational multiples of $2\pi$, {\em i.e.},
$\phi_l=\frac{2\pi p_l}{q_l}$, $p_l,q_l\in\mathbb{N}$, then choosing
$\Ne$ to be the lowest common multiple of all $q_l$ works.

Otherwise, a more complicated analysis is required. To have $\Vert
(\EQ)^{\Ne}-{\Id}_R\Vert=\max_l\vert
\exp(i\Ne\phi_l)-1\vert=2\max_l\vert \sin(\Ne\phi_l/2)\vert<\epsilon$,
it suffices to demand $\Ne\phi_l (\text{mod } 2\pi)<\epsilon$ for all
$l$.  Consider the point $(n\phi_1(\textrm{mod }
2\pi),\ldots,n\phi_M(\textrm{mod } 2\pi))$, where we always take the
smallest non-negative value of $n\phi_l(\textrm{mod }2\pi)$. As $n$
increases from 0, this point traces out a trajectory on the surface of
an $M$-dimensional torus.  If there is at least one $\phi_l$ that is a
rational multiple of $2\pi$, this trajectory will eventually close
upon itself, and the path length of the trajectory is finite. If there
is no such $\phi_l$, the trajectory will cover the surface of the
torus, which has finite area (since it is
finite-dimensional). Consider hyperspheres of (Euclidean) diameter
$\epsilon$ centered at $(n\phi_1(\textrm{mod }
2\pi),\ldots,n\phi_M(\textrm{mod } 2\pi))$ for each
$n\in\mathbb{N}$. Because the trajectory either has finite length or
traverses a space of finite area, some of these hyperspheres will
eventually overlap, that is, there exists finite $r$ and $s>r$ such
that the hyperspheres centered at points with $n=r$ and $n=s$
overlap. The distance between the centers of the overlapping
hyperspheres is $\sqrt{\sum_l [(s-r)\phi_l(\textrm{mod
}2\pi)]^2}<\epsilon$, which implies that $(s-r)\phi_l(\textrm{mod
}2\pi)<\epsilon$ for all $l$. Therefore, we can choose
$\Ne=s-r$. \hfill$\blacksquare$

We can view the limit $\lim_{n\rightarrow\infty}(\EQ)^n$ equivalently
as the limit $\lim_{n\rightarrow\infty}(\EQ)^{\Ne n}$. Intuitively,
provided we choose $\epsilon$ to decrease fast enough, this should
converge to ${\Id}_R$. More precisely, we can write
$(\EQ)^\Ne={\Id}_R+\Ge$, where $\Ge$ is some map (need not be
CP) on $\RotE$ such that $\Vert\Ge\Vert <\epsilon$. Now consider the
map $(\EQ)^{\Ne n}=( {\Id}_R+\Ge)^n=\sum_{m=0}^n
\binom{n}{m}\Ge^m$, for $n\in\mathbb{N}$, which gives
\begin{equation}
\label{EQn}
\Vert(\EQ)^{\Ne n}-{\Id}_R\Vert\leq
\sum_{m=1}^n\binom{n}{m}\Vert\Ge^m\Vert\leq\epsilon(2^n-1).
\end{equation}
\noindent Let us choose $\epsilon=3^{-n}$ (actually,
$\epsilon=C_0^{-n}$ for any choice of $C_0>2$ works). Then taking the
limit $n\rightarrow\infty$ of Eq. \eqref{EQn}, we conclude that
$\lim_{n\rightarrow\infty}(\EQ)^{\Ne n}={\Id}_R$.

From this, we see that Eq. \eqref{eq:EinfA} can be rewritten as
\begin{equation}
\lim_{n\rightarrow\infty}\cE^n(A)=\sum_la_lX_l\quad\in\RotE.
\end{equation}
\noindent 
Therefore, $\Einf\equiv\lim_{n\rightarrow\infty}\cE^{n\Ne}$ (with
$\epsilon$ depending on $n$ as above) is the projection onto
$\RotE$. Since a unitarily noiseless code is preserved under any power
of $\cE$, it must be preserved under $\Einf$, which gives the desired
isometry condition.
\end{proof}

Note that $\Einf$ is CPTP simply because $\cE$ is CPTP, and the set of
CPTP maps on a finite-dimensional Hilbert space is closed under
composition.  Furthermore, it projects every operator onto the span of
the rotating points of $\cE$. Observe that $\RotE$ is precisely the
set of fixed points of $\Einf$.

\end{appendix}

\bibliographystyle{apsrev}

\end{document}